%% file: Marginal_spectral_distributions.tex
\documentclass[11pt]{article}
\input{qudit_preamble.tex}

\begin{document}

%====================================================================================================%
\title{\Large \bf Marginal spectral distributions on regular bipartite unitary orbits}
%====================================================================================================%

\author{\blue{Lin Zhang}\footnote{E-mail: godyalin@163.com}\\
  {\it\small School of Mathematical Sciences, Hangzhou Dianzi University, Hangzhou 310018, PR~China}}
\date{}
\maketitle

\begin{abstract}
Fix the spectrum of a bipartite density matrix and randomize its
eigenbasis according to Haar measure. We study the probability
distributions induced on the spectra of the two marginal states. For
arbitrary subsystem dimensions $m$ and $n$, the joint characteristic
function of the reduced density matrices is expressed as a
Harish-Chandra-Itzykson-Zuber integral whose external eigenvalues
are the pairwise sums $x_i+y_j$. Repeated external eigenvalues are
handled by confluent determinant limits. In the two-qubit case, we
derive an explicit alternating-spline formula for the joint density
of the two marginal Bloch radii. Its support is the Bravyi-Klyachko
compatibility region. We also obtain a compact truncated-power
formula for the Bloch-radius density of either individual qubit
marginal. In the qubit-qutrit case, we derive a truncated-power
formula for the qubit Bloch-radius density and a bivariate spline
formula for the joint density of the largest and smallest
eigenvalues of the qutrit marginal. The latter two variables
determine the full qutrit spectrum because the trace is fixed. The
derivations combine confluent HCIZ integrals, distributional Fourier
inversion, orbital measures, and the $\SU(2)$ and $\SU(3)$
derivative principles. The resulting densities are piecewise
polynomial on chambers determined by subset sums of the fixed global
eigenvalues, in agreement with the Duistermaat-Heckman description
of projected coadjoint-orbit measures.
\\~\\
\textbf{Keywords:} Quantum marginal problem; Unitary orbit; Marginal
eigenvalue distribution; HCIZ integral; Duistermaat-Heckman measure;
Derivative principle; Multivariate spline; Random quantum state.
\end{abstract}

\newpage\tableofcontents\newpage

%===================================================%
\section{Introduction}
%===================================================%

The quantum marginal problem is a central compatibility problem in
quantum information theory. Given density matrices assigned to
several subsystems, one asks whether they can arise as reduced
states of a common global quantum state
\cite{Davidson1976,Walter2014}. In the bipartite setting, let
$$
\cH_A=\bbC^m,\quad\cH_B=\bbC^n,\quad N=mn,
$$
and let $\rho_{AB}$ be a density matrix on $\cH_A\ot\cH_B$. Its
marginal states are $\rho_A=\ptr{B}{\rho_{AB}}$ and
$\rho_B=\ptr{A}{\rho_{AB}}$. When the spectrum of $\rho_{AB}$ is
fixed, the deterministic quantum marginal problem asks which pairs
of local spectra $(\Spec(\rho_A),\Spec(\rho_B))$ are compatible with
that global spectrum. General solutions can be formulated in terms
of representation-theoretic inequalities, moment polytopes and
symplectic reduction. Klyachko's work \cite{Klyachko2004} provides a
general representation-theoretic framework, while explicit
low-dimensional descriptions include Bravyi's inequalities for
two-qubit states \cite{Bravyi2004}. From the symplectic viewpoint
\cite{Silva2008}, the compatible local spectra form the Kirwan
polytope \cite{Kirwan1984} associated with the action of the local
unitary group on a global coadjoint orbit.

The deterministic compatibility region does not, however, reveal how
marginal spectra are distributed inside that region. A natural
probabilistic refinement is therefore to fix the global eigenvalues
and randomize only the global eigenbasis. This model separates the
effect of the global spectrum from that of Haar-distributed
eigenvectors.

Let
$$
\bdlambda=(\lambda_1,\ldots,\lambda_N),\quad\lambda_1>\lambda_2>\cdots>\lambda_N\geqslant0,\quad
\sum^N_{j=1}\lambda_j=1,
$$
and define
$\Lambda=\diag(\lambda_1,\ldots,\lambda_N)=\diag(\bdlambda)$. The
associated regular unitary orbit is
$$
\cU_\Lambda:=\Set{\bsU\Lambda\bsU^\dagger: \bsU\in\sfU(N)}.
$$
We equip $\cU_\Lambda$ with the orbital probability measure
\cite{Olshanski2013} obtained by pushing normalized Haar measure on
$\sfU(N)$ forward under $\bsU\mapsto \bsU\Lambda\bsU^\dagger$. Thus,
$\rho_{AB}=\bsU\Lambda\bsU^\dagger$, where $\bsU\sim
(\sfU(N),\mu_\haar)$ is a random global state with fixed spectrum
$\bdlambda$.

The object of interest is the push-forward of this orbital measure
under the marginal map \cite{Collins2023,Walter2014}
$$
\Phi: \cU_\Lambda\to \rD(\bbC^m)\times\rD(\bbC^n),\quad
\Phi(\rho_{AB})=(\ptr{B}{\rho_{AB}},\ptr{A}{\rho_{AB}}),
$$
where $\rD(\bbC^d)$ denotes the set of $d\times d$ density matrices.
In particular, we seek the joint distribution of
$(\rho_A,\rho_B)=(\ptr{B}{\bsU\Lambda\bsU^\dagger},\ptr{A}{\bsU\Lambda\bsU^\dagger})$,
and ultimately the joint distribution of their ordered eigenvalues.
The starting point is the characteristic function. For Hermitian
test matrices $\bsX\in\Herm(\bbC^m)$ and $\bsY\in\Herm(\bbC^n)$, the
defining property of the partial trace gives
$$
\Tr{\bsX\rho_A}+\Tr{\bsY\rho_B}=\Tr{(\bsX\ot\I_n+\I_m\ot\bsY)\rho_{AB}}.
$$
Consequently,
\begin{eqnarray*}
\varphi_{\bdlambda}(\bsX,\bsY) =
\int_{\sfU(N)}\exp\Pa{\mathrm{i}(\bsX\ot\I_n+\I_m\ot\bsY)\bsU\Lambda\bsU^\dagger}\dif\mu_{\haar}(\bsU).
\end{eqnarray*}
This is a Harish-Chandra-Itzykson-Zuber integral
\cite{Harish1975,IZ1980,McSwiggen2018}. If $x_1,\ldots,x_m$ and
$y_1,\ldots,y_n$ are the eigenvalues of $\bsX$ and $\bsY$,
respectively, then the eigenvalues of the external matrix
$\bsX\ot\I_n+\I_m\ot\bsY$ are
$$
x_i+y_j,\quad 1\leqslant i\leqslant m,\quad 1\leqslant j\leqslant n.
$$
The problem therefore reduces to a \emph{confluent HCIZ
integral}\footnote{The confluent HCIZ integral is a critical
generalization of the standard HCIZ integral to the case of
eigenvalue coalescence (i.e., degeneracy). In this singular regime,
the standard formula breaks down due to vanishing denominators,
necessitating new mathematical tools. To address this, the confluent
variant relies on limiting procedures or determinant/derivative
structures to obtain a meaningful generalization.} whenever some of
these pairwise sums coincide.

The relation with symplectic geometry is important. A regular
unitary orbit is a compact coadjoint orbit and hence a compact
symplectic manifold. Push-forwards of its Liouville measure under
moment maps are Duistermaat-Heckman measures
\cite{Bytsenko2005,Christandl2014,DH1982,Zhang2019}. Their densities
are piecewise polynomial on the chambers of a finite hyperplane
arrangement, which explains why truncated powers, box splines, and
multivariate splines \cite{Wang1994} naturally occur in the formulas
below---a short introduction to truncated powers is given in
Appendix~\ref{app:tpf}.

We obtain the following principal results.
\begin{enumerate}
\item For arbitrary $m$ and $n$, we express the joint characteristic function
of $(\rho_A,\rho_B)$ as a confluent HCIZ determinant.
\item For a two-qubit state, we derive an explicit piecewise-polynomial density for the pair of marginal Bloch radii. Its support is exactly the two-qubit compatibility polytope.
\item For one marginal of a two-qubit state, we derive a compact truncated-power formula for the Bloch-radius density and hence for the ordered marginal eigenvalues.
\item For a qubit-qutrit state, we obtain an explicit truncated-power formula for the qubit Bloch-radius density; for the qutrit marginal, we obtain a bivariate spline formula for
the joint density of the largest and smallest eigenvalues. Since the
trace is fixed, these two variables determine the complete qutrit
spectrum.
\end{enumerate}
Throughout most of the paper, we assume that the global spectrum is
regular/non-degenerate, i.e., all eigenvalues are distinct.
Degenerate spectra can be treated by continuous confluent limits in
the variables involved.

The paper proceeds as follows. Section~\ref{sect:2} covers the
technical preliminaries. Section~\ref{sect:3} derives the joint
characteristic function. Sections~\ref{sect:4} and~\ref{sect:5}
handle the two-qubit and qubit-qutrit cases, respectively.
Section~\ref{sect:6} offers concluding remarks and future
directions. The appendices contain the explicit spline computations.

%===================================================%
\section{Analytic and geometric preliminaries}
\label{sect:2}
%===================================================%

\subsection{Density matrices and regular unitary orbits}

Let $\Herm(\bbC^d)$ denote the real vector space of $d\times d$
Hermitian matrices, equipped with the Hilbert-Schmidt inner product
$\Inner{\bsA}{\bsB}=\Tr{\bsA\bsB}$. The state space is
\begin{eqnarray}\label{eq:densitymatrix}
\rD(\bbC^d)=\Set{\rho\in\Herm(\bbC^d):\rho\geqslant\zero,\Tr{\rho}=1}.
\end{eqnarray}
Let
\begin{eqnarray}\label{eq:chamberN}
C_N=\Set{\bsx\in\bbR^N: x_1>x_2>\cdots >x_N}
\end{eqnarray}
be the open Weyl chamber, and let
\begin{eqnarray}\label{eq:probsimplex}
\Delta_{N-1}=\Set{\bsx\in\bbR^N_{\geqslant0}: \sum^N_{j=1}x_j=1}
\end{eqnarray}
be the closed probability simplex. For $\bdlambda\in C_N\cap
\Delta_{N-1}, \Lambda=\diag(\bdlambda)$, the spectrum is called
\emph{regular} because its entries are pairwise distinct. The
associated unitary orbit is
\begin{eqnarray}\label{eq:uorbit}
\cU_\Lambda=\Set{\bsU\Lambda\bsU^\dagger: \bsU\in\sfU(N)}.
\end{eqnarray}
This permits the rank-deficient but regular spectrum used in
Figure~\ref{fig:jointdensity}.

Let $\mu_{\haar}$ denote the normalized Haar measure on $\sfU(N)$.
The orbital measure $\nu_{\bdlambda}$ is the push-forward of
$\mu_{\haar}$ under $\bsU\mapsto \bsU\Lambda\bsU^\dagger$
\cite{Olshanski2013}.

For $N=mn$, define the marginal map
\begin{eqnarray}
\Phi:\cU_\Lambda\to\rD(\bbC^m)\times\rD(\bbC^n),\quad
\Phi(\rho)=(\ptr{B}{\rho_{AB}},\ptr{A}{\rho_{AB}}).
\end{eqnarray}
The probability distribution studied below is
\begin{eqnarray}
\mu^{AB}_{\bdlambda}=\Phi_*\nu_{\bdlambda}.
\end{eqnarray}
Both marginals have unit trace. Accordingly, $\mu^{AB}_{\bdlambda}$
is supported on the affine space
$$
\Set{(\bsA,\bsB)\in \Herm(\bbC^m)\times
\Herm(\bbC^n):\Tr{\bsA}=\Tr{\bsB}=1}.
$$
Whenever a density is used, it is understood with respect to the
natural Lebesgue measure on this affine space, or with respect to
the corresponding eigenvalue coordinates after radialization.
Throughout the whole paper, $\chi_S$ denote the indicator of a set
$S$.

\subsection{Fourier conventions}

For an integrable function $f$ on $\bbR^d$, we use
\begin{eqnarray}\label{eq:f-transform}
\widehat
f(\xi):=\cF(f)(\xi)=\int_{\bbR^d}f(\bsx)e^{\mathrm{i}\Inner{\xi}{\bsx}}[\dif\bsx],
\end{eqnarray}
where $[\dif\bsx]:=\prod^d_{k=1}\dif x_k$ for
$\bsx=(x_1,\ldots,x_d)$ and
\begin{eqnarray}\label{eq:inv-f-transform}
f(\bsx)=\cF^{-1}(\widehat
f)(\bsx)=\frac1{(2\pi)^d}\int_{\bbR^d}\widehat
f(\xi)e^{-\mathrm{i}\Inner{\xi}{\bsx}}[\dif\xi].
\end{eqnarray}
These conventions extend to tempered distributions
\cite{Duistermaat2010}.

For a probability measure $\mu$ on a finite-dimensional real vector
space $V$, its characteristic function is
\begin{eqnarray}\label{eq:charf}
\varphi_\mu(\xi)=\int_V
e^{\mathrm{i}\Inner{\xi}{\bsx}}\dif\mu(\bsx).
\end{eqnarray}
Thus the characteristic function is the Fourier transform of the
measure, with no additional factor of $(2\pi)^{-d}$.

For a probability measure on $\Herm(\bbC^d)$, the natural pairing is
$\Inner{\bsX}{\bsH}=\Tr{\bsX\bsH}$, so that
\begin{eqnarray}\label{eq:charfonherm}
\varphi_\mu(\bsX) =
\int_{\Herm(\bbC^d)}e^{\mathrm{i}\Tr{\bsX\bsH}}\dif\mu(\bsH).
\end{eqnarray}

\subsection{The HCIZ integral}

For $\bsz=(z_1,\ldots,z_N)$, define the Vandermonde product
\begin{eqnarray}\label{eq:Vandermonde}
V_N(\bsz)=\prod_{1\leqslant i<j\leqslant N}(z_i-z_j).
\end{eqnarray}
Also set
\begin{eqnarray}\label{eq:normalization}
\gamma_N=\prod^N_{k=1}\Gamma(k).
\end{eqnarray}
The Harish-Chandra-Itzykson-Zuber formula
\cite{Harish1975,IZ1980,McSwiggen2018} reads as follows.

\begin{prop}[HCIZ formula]\label{prop:HCIZ}
Let $\bsA,\bsB\in\Herm(\bbC^N)$ have simple eigenvalue vectors
$$
\bsa=(a_1,\ldots,a_N),\quad \bsb=(b_1,\ldots,b_N).
$$
Then
\begin{eqnarray}\label{eq:HCIZforz}
\int_{\sfU(N)}e^{z\Tr{\bsA\bsU\bsB\bsU^\dagger}}\dif\mu_{\haar}(\bsU)
= \gamma_N \frac{\det\Pa{e^{z
a_ib_j}}^N_{i,j=1}}{z^{\frac{N(N-1)}2}V_N(\bsa)V_N(\bsb)}.
\end{eqnarray}
For $z=\mathrm{i}$,
\begin{eqnarray}\label{eq:HCIZfori}
\int_{\sfU(N)}e^{\mathrm{i}\Tr{\bsA\bsU\bsB\bsU^\dagger}}\dif\mu_{\haar}(\bsU)
= \gamma_N \mathrm{i}^{-\frac{N(N-1)}2}\frac{\det\Pa{e^{\mathrm{i}
a_ib_j}}^N_{i,j=1}}{V_N(\bsa)V_N(\bsb)}.
\end{eqnarray}
\end{prop}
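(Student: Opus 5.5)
We plan to prove the HCIZ formula by the heat-kernel method. By unitary invariance of Haar measure we may diagonalize, taking $\bsA=\diag(\bsa)$ and $\bsB=\diag(\bsb)$. We also take $z$ real and positive, say $z=1/t$; both sides are entire in $z$ once we note that the right-hand side is an entire function whose apparent pole at $z=0$ cancels. Then \eqref{eq:HCIZfori} follows from \eqref{eq:HCIZforz} by analytic continuation to $z=\mathrm{i}$, using $\mathrm{i}^{-N(N-1)/2}$ for $z^{-N(N-1)/2}$.

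\textbf{Setting up two evaluations.} Consider the heat kernel on $\Herm(\bbC^N)\cong\bbR^{N^2}$,
$$
p_t(\bsH,\bsH')=(2\pi t)^{-N^2/2}\exp\Pa{-\tfrac{1}{2t}\Tr{(\bsH-\bsH')^2}}.
$$
We average $p_t(\bsA,\bsU\bsB\bsU^\dagger)$ over $\bsU$. Expanding the square gives
$$
e^{-(\Tr{\bsA^2}+\Tr{\bsB^2})/2t}\int e^{\frac1t\Tr{\bsA\bsU\bsB\bsU^\dagger}}\dif\mu_{\haar}(\bsU),
$$
times the Gaussian prefactor. So the left-hand side of \eqref{eq:HCIZforz} is the radial part of the heat kernel.

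\textbf{Radial heat equation.} The key step is to compute that radial part independently. The Laplacian restricted to conjugation-invariant functions acts as
$$
V_N(\bsa)^{-1}\,\Delta_{\bsa}\circ V_N(\bsa).
$$
This uses $\Delta_{\bsa}V_N=0$ and the Weyl integration formula, in which the Jacobian is $V_N^2$. Hence $V_N(\bsa)V_N(\bsb)\,\overline{p}_t(\bsa,\bsb)$ solves the Euclidean heat equation on $\bbR^N$ in $\bsa$. It is also antisymmetric under permutations of the $a_i$ and of the $b_j$. Therefore it must be a constant multiple of the antisymmetrized Euclidean kernel
$$
\det\Pa{(2\pi t)^{-1/2}e^{-(a_i-b_j)^2/2t}}.
$$
To justify this rigidly, we would argue via the initial condition: as $t\to0$ the orbital average converges to the orbital measure of $\bsB$. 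Alternatively, and more simply, we verify directly that the candidate solves the same PDE with the same delta initial data on the chamber. Pulling the factors $e^{-a_i^2/2t}e^{-b_j^2/2t}$ out of the determinant cancels the Gaussian prefactors and leaves $\det(e^{a_ib_j/t})$. This yields the formula up to a constant $c_N\, t^{N(N-1)/2}$; the power of $t$ comes from matching $(2\pi t)^{-N^2/2}$ against $(2\pi t)^{-N/2}$ together with the Weyl-formula volume constant.

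\textbf{The constant (main obstacle).} The main obstacle is pinning down the constant $\gamma_N=\prod_{k=1}^N\Gamma(k)$ rather than the structure. We would fix it by letting $\bsB\to0$, which makes the left-hand side equal to $1$. In that limit we expand $\det(e^{za_ib_j})$ to lowest order in $\bsb$. The lowest nonvanishing term is
$$
z^{N(N-1)/2}\,\frac{V_N(\bsa)V_N(\bsb)}{\prod_{k=0}^{N-1}k!},
$$
which follows from the Cauchy–Binet expansion of $\sum_k (za_ib_j)^k/k!$ onto the monomials $k=0,\dots,N-1$. Substituting back forces $c_N=\prod_{k=1}^{N}\Gamma(k)$. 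This limit computation also confirms the claimed power of $z$. Finally, the simple-spectrum hypothesis is exactly what is needed for the denominator $V_N(\bsa)V_N(\bsb)$ to be nonzero.
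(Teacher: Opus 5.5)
The paper does not prove this proposition. It quotes the formula from Harish-Chandra, Itzykson--Zuber and McSwiggen and uses it as a black box, so any complete argument is a different route.

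What you propose is essentially the Itzykson--Zuber heat-kernel proof, and it is correct:
\begin{itemize}
\item the orbital average of $p_t(\bsA,\bsU\bsB\bsU^\dagger)$ equals $(2\pi t)^{-N^2/2}e^{-(|\bsa|^2+|\bsb|^2)/2t}$ times the integral at $z=1/t$;
\item since $\Delta_{\bsa}V_N=0$, the radial Laplacian is $V_N^{-1}\Delta_{\bsa}V_N$;
\item the ratio $(2\pi t)^{-N^2/2}/(2\pi t)^{-N/2}$ produces exactly $z^{-N(N-1)/2}$;
\item continuation to $z=\mathrm{i}$ is legitimate, because $z^{-N(N-1)/2}\det(e^{za_ib_j})$ is entire by the same Cauchy--Binet expansion you use at the end.
\end{itemize}

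Your normalization is the best part. You let $\bsb=\epsilon\bsb_0\to0$ with $\bsb_0$ simple, and note that the signs $(-1)^{N(N-1)/2}$ from $\det(a_i^{k})$ and $\det(b_j^{k})$ cancel. This gives $c_N=\prod_{k=0}^{N-1}k!=\gamma_N$ without computing the Weyl-integration constant (the volume of the flag manifold), which is where the textbook version does most of its work.

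The trick is valid only because $c_N$ is independent of $\bsb$ and $t$. That independence comes precisely from the $t\to0$ initial-condition argument: antisymmetry plus the heat equation alone do not force proportionality, so that step is not optional. When you write it out, two analytic points need care.
\begin{itemize}
\item You must show $V_N(\bsa)V_N(\bsb)\overline{p}_t\to C\sum_{\sigma}\sign(\sigma)\delta_{\bsb_\sigma}$ in $\mathcal{S}'$, even though you divide $V_N(\bsa)^2\overline{p}_t$ by $V_N(\bsa)$. This works because Hoffman--Wielandt gives $\overline{p}_t\le(2\pi t)^{-N^2/2}e^{-d^2/2t}$ at distance $d$ from the points $\bsb_\sigma$, and these points lie off the walls since $\bsb$ is simple.
\item You need a uniqueness theorem for tempered solutions of the heat equation with zero initial data, for example via the Fourier transform in $\bsa$.
\end{itemize}
Your ``simpler'' alternative, checking the candidate directly, still needs uniqueness on $\Herm(\bbC^N)$ plus smoothness of $\det/V_N$ across the walls, so it saves nothing.

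In short, the paper's citation buys brevity. Your route gives a self-contained derivation in which the constant $\gamma_N$, the power of $z$, and the removability of the pole at $z=0$ all come from one Cauchy--Binet computation.
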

Although the right-hand side of Eq.~\eqref{eq:HCIZfori} is displayed
using distinct eigenvalues, the HCIZ integral is an entire symmetric
function of the eigenvalues of $\bsA$ and $\bsB$. Repeated
eigenvalues are therefore handled by continuous confluent limits. A
basic confluent identity used repeatedly below is
\begin{eqnarray}\label{eq:confid}
\lim_{(t_1,\ldots,t_r)\to(t,\ldots,t)}\frac{\det\Pa{f_i(t_j)}^r_{i,j=1}}{V_r(t_1,\ldots,t_r)}
=(-1)^{\frac{r(r-1)}2}
\det\Pa{\frac{f^{(j-1)}_i(t)}{(j-1)!}}^r_{i,j=1}.
\end{eqnarray}
Here the sign is consistent with the convention
$V_r(\bst)=\prod_{1\leqslant i<j\leqslant r}(t_i-t_j)$. The same
identity applies to a confluent block of rows or columns inside a
larger determinant.

\subsection{Abelian projections and spectral distributions}

A unitary-conjugation-invariant measure on Hermitian matrices can
first be projected onto a fixed Cartan subalgebra, giving the
distribution of the diagonal entries. Its ordered eigenvalue
distribution is then recovered by applying the Weyl differential
operator and multiplying by the Weyl denominator. This is often
referred to as the \emph{derivative principle}
\cite{Christandl2014,Mejia2017,Zhang2017}.

We shall use two low-rank forms directly. For a rotationally
invariant random qubit Bloch vector
$\bsa=(a_1,a_2,a_3)^\t\in\bbR^3$, let $a=\abs{\bsa}$ and let
$\alpha=a_3$ be one fixed Cartesian component. If $p(a)$ is the
density of $a$ and $q(\alpha)$ is the density of $\alpha$, then
\begin{eqnarray}\label{eq:qvsp}
q(\alpha)=\int^1_{\abs{\alpha}}\frac{p(a)}{2a}\dif a.
\end{eqnarray}
Consequently,
\begin{eqnarray}\label{eq:pq'}
p(a)=(-2a) q'(a),\quad a>0.
\end{eqnarray}
The such relationship Eq.~\eqref{eq:pq'} between $p(a)$ and
$q(\alpha)$ is established in Appendix~\ref{app:pvsq}. For two
rotationally invariant Bloch vectors with radii $a,b$ and fixed
components $\alpha,\beta$.
\begin{eqnarray}\label{eq:qabvspab}
q(\alpha,\beta) =
\int^1_{\abs{\alpha}}\int^1_{\abs{\beta}}\frac{p(a,b)}{4ab}\dif
a\dif b,
\end{eqnarray}
and therefore
\begin{eqnarray}\label{eq:pvsppq}
p(a,b) =(4ab) \partial_a\partial_b q(a,b),\quad (a,b)\in\bbR^2_{>0}.
\end{eqnarray}
The relationship Eq.~\eqref{eq:pvsppq} between $p(a,b)$ and
$q(\alpha,\beta)$ is established in Appendix~\ref{app:pabvsqab}.

%================================================================%
\section{Joint characteristic function in arbitrary dimensions}
\label{sect:3}
%================================================================%

Let
$$
N=mn,\quad\rho_{AB}=\bsU\Lambda\bsU^\dagger,\quad\bsU\sim(\sfU(N),\mu_{\haar}).
$$
Define $\rho_A=\ptr{B}{\rho_{AB}}$ and $\rho_B=\ptr{A}{\rho_{AB}}$.
\begin{thrm}[Joint characteristic function]
For $\bsX\in\Herm(\bbC^m)$ and $\bsY\in\Herm(\bbC^n)$, the joint
characteristic function of $(\rho_A,\rho_B)$ is
\begin{eqnarray}\label{eq:jointcharf}
\varphi_{\bdlambda}(\bsX,\bsY) =
\int_{\sfU(N)}\exp\Pa{\mathrm{i}\Tr{(\bsX\ot\I_n+\I_m\ot\bsY)\bsU\Lambda\bsU^\dagger}}\dif\mu_{\haar}(\bsU).
\end{eqnarray}
It is invariant under independent unitary conjugations:
\begin{eqnarray}\label{eq:uinvariant}
\varphi_{\bdlambda}(\bsV\bsX\bsV^\dagger,\bsW\bsY\bsW^\dagger)=\varphi_{\bdlambda}(\bsX,\bsY)
\end{eqnarray}
for all $\bsV\in\sfU(m)$ and $\bsW\in\sfU(n)$. Let
$\bsx=(x_1,\ldots,x_m)$ and $\bsy=(y_1,\ldots,y_n)$ be the
eigenvalue vectors of $\bsX$ and $\bsY$, respectively. Define the
$N$-component vector
\begin{eqnarray}\label{eq:hxy}
\bsh(\bsx,\bsy)=(x_i+y_j)_{1\leqslant i\leqslant m,1\leqslant
j\leqslant n}
\end{eqnarray}
in any fixed ordering. Then
\begin{eqnarray}\label{eq:varphixy}
\varphi_{\bdlambda}(\bsx,\bsy)=\gamma_N
\mathrm{i}^{-\frac{N(N-1)}2}\lim_{\bss\to
\bsh(\bsx,\bsy)}\frac{\det\Pa{e^{\mathrm{i}s_i\lambda_j}}^N_{i,j=1}}{V_N(\bss)V_N(\bdlambda)}.
\end{eqnarray}
The limit is the continuous confluent limit at all repeated
components of $\bsh(\bsx,\bsy)$.
\end{thrm}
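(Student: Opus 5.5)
The plan has three steps: derive the integral formula \eqref{eq:jointcharf} from the definition of the characteristic function, deduce the invariance \eqref{eq:uinvariant} from the invariance of Haar measure, and evaluate the integral with Proposition~\ref{prop:HCIZ} followed by a continuity argument.

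For the first step, take the characteristic function \eqref{eq:charfonherm} on the product space $\Herm(\bbC^m)\times\Herm(\bbC^n)$ with pairing $\Tr{\bsX\bsA}+\Tr{\bsY\bsB}$. Apply the partial-trace duality $\Tr{\bsX\rho_A}+\Tr{\bsY\rho_B}=\Tr{(\bsX\ot\I_n+\I_m\ot\bsY)\rho_{AB}}$. This identity follows by checking it on product operators and extending linearly. Since $\mu^{AB}_{\bdlambda}=\Phi_*\nu_{\bdlambda}$ and $\nu_{\bdlambda}$ is the push-forward of $\mu_{\haar}$, the change-of-variables formula for push-forward measures yields \eqref{eq:jointcharf}.

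For invariance, write $\bsV\bsX\bsV^\dagger\ot\I_n+\I_m\ot\bsW\bsY\bsW^\dagger=(\bsV\ot\bsW)(\bsX\ot\I_n+\I_m\ot\bsY)(\bsV\ot\bsW)^\dagger$. Cyclicity of the trace moves $\bsV\ot\bsW$ onto $\bsU$, and the change of variables $\bsU\mapsto(\bsV\ot\bsW)^\dagger\bsU$ leaves $\mu_{\haar}$ unchanged. By this invariance, we may diagonalize $\bsX$ and $\bsY$ independently. The external matrix $\bsH=\bsX\ot\I_n+\I_m\ot\bsY$ is then diagonal with eigenvalue vector $\bsh(\bsx,\bsy)$. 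Any reordering of the components permutes both the determinant's rows and $V_N$, with the same sign, so the ordering does not matter.

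Next, apply Proposition~\ref{prop:HCIZ} with $\bsA=\bsH$ and $\bsB=\Lambda$, using \eqref{eq:HCIZfori}. The spectrum of $\Lambda$ is regular, so $V_N(\bdlambda)\neq0$. The only obstruction is that the $x_i+y_j$ may coincide. Coincidences always occur when $m,n\geqslant2$ and $\bsX$ or $\bsY$ is degenerate, and they can also occur accidentally, e.g.\ $x_1+y_2=x_2+y_1$. To handle this, replace $\bsH$ by a diagonal matrix $\diag(\bss)$ with $\bss$ having simple entries, where HCIZ applies verbatim. Then let $\bss\to\bsh(\bsx,\bsy)$.

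The main point to justify is this limit. The left side $\int e^{\mathrm{i}\Tr{\diag(\bss)\bsU\Lambda\bsU^\dagger}}\dif\mu_{\haar}$ is continuous, indeed entire, in $\bss$ by dominated convergence, since the integrand has modulus one. So it converges to \eqref{eq:jointcharf} evaluated at $\bsh$. Therefore the right side has a limit that is independent of the path of approach, and it equals $\varphi_{\bdlambda}$. This gives \eqref{eq:varphixy}.

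For completeness, I would remark that the quotient $\det(e^{\mathrm{i}s_i\lambda_j})/V_N(\bss)$ is antisymmetric divided by antisymmetric. It therefore extends to an entire symmetric function, and its value at coincident points is computed blockwise by the confluent identity \eqref{eq:confid}. No separate analysis of the limit is needed beyond continuity.
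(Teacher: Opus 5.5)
Your proposal is correct and follows the paper's proof. Like the paper, you use partial-trace duality for the integral formula, then conjugation by $\bsV\ot\bsW$ together with left-invariance of Haar measure for the invariance, and finally the HCIZ formula extended to coinciding pairwise sums $x_i+y_j$ by continuity. Your dominated-convergence argument, with perturbation to a simple $\bss$, spells out what the paper calls ``continuous extension to repeated eigenvalues.''
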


\begin{proof}
By the defining property of the partial trace,
\begin{eqnarray*}
&&\Tr{\bsX\rho_A} + \Tr{\bsY\rho_B} =
\Tr{(\bsX\ot\I_n)(\rho_{AB})}+\Tr{(\I_m\ot\bsY)(\rho_{AB})}\\
&&=\Tr{(\bsX\ot\I_n+\I_m\ot\bsY)\rho_{AB}}
\end{eqnarray*}
which proves Eq.~\eqref{eq:jointcharf}. For $\bsV\in\sfU(m)$ and
$\bsW\in\sfU(n)$,
$$
\bsV\bsX\bsV^\dagger\ot\I_n + \I_m\ot\bsW\bsY\bsW^\dagger
=(\bsV\ot\bsW)(\bsX\ot\I_n+\I_m\ot\bsY)(\bsV\ot\bsW)^\dagger.
$$
The invariance of Haar measure under left multiplication by
$\bsV\ot\bsW$ gives Eq.~\eqref{eq:uinvariant}. Finally, the
eigenvalues of $\bsX\ot\I_n+\I_m\ot\bsY$ are the pairwise sums
$x_i+y_j$. Eq.~\eqref{eq:varphixy} follows from the HCIZ formula and
continuous extension to repeated eigenvalues.
\end{proof}

\begin{remark}
The variables $x_i$ and $y_j$ are eigenvalues of Hermitian test
matrices and are therefore arbitrary real numbers. They are not
probability vectors and need not have unit sum.
\end{remark}

\begin{remark}[Gauge invariance]
The joint characteristic function satisfies
\begin{eqnarray}\label{eq:gaugeinvariant}
\varphi_{\bdlambda}(\bsX+c\I_m,\bsY-c\I_n)=\varphi_{\bdlambda}(\bsX,\bsY),
\end{eqnarray}
for every $c\in\bbR$. Indeed,
$$
(\bsX+c\I_m)\ot\I_n + \I_m\ot(\bsY-c\I_n) =
\bsX\ot\I_n+\I_m\ot\bsY.
$$
Thus one scalar direction in the pair of test matrices is redundant,
consistently with the two unit-trace constraints on the marginal
states.
\end{remark}

\begin{cor}[Characteristic function of one marginal]
For $\bsX\in\Herm(\bbC^m)$,
\begin{eqnarray}\label{eq:onesidecharf}
\varphi^A_{\bdlambda}(\bsX) =
\int_{\sfU(N)}e^{\mathrm{i}\Tr{(\bsX\ot\I_n)\bsU\Lambda\bsU^\dagger}}\dif\mu_{\haar}(\bsU).
\end{eqnarray}
If $x_1,\ldots,x_m$ are the eigenvalues of $\bsX$, then the external
eigenvalue vector consists of $x_i$, each repeated $n$ times:
\begin{eqnarray}\label{eq:hax}
\bsh_A(\bsx)=(x_1,\ldots,x_1,\ldots,x_m,\ldots,x_m).
\end{eqnarray}
Thus
\begin{eqnarray}\label{eq:varphiaxy}
\varphi^A_{\bdlambda}(\bsx)=\gamma_N
\mathrm{i}^{-\frac{N(N-1)}2}\lim_{\bss\to
\bsh_A(\bsx)}\frac{\det\Pa{e^{\mathrm{i}s_i\lambda_j}}^N_{i,j=1}}{V_N(\bss)V_N(\bdlambda)}.
\end{eqnarray}
An analogous formula holds for $\rho_B$.
\end{cor}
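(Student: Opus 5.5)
The plan is to obtain the corollary by specializing the joint characteristic function theorem at $\bsY=\zero$. The characteristic function of the marginal $\rho_A$ alone is $\varphi^A_{\bdlambda}(\bsX)=\mathbb{E}\,e^{\mathrm{i}\Tr{\bsX\rho_A}}$. By the defining property of the partial trace, $\Tr{\bsX\rho_A}=\Tr{(\bsX\ot\I_n)\rho_{AB}}$. With $\rho_{AB}=\bsU\Lambda\bsU^\dagger$, this is exactly Eq.~\eqref{eq:onesidecharf}. Equivalently, $\varphi^A_{\bdlambda}(\bsX)=\varphi_{\bdlambda}(\bsX,\zero)$ in the notation of Eq.~\eqref{eq:jointcharf}.

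Next I identify the external spectrum. Diagonalize $\bsX=\bsV\diag(\bsx)\bsV^\dagger$ with $\bsV\in\sfU(m)$. Then
$$
\bsX\ot\I_n=(\bsV\ot\I_n)\Pa{\diag(\bsx)\ot\I_n}(\bsV\ot\I_n)^\dagger,
$$
and $\diag(\bsx)\ot\I_n$ is diagonal with entries $x_i$, each repeated $n$ times. Up to the ordering of components, this is $\bsh(\bsx,\zero)=\bsh_A(\bsx)$ from Eq.~\eqref{eq:hxy}. By the invariance Eq.~\eqref{eq:uinvariant}, or directly by left invariance of Haar measure, $\varphi^A_{\bdlambda}$ depends only on $\bsx$. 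Eq.~\eqref{eq:varphiaxy} is then Eq.~\eqref{eq:varphixy} evaluated at $\bsy=\zero$.

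The only point needing care is that the external matrix is never simple when $n\geqslant2$, so Proposition~\ref{prop:HCIZ} cannot be applied literally. I justify the limit as follows.

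\begin{itemize}
\item The left-hand side $\int e^{\mathrm{i}\Tr{\bsA\bsU\Lambda\bsU^\dagger}}\dif\mu_{\haar}(\bsU)$ is jointly continuous, indeed entire, in $\bsA$.
\item Hence it is a continuous, symmetric function of the eigenvalue vector $\bss$ of $\bsA$.
\item For $\bss$ with distinct entries, it equals the HCIZ ratio of Eq.~\eqref{eq:HCIZfori}, with $\bsb=\bdlambda$ simple.
\item Taking $\bsA=\diag(\bss)$ and letting $\bss\to\bsh_A(\bsx)$ along distinct-entry vectors, the integral converges to the value at $\diag(\bsh_A(\bsx))$.
\end{itemize}

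The ratio therefore converges, along any such approach, to the same limit. Its existence and its explicit confluent form follow from Eq.~\eqref{eq:confid}, applied blockwise to each group of $n$ coinciding rows. Reordering $\bss$ changes numerator and denominator by the same sign, so the choice of ordering is immaterial.

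The analogous statement for $\rho_B$ follows by the same argument with $\I_m\ot\bsY$, whose eigenvalues are $y_j$, each repeated $m$ times.
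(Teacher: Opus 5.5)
Your proposal is correct and follows the paper's route: the corollary is the specialization $\bsY=\zero$ of the joint characteristic-function theorem, with the repeated external eigenvalues handled by a continuous limit. Your continuity argument spells out what the paper compresses into ``continuous extension to repeated eigenvalues'': the orbital integral is entire in $\bsA$ and agrees with the HCIZ ratio whenever $\bss$ has distinct entries, so the limit exists and does not depend on the ordering or on the path of approach.
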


\begin{remark}
Fourier inversion of $\varphi_{\bdlambda}(\bsx,\bsy)$  in diagonal
test variables gives an Abelian, or diagonal, marginal distribution.
It does not by itself give the ordered eigenvalue density. The
latter requires radialization through the appropriate Weyl
derivative principle. This distinction is essential in
Sections~\ref{sect:4} and \ref{sect:5}.
\end{remark}

%================================================================%
\section{Two-qubit systems}
\label{sect:4}
%================================================================%

Let $m=n=2$, so that $N=4$. Every qubit state has a Bloch
representation
\begin{eqnarray}\label{eq:Blochrep}
\rho=\rho(\bsr)=\frac12(\I_2+\bsr\cdot\bdsigma),\quad\bsr\in\bbR^3,\quad
\abs{\bsr}\leqslant1,
\end{eqnarray}
where $\bdsigma=(\sigma_1,\sigma_2,\sigma_3)$ is the vector of Pauli
matrices. Write
\begin{eqnarray}\label{eq:blochbloch}
\rho_A=\frac12(\I_2+\bsa\cdot\bdsigma),\quad
\rho_B=\frac12(\I_2+\bsb\cdot\bdsigma),
\end{eqnarray}
where $\bsa=(a_1,a_2,a_3)^\t$ and $\bsb=(b_1,b_2,b_3)^\t$ with
$\abs{\bsa}=a$ and $\abs{\bsb}=b$. The ordered local spectra are
\begin{eqnarray}\label{eq:specab}
\Spec(\rho_A) = \Pa{\frac{1+a}2,\frac{1-a}2},\quad \Spec(\rho_B) =
\Pa{\frac{1+b}2,\frac{1-b}2}.
\end{eqnarray}

\subsection{Abelian characteristic function}

Define the fixed Bloch components
\begin{eqnarray}\label{eq:alphabeta}
\begin{cases}
\alpha=a_3=\Tr{\rho_A\sigma_3}=\Tr{(\sigma_3\ot\I_2)\rho_{AB}},\\
\beta=b_3=\Tr{\rho_B\sigma_3}=\Tr{(\I_2\ot\sigma_3)\rho_{AB}}.
\end{cases}
\end{eqnarray}
Let $q_{\bdlambda}(\alpha,\beta)$ denote their joint density. For
Fourier variables $s,t$, $s\alpha+t\beta=\Tr{\bsH_{s,t}\rho_{AB}}$,
where
\begin{eqnarray}\label{eq:Hst}
\bsH_{s,t}=s\sigma_3\ot\I_2+t\I_2\ot\sigma_3.
\end{eqnarray}
Its eigenvalues are
\begin{eqnarray}\label{eq:h}
\bsh=(s+t,s-t,-s+t,-s-t).
\end{eqnarray}
Their Vandermonde product is
\begin{eqnarray}\label{eq:V4h}
V_4(\bsh)=64s^2t^2(s^2-t^2).
\end{eqnarray}
For $\pi\in S_4$, the symmetric group of $\set{1,2,3,4}$, define
\begin{eqnarray}\label{eq:uv}
\begin{cases}
u_\pi(\bdlambda)= \lambda_{\pi(1)}+\lambda_{\pi(2)}-\lambda_{\pi(3)}-\lambda_{\pi(4)} = 2(\lambda_{\pi(1)}+\lambda_{\pi(2)})-1,\\
v_\pi(\bdlambda)=
\lambda_{\pi(1)}-\lambda_{\pi(2)}+\lambda_{\pi(3)}-\lambda_{\pi(4)}
= 2(\lambda_{\pi(1)}+\lambda_{\pi(3)})-1.
\end{cases}
\end{eqnarray}
Expanding the HCIZ determinant gives the following characteristic
function.

\begin{prop}\label{prop:jointcharf}
The joint characteristic function of $(\alpha,\beta)$ is
\begin{eqnarray}\label{eq:charfalphabeta}
\widehat q_{\bdlambda}(s,t) = -\frac3{16V_4(\bdlambda)}\sum_{\pi\in
S_4}\sign(\pi)\frac{e^{\mathrm{i}(su_\pi(\bdlambda)+tv_\pi(\bdlambda))}}{s^2t^2(s^2-t^2)}.
\end{eqnarray}
The apparent singularities at $s=0,t=0,s=t,s=-t$ are removable in
the complete alternating sum. Individual summands are understood
distributionally using a common polarization.
\end{prop}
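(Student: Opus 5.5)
We apply the HCIZ formula, Proposition~\ref{prop:HCIZ} in the form Eq.~\eqref{eq:HCIZfori}, with $\bsA=\bsH_{s,t}$ and $\bsB=\Lambda$. By Eq.~\eqref{eq:alphabeta},
$$
\widehat q_{\bdlambda}(s,t)=\mathbb{E}\,e^{\mathrm{i}(s\alpha+t\beta)}=\int_{\sfU(4)}e^{\mathrm{i}\Tr{\bsH_{s,t}\bsU\Lambda\bsU^\dagger}}\dif\mu_\haar(\bsU).
$$
For generic $(s,t)$, namely $st(s^2-t^2)\neq0$, the eigenvalues $\bsh=(s+t,s-t,-s+t,-s-t)$ from Eq.~\eqref{eq:h} are distinct. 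Hence the nonconfluent formula applies directly:
$$
\widehat q_{\bdlambda}(s,t)=\gamma_4\,\mathrm{i}^{-6}\frac{\det\Pa{e^{\mathrm{i}h_k\lambda_j}}^4_{k,j=1}}{V_4(\bsh)V_4(\bdlambda)}.
$$
Here $\gamma_4=\Gamma(1)\Gamma(2)\Gamma(3)\Gamma(4)=12$ and $\mathrm{i}^{-6}=-1$.

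Next we expand the determinant by the Leibniz formula, summing over permutations of the column index:
$$
\det\Pa{e^{\mathrm{i}h_k\lambda_{j}}}=\sum_{\pi\in S_4}\sign(\pi)\exp\Big(\mathrm{i}\sum_k h_k\lambda_{\pi(k)}\Big).
$$
Substituting the explicit $h_k$ gives
$$
\sum_k h_k\lambda_{\pi(k)}=s(\lambda_{\pi(1)}+\lambda_{\pi(2)}-\lambda_{\pi(3)}-\lambda_{\pi(4)})+t(\lambda_{\pi(1)}-\lambda_{\pi(2)}+\lambda_{\pi(3)}-\lambda_{\pi(4)}),
$$
which equals $su_\pi(\bdlambda)+tv_\pi(\bdlambda)$ by Eq.~\eqref{eq:uv}. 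The second forms in Eq.~\eqref{eq:uv} follow from $\sum_j\lambda_j=1$.

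We then verify the Vandermonde product Eq.~\eqref{eq:V4h} by direct multiplication. The six factors are
$$
h_1-h_2=2t,\quad h_1-h_3=2s,\quad h_1-h_4=2(s+t),\quad h_2-h_3=2(s-t),\quad h_2-h_4=2s,\quad h_3-h_4=2t.
$$
Their product is $64s^2t^2(s^2-t^2)$. Combining the constants gives $-12/64=-3/16$, which is Eq.~\eqref{eq:charfalphabeta} on the generic set.

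The main subtlety is the extension to the lines $s=0$, $t=0$ and $s=\pm t$. On these lines the eigenvalues of $\bsH_{s,t}$ coalesce, so the nonconfluent HCIZ expression is not directly valid there. We argue as follows. The left-hand side is an entire function of $(s,t)$, being the integral of an entire integrand over a compact group. The right-hand side agrees with it on the dense open generic set. Therefore the numerator $\sum_\pi\sign(\pi)e^{\mathrm{i}(su_\pi+tv_\pi)}$ is an entire function that vanishes on those lines to the order required for the quotient to extend. Concretely, the quotient is the continuous confluent limit of Eq.~\eqref{eq:varphixy} via Eq.~\eqref{eq:confid}. Alternatively, one can see the required cancellation directly from the alternating structure: transpositions of $\pi$ pair terms that cancel on each singular line. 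So the singularities are removable, and the formula holds for all $(s,t)$ by continuity.

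Finally, individual summands $e^{\mathrm{i}(su_\pi+tv_\pi)}/(s^2t^2(s^2-t^2))$ are not locally integrable. For the later Fourier inversion they are to be read as tempered distributions. One fixes a common regularization, for instance $s\mapsto s-\mathrm{i}0$, $t\mapsto t-\mathrm{i}0$, with a consistent choice for $s\pm t$, applied uniformly to all terms. Since the full sum is smooth, the sum of the regularized terms equals the genuine function independently of the choice, because the differences are distributions supported on the singular lines and they cancel in the alternating sum.
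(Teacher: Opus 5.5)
Your proposal is correct and follows the paper's proof. Like the paper, you apply the HCIZ formula with $\gamma_4\mathrm{i}^{-6}=-12$, use $V_4(\bsh)=64s^2t^2(s^2-t^2)$ to get the constant $-3/16$, and expand the determinant by Leibniz so that $\sum_k h_k\lambda_{\pi(k)}=su_\pi(\bdlambda)+tv_\pi(\bdlambda)$. Your removability argument (the left-hand side is entire in $(s,t)$, so the alternating numerator must vanish to the required order) is sound and makes explicit what the paper leaves to its general confluent-limit remark. The transposition-pairing aside and the claim of independence from the regularization are only heuristic, but neither is needed for the statement.
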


\begin{proof}
For $N=4$, $\gamma_N=12$ and $\mathrm{i}^{-6}=-1$. Using
Eq.~\eqref{eq:V4h} in the HCIZ formula yields
\begin{eqnarray}
\widehat
q_{\bdlambda}(s,t)=-\frac3{16V_4(\bdlambda)}\frac{\det\Pa{e^{\mathrm{i}h_i\lambda_j}}^4_{i,j=1}}{s^2t^2(s^2-t^2)}.
\end{eqnarray}
The determinant expansion is
$$
\det\Pa{e^{\mathrm{i}h_i\lambda_j}}^4_{i,j=1}=\sum_{\pi\in
S_4}\sign(\pi)e^{\mathrm{i}\sum^4_{k=1}h_k\lambda_{\pi(k)}}.
$$
Substitution of Eq.~\eqref{eq:h} gives
$\sum^4_{k=1}h_k\lambda_{\pi(k)}=su_{\pi}(\bdlambda)+tv_{\pi}(\bdlambda)$,
which proves Eq.~\eqref{eq:charfalphabeta}.
\end{proof}

\subsection{Joint density of the Bloch radii}

Local-unitary invariance implies invariance under independent
rotations of $\bsa$ and $\bsb$. Conditional on fixed radii $(a,b)$,
the directions of the two Bloch vectors therefore have the product
of the uniform measures on $S^2\times S^2$. In particular,
$$
\alpha\sim\op{Unif}[-a,a],\quad \beta\sim\op{Unif}[-b,b],
$$
independently conditional on $(\bsa,\bsb)$. Hence
\begin{eqnarray}\label{eq:qlambdaalphabeta}
q_{\bdlambda}(\alpha,\beta)=\int^1_{\abs{\alpha}}\int^1_{\abs{\beta}}
\frac{p_{\bdlambda}(a,b)}{4ab}\dif a\dif b,
\end{eqnarray}
where $p_{\bdlambda}(a,b)$ is the joint density of the radii.
Therefore,
\begin{eqnarray}\label{eq:plambdaab}
p_{\bdlambda}(a,b)=(4ab)\partial_a\partial_bq_{\bdlambda}(a,b),\quad
(a,b)\in\bbR^2_{>0}.
\end{eqnarray}
Define
\begin{eqnarray}\label{eq:Grep}
G(x,y):=\int_{\bbR^4_{\geqslant0}}\delta\Pa{\Pa{\begin{array}{c}
                                                 x \\
                                                 y
                                               \end{array}
} - r_1\Pa{\begin{array}{c}
             1 \\
             0
           \end{array}
}-r_2\Pa{\begin{array}{c}
           0 \\
           1
         \end{array}
}-r_3\Pa{\begin{array}{c}
           1 \\
           1
         \end{array}
}-r_4\Pa{\begin{array}{c}
           1 \\
           -1
         \end{array}
}}[\dif\bsr],
\end{eqnarray}
where $[\dif\bsr]:=\dif r_1\dif r_2\dif r_3\dif r_4$ and $\delta$ is
the Dirac delta function of vector arguments \cite{Zhang2021}. This
is the \emph{multivariate truncated-power function}
\cite{Boor1993,Wang1994} associated with the four column vectors
$$
(1,0)^\t,\quad(0,1)^\t,\quad(1,1)^\t,\quad(1,-1)^\t.
$$
Its Fourier transform is the polarized distribution
$$
\widehat G(s,t) = \frac1{st(s+t)(s-t)},
$$
where the boundary values are taken with the polarization induced by
the positive-ray representation above. Eliminating $r_1,r_2$ gives
\begin{eqnarray}
G(x,y)&=&\operatorname{Area}\Set{(r_3,r_4)\in\bbR^2_{\geqslant0}:
r_3+r_4\leqslant x, r_3-r_4\leqslant y}\label{eq:Garea}\\
&=& \frac12\int^x_0(\min(\zeta,y)+\zeta)_+\dif\zeta,\quad
(x,y)\in\bbR^2. \label{eq:Gxy}
\end{eqnarray}
The explicit formula remains
\begin{eqnarray}\label{eq:Gexplicita}
G(x,y)=\begin{cases}
\frac{x^2}2,&\text{if }(x,y)\in C_1=\Set{(x,y)\mid x\geqslant 0,y\geqslant x},\\
\frac{x^2+2xy-y^2}4,&\text{if }(x,y)\in C_2=\Set{(x,y)\mid x\geqslant0,0\leqslant y<x},\\
\frac{(x+y)^2}4,&\text{if }(x,y)\in C_3=\Set{(x,y):x\geqslant0,-x< y<0},\\
0,&\text{if }(x,y)\in C_0=\bbR^2\backslash(C_1\cup C_2\cup C_3).
\end{cases}
\end{eqnarray}
Equivalently,
\begin{eqnarray}\label{eq:Gexplicitb}
G(x,y) = \frac14\Br{(y+(x)_+)^2_+ + (y-(x)_+)^2_+ - 2((y)_+)^2},
\end{eqnarray}
where $(z)_+:=\max(z,0)$. Its closed support is
$$
\op{supp}(G)=\Set{(x,y)\in\bbR^2: x\geqslant0,x+y\geqslant0}.
$$
Here the detailed derivation from Eq.~\eqref{eq:Grep} to
Eq.~\eqref{eq:Gexplicitb} is relegated to
Appendix~\ref{app:computing}. The graph of $G(x,y)$ is depicted in
the following Figure~\ref{fig:chamber-support}.
\begin{figure}[ht]
\centering \subfigure[density function]
{\begin{minipage}[b]{0.45\linewidth}
\includegraphics[width=1\textwidth]{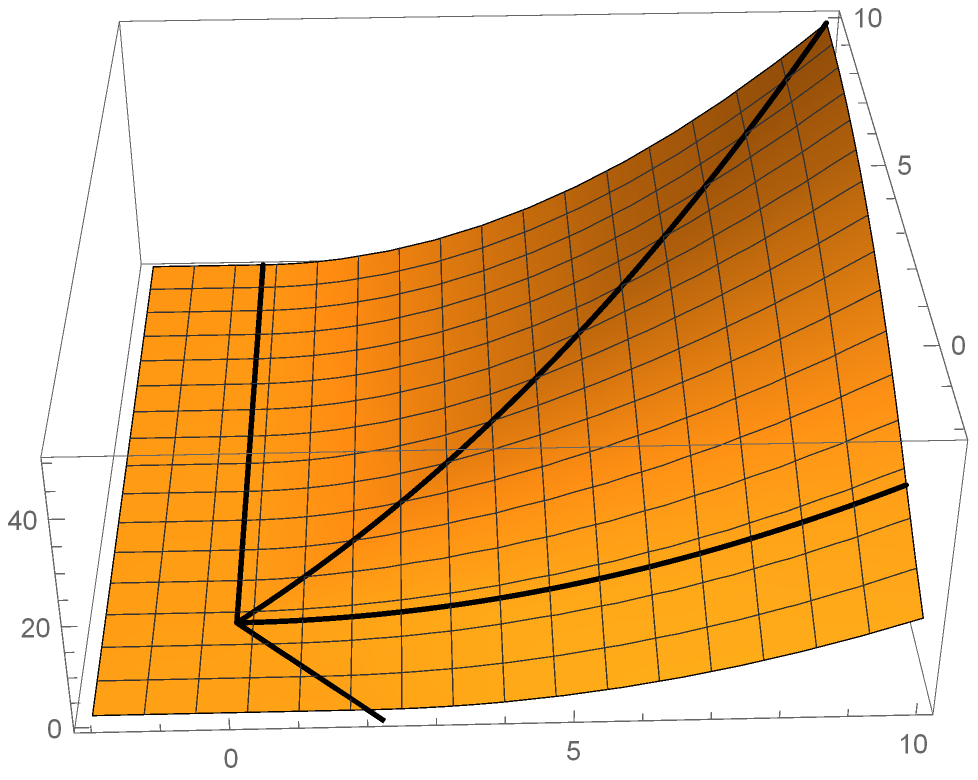}
\end{minipage}}
\centering \subfigure[support] {\begin{minipage}[b]{0.5\linewidth}
\includegraphics[width=0.7\textwidth]{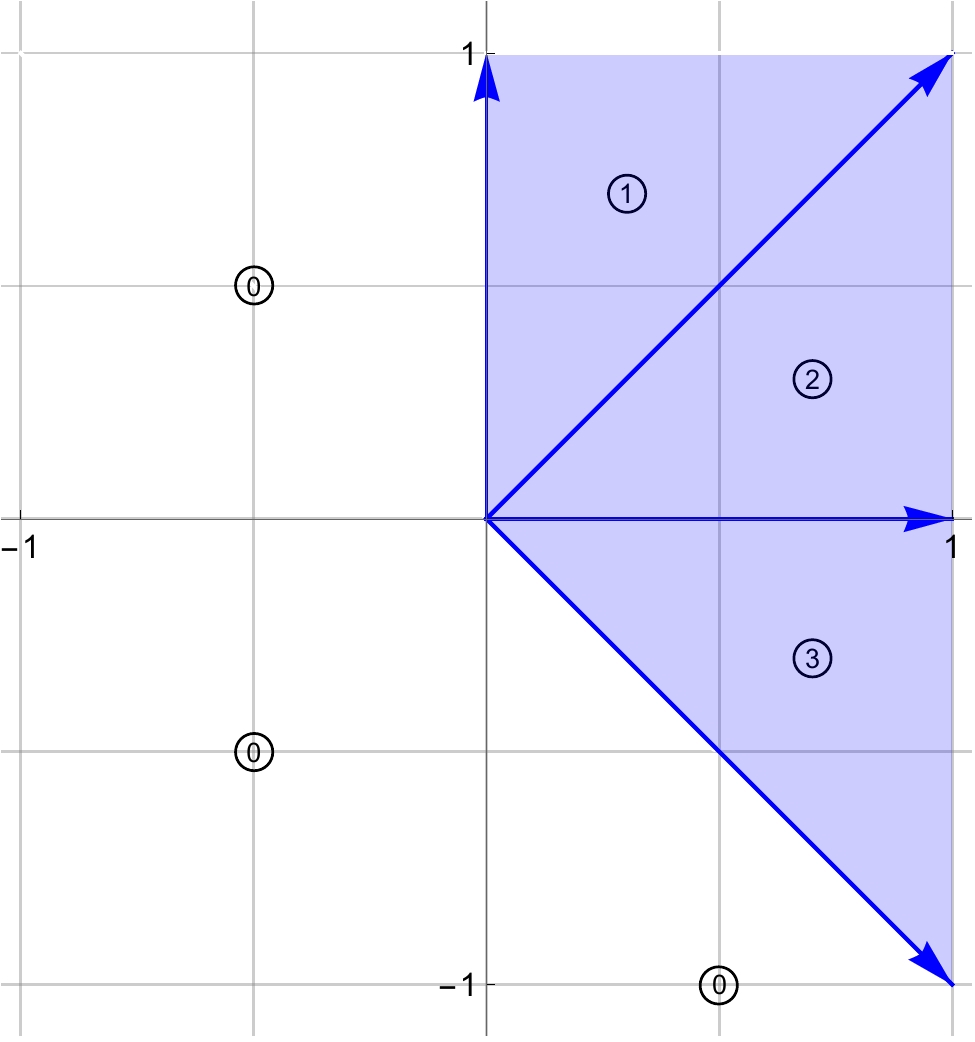}
\end{minipage}}
\caption{The graph of $G(x,y)$ and its
support}\label{fig:chamber-support}
\end{figure}

\begin{thrm}[Joint two-qubit Bloch-radius density]\label{th:2qbit}
Let $\bdlambda=(\lambda_1,\ldots,\lambda_4)\in C_4\cap\Delta_3$.
Then the joint probability density of the two marginal Bloch radii
is
\begin{eqnarray}\label{eq:jointdensityab}
p_{\bdlambda}(a,b) = \frac{3ab}{4V_4(\bdlambda)}\sum_{\pi\in
S_4}\sign(\pi)G(a-u_\pi(\bdlambda),b-v_{\pi}(\bdlambda))
\end{eqnarray}
for $(a,b)\in\bbR^2_{\geqslant0}$, with
$(u_\pi(\bdlambda),v_\pi(\bdlambda))$ given by Eq.~\eqref{eq:uv}.
The density is zero outside $[0,1]^2$.
\end{thrm}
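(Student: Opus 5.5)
The approach is to invert the Abelian characteristic function of Proposition~\ref{prop:jointcharf} distributionally, obtaining $q_{\bdlambda}(\alpha,\beta)$ as an alternating sum of translates of a fixed spline. We then pass to the radii through Eq.~\eqref{eq:plambdaab}. The natural spline is a primitive of $G$, since
$$\frac{1}{s^2t^2(s^2-t^2)}=\frac{1}{st}\cdot\frac{1}{st(s+t)(s-t)}.$$
Up to the sign and $\mathrm{i}$-factors fixed by the convention Eq.~\eqref{eq:inv-f-transform}, $1/(st)$ is the Fourier transform of the Heaviside function $\chi_{\{x\geqslant0,y\geqslant0\}}$, suitably polarized. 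Multiplication of the transforms therefore corresponds to convolution, i.e.\ integration in each variable. Define
$$K(x,y)=\int_{-\infty}^x\int_{-\infty}^y G(\xi,\eta)\dif\eta\dif\xi .$$
This is the truncated-power function of the six vectors $e_1,e_1,e_2,e_2,(1,1),(1,-1)$. Its polarized transform is $\pm 1/(s^2t^2(s^2-t^2))$, using the same positive-ray polarization as for $\widehat G$. The global sign comes from $\mathrm{i}^{6}$ and the sign of the inverse exponent.

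\emph{Step 1.} Inverting Eq.~\eqref{eq:charfalphabeta} term by term with a common polarization gives
$$q_{\bdlambda}(\alpha,\beta)=\frac{c}{V_4(\bdlambda)}\sum_{\pi\in S_4}\sign(\pi)\,K(\alpha-u_\pi,\beta-v_\pi).$$
Here the constant $c=\pm 3/16$ will be fixed at the end. The translation by $(u_\pi,v_\pi)$ comes from the phase $e^{\mathrm{i}(su_\pi+tv_\pi)}$ together with the $e^{-\mathrm{i}\langle\xi,x\rangle}$ in Eq.~\eqref{eq:inv-f-transform}. This termwise inversion is legitimate because the full alternating sum is an entire function. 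Each polarized term is a tempered distribution, and the polarizations differ only by distributions supported on the lines $s=0$, $t=0$, $s=\pm t$. Those contributions are polynomial in $(\alpha,\beta)$, and they cancel in the alternating sum because $\sum_\pi\sign(\pi)P(u_\pi,v_\pi)=0$ for polynomials $P$ of low degree. Equivalently, the sum is compactly supported, whereas each term has noncompact support. I expect this cancellation and polarization bookkeeping to be the main obstacle.

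\emph{Step 2.} To get a cleaner justification, I would first check directly that the alternating sum of $K$-translates vanishes for large negative and large positive arguments. This should follow from the antisymmetry under permutations that swap $u$ and $v$ roles, together with the degree counting. That check gives compact support, and uniqueness of Fourier inversion for compactly supported distributions then yields Step 1.

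\emph{Step 3.} Apply Eq.~\eqref{eq:plambdaab}. Since $\partial_x\partial_y K=G$,
$$p_{\bdlambda}(a,b)=4ab\,\frac{c}{V_4(\bdlambda)}\sum_\pi\sign(\pi)G(a-u_\pi,b-v_\pi).$$
With $|c|=3/16$ this gives the prefactor $3ab/(4V_4(\bdlambda))$. The sign is fixed by requiring positivity, for example by checking total mass $1$ or positivity near a generic interior point.

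\emph{Step 4.} For the support claim, note that $|\alpha|\leqslant a\leqslant1$ and $|\beta|\leqslant b\leqslant1$, so $q_{\bdlambda}$ vanishes outside $[-1,1]^2$. Hence $p_{\bdlambda}$ vanishes outside $[0,1]^2$. Alternatively, this follows because $u_\pi,v_\pi\in[-1,1]$ and the alternating sum cancels beyond.
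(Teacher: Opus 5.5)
Your proposal is correct and follows essentially the paper's route: term-by-term inversion of the HCIZ characteristic function of Proposition~\ref{prop:jointcharf} under a common polarization, the translation property, and $p_{\bdlambda}=4ab\,\partial_a\partial_b q_{\bdlambda}$. The only difference is that you first invert $\widehat q_{\bdlambda}$ to translates of the double primitive $K$ of $G$ and then differentiate, whereas the paper multiplies by $-st$ first and inverts directly to $G$.

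Two small remarks. First, the sign needs no positivity check: $\widehat K=\mathrm{i}^6/(s^2t^2(s^2-t^2))=-1/(s^2t^2(s^2-t^2))$, and combined with the prefactor $-3/(16V_4(\bdlambda))$ this gives $c=+3/16$ directly. Second, with a common polarization the sum of boundary values equals $\widehat q_{\bdlambda}$ exactly, so no line-supported correction terms arise; such terms would in any case not have inverse transforms that are polynomial in general.
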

The graph of $p_{\bdlambda}(a,b)$ is depicted in the following
Figure~\ref{fig:jointdensity}.
\begin{figure}[h!]\centering
{\begin{minipage}[b]{0.6\linewidth}
\includegraphics[width=1\textwidth]{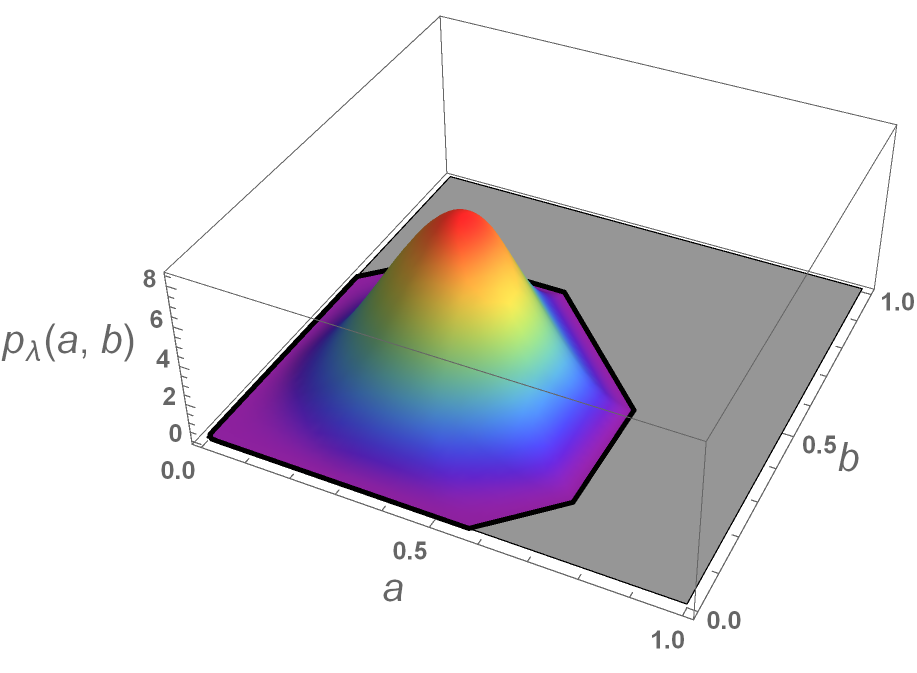}
\end{minipage}}
\caption{Joint two-qubit Bloch-radius density $p_{\bdlambda}(a,b)$
for $\bdlambda=(\frac47,\frac27,\frac17,0)$. Gray color is used
outside the support.}\label{fig:jointdensity}
\end{figure}

\begin{proof}
By Fourier inversion,
\begin{eqnarray*}
q_{\bdlambda}(\alpha,\beta) = \frac1{(2\pi)^2} \int_{\bbR^2}
\widehat q_{\bdlambda}(s,t)e^{-\mathrm{i}(s\alpha+t\beta)} \dif
s\dif t.
\end{eqnarray*}
Therefore,
\begin{eqnarray*}
\partial_a\partial_b q_{\bdlambda}(a,b)
= \cF^{-1} \Br{ (-\mathrm{i}s)(-\mathrm{i}t)\widehat
q_{\bdlambda}(s,t)}(a,b) = \cF^{-1} \Br{-st \widehat
q_{\bdlambda}(s,t)}(a,b).
\end{eqnarray*}
Substituting Proposition~\ref{prop:jointcharf} gives
\begin{eqnarray*}
-st \widehat q_{\bdlambda}(s,t) = \frac{3}{16V_4(\bdlambda)}
\sum_{\pi\in S_4} \sign(\pi)
\frac{e^{\mathrm{i}(su_\pi(\bdlambda)+tv_\pi(\bdlambda))}}{st(s+t)(s-t)}.
\end{eqnarray*}
By the translation property of the Fourier transform,
\begin{eqnarray*}
\partial_a\partial_b q_{\bdlambda}(a,b) = \frac{3}{16V_4(\bdlambda)}\sum_{\pi\in
S_4}\sign(\pi)G(a-u_\pi(\bdlambda),b-v_{\pi}(\bdlambda)).
\end{eqnarray*}
Independent local-unitary invariance implies that, conditional on
the radii $(a,b)$, the two directions are distributed according to
the unique $\SO(3)\times \SO(3)$-invariant probability measure on
$S^2\times S^2$, namely the product of the uniform spherical
measures. Hence
\begin{eqnarray*}
p_{\bdlambda}(a,b) = (4ab)\partial_a\partial_bq_{\bdlambda}(a,b),
\quad a,b>0.
\end{eqnarray*}
It follows that
\begin{eqnarray*}
p_{\bdlambda}(a,b) = \frac{3ab}{4V_4(\bdlambda)}\sum_{\pi\in
S_4}\sign(\pi)G(a-u_\pi(\bdlambda),b-v_{\pi}(\bdlambda))
\end{eqnarray*}
The complete alternating sum is nonnegative because it is obtained
from the push-forward of a probability measure. The density is
vanished outside $[0,1]^2$ is apparent because both $a$ and $b$ fall
in $[0,1]$.
\end{proof}

\begin{remark}
Although Eq.~\eqref{eq:jointdensityab} is an alternating sum, the
complete expression is non-negative. Non-negativity follows from its
construction as the push-forward density of a probability measure.
Individual spline summands need not be non-negative after
translation and alternation.
\end{remark}

\subsection{Support and the two-qubit quantum marginal polytope}

Let
\begin{eqnarray}\label{eq:muAB}
\mu_A=\frac{1-a}2,\quad \mu_B=\frac{1-b}2
\end{eqnarray}
be the smaller eigenvalues of $\rho_A$ and $\rho_B$. Bravyi's
compatibility conditions \cite{Bravyi2004} for a two-qubit state
with global spectrum
$\bdlambda=(\lambda_1,\lambda_2,\lambda_3,\lambda_4)$ are
\begin{eqnarray}\label{eq:Bravyi}
\begin{cases}
\min(\mu_A,\mu_B)\geqslant \lambda_3+\lambda_4,\\
\mu_A+\mu_B\geqslant \lambda_2+\lambda_3+2\lambda_4,\\
\abs{\mu_A-\mu_B}\leqslant
\min(\lambda_1-\lambda_3,\lambda_2-\lambda_4).
\end{cases}
\end{eqnarray}
In Bloch-radius coordinates, these equivalently become the
following. But now we need to derive it from Theorem~\ref{th:2qbit}.
To make the wall-scanning argument rigorous, we need the following
elementary algebraic fact in proving
Corollary~\ref{cor:supp2qubits}.
\begin{lem}[Alternating polynomial cancelation]\label{lem:vanish}
Let $\bdlambda\in C_4\cap\Delta_3$, and for each permutation $\pi\in
S_4$ define
\begin{eqnarray*}
\begin{cases}
u_\pi(\bdlambda)=\lambda_{\pi(1)}+\lambda_{\pi(2)}-\lambda_{\pi(3)}-\lambda_{\pi(4)},\\
v_\pi(\bdlambda)=\lambda_{\pi(1)}-\lambda_{\pi(2)}+\lambda_{\pi(3)}-\lambda_{\pi(4)}.
\end{cases}
\end{eqnarray*}
Then for any polynomial $P(u,v)$ of total degree $<6$,
\begin{eqnarray}\label{eq:vanish}
\sum_{\pi\in S_4}\sign(\pi)P(u_\pi(\bdlambda),v_\pi(\bdlambda))=0.
\end{eqnarray}
\end{lem}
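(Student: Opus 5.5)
The plan is to lift the statement to a formal polynomial identity in four independent variables and then use the classical fact that alternating polynomials are divisible by the Vandermonde product. Let $\bsz=(z_1,z_2,z_3,z_4)$ be indeterminates and set
$$
F(\bsz):=P\Pa{z_1+z_2-z_3-z_4,\; z_1-z_2+z_3-z_4}\in\bbR[z_1,\ldots,z_4].
$$
Both arguments of $P$ are linear forms in $\bsz$, so $F$ has total degree at most $\deg P<6$. Write $\bsz_\pi:=(z_{\pi(1)},\ldots,z_{\pi(4)})$. By the first (trace-free) form of the definitions in the lemma,
$$
P(u_\pi(\bdlambda),v_\pi(\bdlambda))=F(\bsz_\pi)\big|_{\bsz=\bdlambda}.
$$
It therefore suffices to show that the polynomial $\cA F(\bsz):=\sum_{\pi\in S_4}\sign(\pi)F(\bsz_\pi)$ is identically zero; specializing $\bsz=\bdlambda$ then gives Eq.~\eqref{eq:vanish}. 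I deliberately avoid the alternative expressions $2(\lambda_{\pi(1)}+\lambda_{\pi(2)})-1$, since they use the constraint $\sum_j\lambda_j=1$ and would destroy homogeneity in the argument below.

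The second step is to check that $\cA F$ is alternating. Let $\tau\in S_4$ be a transposition $(k\,l)$ acting on the variables by swapping $z_k$ and $z_l$. This action replaces each term $F(\bsz_\pi)$ by $F(\bsz_{\tau\circ\pi})$. Reindexing the sum by $\pi'=\tau\circ\pi$ and using $\sign(\tau\circ\pi)=-\sign(\pi)$ shows that swapping $z_k,z_l$ sends $\cA F$ to $-\cA F$. Consequently $\cA F$ vanishes on each hyperplane $z_k=z_l$, so $(z_k-z_l)$ divides $\cA F$ in $\bbR[\bsz]$.

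The third step combines these divisibilities. The six linear factors $z_k-z_l$ ($k<l$) are pairwise non-associate irreducibles in the UFD $\bbR[\bsz]$. Hence their product $V_4(\bsz)$, which has degree $6$, divides $\cA F$. On the other hand, $\deg\cA F\leqslant\deg F\leqslant\deg P<6$. The only polynomial of degree below $6$ that is divisible by a nonzero polynomial of degree $6$ is the zero polynomial, so $\cA F\equiv0$, and the lemma follows by specialization.

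I do not expect a genuine obstacle here. The only point requiring care is the bookkeeping in the reindexing step, namely confirming that the variable swap composes with $\pi$ on the correct side so that the sign flips as claimed. Either side works, because $\sign$ is a homomorphism and $\tau\circ\pi$ and $\pi\circ\tau$ both range over $S_4$ as $\pi$ does. The other point is the remark above: the identity must be proved for free variables before imposing the trace constraint.
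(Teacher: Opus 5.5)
Your proposal is correct and is essentially the paper's own argument: treat the signed sum as a polynomial in the eigenvalues, observe that it is alternating, use divisibility by the degree-$6$ Vandermonde, and conclude by a degree count. You simply spell out the formal-variable and UFD steps more carefully. Your caution about the affine forms $2(\lambda_{\pi(1)}+\lambda_{\pi(2)})-1$ is unnecessary, since the argument never uses homogeneity and works verbatim with those forms as well.
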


\begin{proof}
The quantities $u_\pi(\bdlambda)$ and $v_\pi(\bdlambda)$ are affine
linear combinations of the four $\lambda_i$'s. Hence
$P(u_\pi(\bdlambda),v_\pi(\bdlambda))$, viewed as a function of
$\bdlambda=(\lambda_1,\ldots,\lambda_4)$, is a polynomial whose
total degree equals the total degree of $P$. Multiplying by
$\sign(\pi)$ and summing over $\pi$ makes the expression alternating
with respect to permutations of the $\lambda_i$'s, because replacing
$\bdlambda$ by
$\bdlambda_{\tau}=(\lambda_{\tau(1)},\ldots,\lambda_{\tau(4)})$
simply relabels the permutations. Every alternating polynomial is
divisible by the Vandermonde
$$
V_4(\bdlambda)=\prod_{1\leqslant i<j\leqslant
4}(\lambda_i-\lambda_j),
$$
which has degree $6$ (see, e.g., \cite[Proposition
10.21]{Hall2015}). Thus, if the total degree of $P$ is less than
$6$, the alternating sum must be the zero polynomial.
\end{proof}

\begin{cor}[Support of the joint Bloch radii density]\label{cor:supp2qubits}
The support of $p_{\bdlambda}(a,b)$ in Eq.~\eqref{eq:jointdensityab}
is the compact region
\begin{eqnarray}\label{eq:supportforqubits}
\op{supp}(p_{\bdlambda}) = \Set{(a,b)\in[0,1]^2:
\begin{cases}
\max(a,b)\leqslant 2(\lambda_1+\lambda_2)-1,\\
a+b\leqslant2(\lambda_1-\lambda_4),\\
\abs{a-b}\leqslant2\min(\lambda_1-\lambda_3,\lambda_2-\lambda_4)
\end{cases}}.
\end{eqnarray}
\end{cor}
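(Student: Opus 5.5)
We will read off the support directly from the alternating-spline formula of Theorem~\ref{th:2qbit} by a wall-scanning argument. First we show that the density vanishes outside the region in Eq.~\eqref{eq:supportforqubits}. Then we show that it is strictly positive inside.

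The first step is to translate the region. Substituting $\mu_A=(1-a)/2$ and $\mu_B=(1-b)/2$ into Bravyi's inequalities Eq.~\eqref{eq:Bravyi} and using $\sum_j\lambda_j=1$ turns each inequality into one of the three lines of Eq.~\eqref{eq:supportforqubits}:
\begin{itemize}
\item $\min(\mu_A,\mu_B)\geqslant\lambda_3+\lambda_4$ becomes $\max(a,b)\leqslant 1-2(\lambda_3+\lambda_4)=2(\lambda_1+\lambda_2)-1$;
\item $\mu_A+\mu_B\geqslant\lambda_2+\lambda_3+2\lambda_4$ becomes $a+b\leqslant 2-2(\lambda_2+\lambda_3+2\lambda_4)=2(\lambda_1-\lambda_4)$;
\item the difference condition scales by $2$.
\end{itemize}
This identifies the claimed region with the Bravyi--Klyachko polytope. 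Each bounding line is of the form $a=u_\pi$, $b=v_\pi$, $a+b=u_\pi+v_\pi$ or $a-b=u_\pi-v_\pi$ for suitable $\pi$. These are exactly the walls of the translated chambers of $G$, since $G$ is polynomial off the lines $x=0$, $y=0$, $y=\pm x$ by Eq.~\eqref{eq:Gexplicita}.

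\textbf{Vanishing outside.} Take a point $(a,b)$ violating one of the inequalities. We check that every translate $G(a-u_\pi,b-v_\pi)$ lies on the same side of all walls that separate $(a,b)$ from the extreme translation points $(u_\pi,v_\pi)$.

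Consider first a point beyond all walls, for instance $a>\max_\pi u_\pi=2(\lambda_1+\lambda_2)-1$ together with $b$ large. There every translate is evaluated in a fixed chamber of $G$, with $x=a-u_\pi\geqslant0$ throughout. The sum is then $\sum_\pi\sign(\pi)P(a-u_\pi,b-v_\pi)$ for a single quadratic polynomial $P$. Expanding in $(u_\pi,v_\pi)$ gives coefficients that are polynomials of degree at most $2<6$ in $(u,v)$, so Lemma~\ref{lem:vanish} kills the sum.

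For a point violating only, say, $a+b\leqslant2(\lambda_1-\lambda_4)$, we use a different approach. Write $G$ via Eq.~\eqref{eq:Gexplicitb} as a combination of truncated powers $(\,\cdot\,)_+^2$ in the linear forms $y\pm x$ and $y$, with an $(x)_+$ cutoff. On the far side of the wall, the only translates whose wall-status differs from the generic chamber are those with $u_\pi+v_\pi$ maximal, namely $u_\pi+v_\pi=2(\lambda_{\pi(1)}\cdot2+\ldots)$, i.e.\ $4\lambda_{\pi(1)}-2+\ldots$. We then pair permutations differing by a transposition that fixes the relevant linear form, such as swapping $\pi(2)$ and $\pi(3)$ when the form is $u+v=2(2\lambda_{\pi(1)}+\lambda_{\pi(2)}+\lambda_{\pi(3)})-2$. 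Paired terms share the same truncated-power factor and carry opposite signs, so they cancel. What remains are polynomial contributions, and these vanish by Lemma~\ref{lem:vanish} as before.

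The same pairing trick handles the other walls:
\begin{itemize}
\item the $a$-wall via swapping $\pi(1)\leftrightarrow\pi(2)$;
\item the $b$-wall via swapping $\pi(1)\leftrightarrow\pi(3)$;
\item the $a-b$ wall via swapping $\pi(1)\leftrightarrow\pi(4)$.
\end{itemize}

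\textbf{Positivity inside.} The density is nonnegative. By Kirwan convexity, the image of $\Phi$ is exactly the Bravyi polytope, and the push-forward of the Liouville measure of a regular orbit has a density that is positive on the interior of its moment polytope, because Duistermaat--Heckman densities are log-concave. Hence the closed support equals the polytope. Alternatively, positivity can be checked directly near one interior point and propagated chamber by chamber.

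\textbf{Main obstacle.} The hardest part is the bookkeeping of the vanishing step. For each wall we must identify precisely which translates $(u_\pi,v_\pi)$ lie on the boundary line and confirm that the pairing involution preserves that linear form while reversing sign. We must also handle the extra $(x)_+$ cutoff in Eq.~\eqref{eq:Gexplicitb}, which introduces the wall $a=u_\pi$ in combination with the others. We would first try to organise this through the truncated-power representation Eq.~\eqref{eq:Grep}. In that representation the sum becomes a signed sum of cone indicators convolved, and each vanishing region corresponds to an empty intersection of cones.
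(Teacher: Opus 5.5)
\textbf{The vanishing step does not work as described.} Write $\Upsilon_{\bdlambda}(a,b)=\sum_{\pi\in S_4}\sign(\pi)G(a-u_\pi,b-v_\pi)$. By Eq.~\eqref{eq:gxysupp},
\[
4\Upsilon_{\bdlambda}(a,b)=\sum_{\pi\in S_4}\sign(\pi)\,\chi_{\{a\geqslant u_\pi\}}\Big[(a+b-u_\pi-v_\pi)_+^2-2(b-v_\pi)_+^2+(b-a+u_\pi-v_\pi)_+^2\Big],
\]
so every term carries the cutoff $\chi_{\{a\geqslant u_\pi\}}$ on top of a hinge in $u+v$, $v$ or $u-v$.

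Your involutions $\pi(2)\leftrightarrow\pi(3)$, $\pi(1)\leftrightarrow\pi(3)$, $\pi(1)\leftrightarrow\pi(4)$ preserve these three forms, but they change $u_\pi$. No sign-reversing pairing can fix both $u_\pi$ and $v_\pi$, because for generic $\bdlambda$ the knot $(u_\pi,v_\pi)$ determines $\pi$. So paired terms do not share the same factor. What is left is a signed sum over a proper subset of $S_4$, and Lemma~\ref{lem:vanish} does not apply to it.

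Concretely, take the exterior region $\tilde\beta<a,b<\tilde\alpha$, $a+b>\tilde\alpha+\tilde\beta$. There each of the three cut-off hinge sums $\sum_\pi\sign(\pi)\chi_{\{a\geqslant u_\pi\}}(\,\cdot\,)_+^2$ equals $2(\tilde\gamma^2-\tilde\beta^2)\neq0$. $\Upsilon_{\bdlambda}$ vanishes only because the three sums enter with weights $1,-2,1$. It is also false there that only the knots of maximal $u_\pi+v_\pi$ change status: the four knots with $u_\pi=\tilde\alpha$ are cut off, and those with $v_\pi=\tilde\alpha$ have $(b-v_\pi)_+=0$.

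The $\abs{a-b}$ wall, which you dispatch with ``the same pairing trick'', is the delicate one. The boundary is not at the extreme level $u-v=\tilde\alpha+\tilde\beta$, nor at $\tilde\alpha+\abs{\tilde\gamma}$ (for $\tilde\gamma\neq0$), but at $\tilde\alpha-\abs{\tilde\gamma}$. Seeing this requires checking that the walls carried by uncancelled mismatches, such as $(\chi_{\{a\geqslant\tilde\alpha\}}-\chi_{\{a\geqslant\tilde\beta\}})(b-a+\tilde\alpha+\tilde\beta)_+^2$, fall in $b<0$. That check is the heart of the paper's wall scan.

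What your involutions do give correctly is that each hinge family, summed over all of $S_4$, vanishes identically. Hence
\[
4\Upsilon_{\bdlambda}(a,b)=-\sum_{\pi:\,u_\pi>a}\sign(\pi)\Big[(a+b-u_\pi-v_\pi)_+^2-2(b-v_\pi)_+^2+(b-a+u_\pi-v_\pi)_+^2\Big].
\]
This reduces the exterior vanishing to finitely many explicit partial sums; for $\tilde\beta<a<\tilde\alpha$ only the four knots of Eq.~\eqref{eq:foursignedknots} remain. Those partial sums still have to be computed.

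\textbf{Your positivity step is a different argument, and it makes the rest redundant.} Kirwan convexity gives only convexity of the image. That the image of $\bsU\mapsto(a(\bsU),b(\bsU))$ is exactly the region in Eq.~\eqref{eq:supportforqubits} is Bravyi's theorem; your translation of Eq.~\eqref{eq:Bravyi} into Bloch radii is correct. Once you invoke Bravyi's theorem you are done. As in the first step of the paper's proof, Haar measure has full support and the map is continuous, so the support of the push-forward is exactly the image.

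On that route neither the wall scan nor log-concavity (a deep K\"ahler-setting result) is needed. ``Propagating positivity chamber by chamber'' would not be an argument anyway, since a nonnegative spline can switch to an identically zero piece across a wall.

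This route is valid for the corollary as stated, but it assumes the compatibility inequalities. The paper's proof is built to recover them from the spline formula of Theorem~\ref{th:2qbit}. If that is your aim, positivity inside must also be established without Bravyi.
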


\begin{proof}
Let $\Psi:\bsU(4)\to\bbR^2_{\geqslant0}$ be the continuous map
defined by
$$
\Psi(\bsU)=(a(\bsU),b(\bsU)),
$$
where $a(\bsU)$ and $b(\bsU)$ are the Bloch radii of the two
marginals of $\bsU\Lambda\bsU^\dagger$. Since
$\rho_{AB}=\bsU\Lambda\bsU^\dagger$ with $\bsU\sim \mu_{\haar}$ on
$\sfU(4)$, the push-forward measure is precisely
$\Psi_*\mu_{\haar}$. Because Haar measure has full support on the
compact connected Lie group $\sfU(4)$ \cite{Hall2015}, and $\Psi$ is
continuous, we have
$$
\op{supp}(\Psi_*\mu_{\haar})=\Psi(\sfU(4))(\subset[0,1]^2),
$$
which is precisely the image of the unitary orbit under the marginal
map in Bloch-radius coordinates. Indeed, for any open set
$O\subset\bbR^2$ with $O\cap\Psi(\sfU(4))\neq\emptyset$, there
exists $\bsU_0\in\sfU(4)$ with $\Psi(\bsU_0)\in O$. By continuity,
$\Psi^{-1}(O)$ is an open neighborhood of $\bsU_0$, hence has
positive Haar measure. Conversely, if
$O\cap\Psi(\sfU(4))=\emptyset$, then $\Psi^{-1}(O)=\emptyset$ and
the measure is zero on $O$.

Let $\Upsilon_{\bdlambda}(a,b):=\sum_{\pi\in
S_4}\sign(\pi)G(a-u_\pi(\bdlambda),b-v_\pi(\bdlambda))$. Thus
$\op{supp}(\Upsilon_{\bdlambda})=\op{supp}(p_{\bdlambda})$ since
$V_4(\bdlambda)>0$ for a strictly ordered spectrum, the closed
support of $p_{\bdlambda}$ is obtained from the closed support of
$\Upsilon_{\bdlambda}$ in the first quadrant, taking into account
that the factor $ab$ only makes the density vanish point-wise on the
axes and does not remove those axes from the closed support.

A direct check in the four regions defining $G$ gives
\begin{eqnarray}\label{eq:gxysupp}
G(x,y)=\frac{\chi_{\set{x\geqslant0}}}4\Br{(x+y)^2_+ - 2y^2_+ +
(y-x)^2_+},
\end{eqnarray}
whose support is the cone $\op{supp}(G) =
\Set{(x,y)\in\bbR^2:x\geqslant0,x+y\geqslant0}$. Define
\begin{eqnarray*}
\begin{cases}
\tilde\alpha&:=2(\lambda_1+\lambda_2)-1,\\
\tilde\beta&:=2(\lambda_1+\lambda_3)-1,\\
\tilde\gamma&:=2(\lambda_1+\lambda_4)-1.
\end{cases}
\end{eqnarray*}
It is easily seen that
$$
1>\tilde\alpha>\tilde\beta>\abs{\tilde\gamma}\geqslant0.
$$
The essential point is that we need not only the $24$ points but
also their signs. Let $\cK$ denote the set of $24$ signed knots, and
let $\varepsilon(u,v)\in\set{\pm1}$ be the displayed sign, see
Appendix~\ref{app:lsit24pts}. Using Eq.~\eqref{eq:gxysupp}, the
alternating sum becomes the completely explicit expression
\begin{eqnarray}\label{eq:Flambdaab}
4\Upsilon_{\bdlambda}(a,b)=\sum_{(u,v)\in
\cK}\varepsilon(u,v)\chi_{\set{a\geqslant u}}\Br{(a+b-u-v)^2_+ -
2(b-v)^2_+ + (b-a+u-v)^2_+}.
\end{eqnarray}
Everything about the support can now be read from the cancelations
among these quadratic hinge functions. Indeed, on each chamber cut
out by the lines
\begin{eqnarray}\label{eq:walls}
a=u,\quad b=v,\quad a+b=u+v,\quad a-b=u-v.
\end{eqnarray}
the argument $(a-u,b-v)$ lies in a fixed region of the piecewise
definition of the bivariate box-spline $G$, so that $G(a-u,b-v)$
reduces to a quadratic polynomial in the variables $u$ and $v$ (with
coefficients depending on $a,b$). Consequently, for any chamber that
avoids the walls, the alternating sum satisfies the hypothesis of
Lemma~\ref{lem:vanish} and hence vanishes identically. Therefore the
support of $\Upsilon_{\bdlambda}$---and thus of
$p_{\bdlambda}$---can only be non-zero after crossing one of those
walls.

The following wall-scanning procedure systematically identifies the
first walls where the signed cancelation fails; these walls are
precisely the boundaries of the Bravyi-Klyachko polytope. Starting
from the exterior region---where the density vanishes identically
because the alternating sum of translated splines cancels completely
(Lemma~\ref{lem:vanish})---one moves inward across the candidate
walls Eq.~\eqref{eq:walls}. The support boundary is reached at the
first wall (in the inward direction) where the cancellation is no
longer complete; that is, where at least one translated spline term
changes its polynomial branch, thereby breaking the alternating
cancelation. Scanning all possible directions in this manner yields
exactly the inequalities Eq.~\eqref{eq:supportforqubits}. The
computing details can be described as follows.
\begin{itemize}
\item \textbf{The upper bound $a\leqslant\tilde\alpha=2(\lambda_1+\lambda_2)-1$.} The largest first coordinate among all knots is $\tilde\alpha$.
The knots with $u=\tilde\alpha$, including their signs, are
\begin{eqnarray}\label{eq:foursignedknots}
(\tilde\alpha,\tilde\beta)^+,\quad
(\tilde\alpha,-\tilde\beta)^+,\quad
(\tilde\alpha,\tilde\gamma)^-,\quad (\tilde\alpha,-\tilde\gamma)^-.
\end{eqnarray}
For $a>\tilde\alpha$, all factors $\chi_{\set{a\geqslant u}}$ in
Eq.~\eqref{eq:Flambdaab} are active. On every chamber determined by
the remaining hinge walls, the expression is then an alternating sum
of quadratic polynomials. By Eq.~\eqref{eq:vanish}, the polynomial
coefficients cancel.

Scanning the $b$-walls in decreasing order gives zero in every
chamber with $a>\tilde\alpha$. Thus
$$
\Upsilon_{\bdlambda}(a,b)=0\quad\text{for
}a>\tilde\alpha,b\geqslant0.
$$
At $a=\tilde\alpha$, the four knots in
Eq.~\eqref{eq:foursignedknots} are precisely the final knots whose
indicator functions change. On the side $a<\tilde\alpha$, the
cancelation is no longer complete. Therefore the vertical support
boundary is $a=\tilde\alpha$. The analogous wall scan in the second
coordinate gives
$$
\Upsilon_{\bdlambda}(a,b)=0\quad\text{for }a\geqslant0,
b>\tilde\alpha.
$$
and the horizontal support boundary is $b=\tilde\alpha$. Hence, in
the first quadrant,
$$
a\leqslant\tilde\alpha,\quad
b\leqslant\tilde\alpha\Longleftrightarrow \max(a,b)\leqslant
\tilde\alpha=2(\lambda_1+\lambda_2)-1.
$$
In other words, $\Upsilon_{\bdlambda}(a,b)=0$ outside the square
$[0,\tilde\alpha]^2(\subset[0,1]^2)$ in the first quadrant.
\item \textbf{The upper bound $a+b\leqslant\tilde\alpha+\tilde\beta=2(\lambda_1-\lambda_4)$.} The first hinge in Eq.~\eqref{eq:Flambdaab},
$(a+b-u-v)^2_+$, changes branch on the lines $a+b=u+v$. The largest
possible value of $u+v$ is $\tilde\alpha+\tilde\beta$. The two knots
at this level are $(\tilde\alpha,\tilde\beta)^+$ and
$(\tilde\beta,\tilde\alpha)^-$. Their contribution to the first
hinge is $\Pa{\chi_{\set{a\geqslant\tilde\alpha}} -
\chi_{\set{a\geqslant\tilde\beta}}}(a+b-\tilde\alpha-\tilde\beta)^2_+$.
In the relevant strip $\tilde\beta<a<\tilde\alpha$, the first
indicator is zero and the second is one. Thus the cancelation
changes exactly across $a+b=\tilde\alpha+\tilde\beta$. When
$a+b>\tilde\alpha+\tilde\beta$, the contributions from all three
hinge families in Eq.~\eqref{eq:Flambdaab} cancel after the signed
knots are collected; when one crosses into
$a+b<\tilde\alpha+\tilde\beta$ with $\tilde\beta<a<\tilde\alpha$,
the cancelation fails. Therefore
\begin{eqnarray*}
\Upsilon_{\bdlambda}(a,b)=0 \quad\text{if
}a+b>\tilde\alpha+\tilde\beta,
\end{eqnarray*}
and the oblique upper support boundary is
$a+b=\tilde\alpha+\tilde\beta= 2(\lambda_1-\lambda_4)$. Thus this
inequality will later become
\begin{eqnarray*}
a+b\leqslant 2(\lambda_1-\lambda_4).
\end{eqnarray*}
\item \textbf{The upper bound $\abs{a-b}\leqslant
\tilde\alpha-\abs{\tilde\gamma}=2\min(\lambda_1-\lambda_3,\lambda_2-\lambda_4)$.}
The last hinge in Eq.~\eqref{eq:Flambdaab}, $(b-a+u-v)^2_+$, changes
branch on $a-b=u-v$. At first sight, the largest value of $u-v$ is
$\tilde\alpha+\tilde\beta$, but the corresponding walls cancel. This
is the important point that the convex hull of the $24$ knots alone
does not detect.

We scan the positive $a-b$ direction.
\begin{enumerate}[(i)]
\item \textbf{The apparent outer wall $a-b=\tilde\alpha+\tilde\beta$.} The knots
with $u-v=\tilde\alpha+\tilde\beta$ are
$(\tilde\alpha,-\tilde\beta)^+$ and $(\tilde\beta,-\tilde\alpha)^-$.
Their contribution to the last hinge is
$\Pa{\chi_{\set{a\geqslant\tilde\alpha}} -
\chi_{\set{a\geqslant\tilde\beta}}}(b-a+\tilde\alpha+\tilde\beta)^2_+$.
If $a\geqslant\tilde\alpha$, the two terms cancel. If
$\tilde\beta\leqslant a<\tilde\alpha$, then near the line
$a-b=\tilde\alpha+\tilde\beta$,
$b=a-(\tilde\alpha+\tilde\beta)=(a-\tilde\alpha)-\tilde\beta<-\tilde\beta<0$.
Thus the uncanceled part of this wall lies entirely outside the
first quadrant. Hence $a-b=\tilde\alpha+\tilde\beta$ is not a
support boundary in $\bbR^2_{\geqslant0}$.
\item \textbf{The next apparent wall.} (1) First suppose that
$\tilde\gamma\geqslant0$. The next positive level is
$u-v=\tilde\alpha+\tilde\gamma$. It comes from
$(\tilde\alpha,-\tilde\gamma)^-$ and
$(\tilde\gamma,-\tilde\alpha)^+$. Their last-hinge contribution is
$\Pa{-\chi_{\set{a\geqslant\tilde\alpha}} +
\chi_{\set{a\geqslant\tilde\gamma}}}(b-a+\tilde\alpha+\tilde\gamma)^2_+$.
For $\tilde\gamma\leqslant a<\tilde\alpha$, the second term remains.
But on the corresponding wall,
$b=a-(\tilde\alpha+\tilde\gamma)=(a-\tilde\alpha)-\tilde\gamma<-\tilde\gamma\leqslant0$.
Except possibly at an endpoint, this wall is again outside the first
quadrant. Therefore it does not bound the support in the first
quadrant. (2) If $\tilde\gamma<0$, the analogous canceled outer
level is
$\tilde\alpha-\tilde\gamma=\tilde\alpha+\abs{\tilde\gamma}$. It
comes from $(\tilde\alpha,\tilde\gamma)^-$ and
$(-\tilde\gamma,-\tilde\alpha)^+$, and the same argument shows that
its uncanceled portion lies outside $b\geqslant0$. Thus, in either
case, the wall $a-b=\tilde\alpha+\abs{\tilde\gamma}$ does not
contribute to the first-quadrant support.
\item \textbf{The first wall that survives in the first quadrant.}
(1) Again suppose first that $\tilde\gamma\geqslant0$. At the next
level, $u-v=\tilde\alpha-\tilde\gamma$, we have the pair
$(\tilde\alpha,\tilde\gamma)^-$ and
$(-\tilde\gamma,-\tilde\alpha)^+$. Their contribution is
$\Pa{-\chi_{\set{a\geqslant\tilde\alpha}} +
\chi_{\set{a\geqslant-\tilde\gamma}}}
(b-a+\tilde\alpha-\tilde\gamma)^2_+$. In the first quadrant,
$a\geqslant0\geqslant-\tilde\gamma$, so the second indicator is
already active. For $a<\tilde\alpha$, the first indicator is
inactive. Therefore this hinge genuinely survives. The wall is
$a-b=\tilde\alpha-\tilde\gamma$. Unlike the previous walls, it meets
the first quadrant in the segment
$$
\tilde\alpha-\tilde\gamma\leqslant a\leqslant\tilde\alpha, \quad
0\leqslant b\leqslant\tilde\gamma.
$$
Thus this is a genuine support boundary. (2) If $\tilde\gamma<0$,
the same analysis uses the pair $(\tilde\alpha,-\tilde\gamma)^-$ and
$(\tilde\gamma,-\tilde\alpha)^+$, and the surviving wall is
$a-b=\tilde\alpha+\tilde\gamma=\tilde\alpha-\abs{\tilde\gamma}$.
Both cases can therefore be written as
$a-b\leqslant\tilde\alpha-\abs{\tilde\gamma}$. Performing the
corresponding wall scan in the opposite diagonal direction yields
$b-a\leqslant\tilde\alpha-\abs{\tilde\gamma}$. Together,
\begin{eqnarray*}
\abs{a-b}\leqslant\tilde\alpha-\abs{\tilde\gamma}=2\min(\lambda_1-\lambda_3,\lambda_2-\lambda_4).
\end{eqnarray*}
This is precisely the extra cancelation condition that does not
follow from the convex hull of the $24$ knots.
\end{enumerate}
\end{itemize}
Note that Eq.~\eqref{eq:Flambdaab} divides the first quadrant into
finitely many chambers bounded by lines of the four types
$$
a=u,\quad b=v,\quad a+b=u+v,\quad a-b=u-v,
$$
where $(u,v)$ runs over the signed knot list, see Table~\ref{tab:2}
in Appendix~\ref{app:lsit24pts}. On each chamber, every indicator
and positive-part function in Eq.~\eqref{eq:Flambdaab} has a fixed
branch. Hence $\Upsilon_{\bdlambda}$ is a quadratic polynomial
there.

Starting from the exterior chambers, where the
alternating-polynomial cancelation Eq.~\eqref{eq:vanish} gives zero,
the preceding wall scan gives the first uncanceled walls:
\begin{eqnarray*}
a=\tilde\alpha,\quad b=\tilde\alpha,\quad
a+b=\tilde\alpha+\tilde\beta,\quad
a-b=\tilde\alpha-\abs{\tilde\gamma}, \quad
b-a=\tilde\alpha-\abs{\tilde\gamma}.
\end{eqnarray*}
Continuing the same finite collection through the internal walls
changes the quadratic formula for $\Upsilon_{\bdlambda}$, but it
does not make that polynomial identically zero on any chamber
satisfying all five strict inequalities. Thus $\Upsilon_{\bdlambda}
\not\equiv0$ on every open chamber contained in $0<a<\tilde\alpha$
and $0<b<\tilde\alpha$,
$$
a+b<\tilde\alpha+\tilde\beta, \quad
\abs{a-b}<\tilde\alpha-\abs{\tilde\gamma}.
$$
Since $p_{\bdlambda}$ is a nonnegative density and
$$
p_{\bdlambda}(a,b) = \frac{3ab}{4V_4(\bdlambda)}
\Upsilon_{\bdlambda}(a,b),
$$
it follows that every such chamber contains points of positive
density, which accounts for the absence of further open zero
regions. Taking closures yields the support, and the direct
cancelation calculation consequently establishes the desired result.
\end{proof}
Thus the support of the probabilistic distribution recovers the
deterministic two-qubit marginal polytope, while the density
Eq.~\eqref{eq:jointdensityab} gives the probability weight inside
that polytope. This conclusion has been obtained from the translated
truncated-power formula for $G$ and the signed $S_4$-sum, without
assuming marginal compatibility inequalities.
\begin{remark}
The support of $p_{\bdlambda}$ is the convex polygon with vertices,
in counterclockwise order,
$$
(0,0),(\tilde\alpha-\abs{\tilde\gamma},0),(\tilde\alpha,\abs{\tilde\gamma}),(\tilde\alpha,\tilde\beta),(\tilde\beta,\tilde\alpha),
(\abs{\tilde\gamma},\tilde\alpha),(0,\tilde\alpha-\abs{\tilde\gamma}).
$$
\begin{itemize}
\item For $\tilde\gamma\neq0$, these are seven distinct vertices, so the
support is a heptagon:
$$
\op{supp}(p_{\bdlambda})=\op{Conv}\Set{(0,0),(\tilde\alpha-\abs{\tilde\gamma},0),(\tilde\alpha,\abs{\tilde\gamma}),(\tilde\alpha,\tilde\beta),(\tilde\beta,\tilde\alpha),
(\abs{\tilde\gamma},\tilde\alpha),(0,\tilde\alpha-\abs{\tilde\gamma})}.
$$
\item For $\tilde\gamma=0$, i.e.,
$\lambda_1+\lambda_4=\lambda_2+\lambda_3=\frac12$, the support
reduces to the pentagon
$$
\op{supp}(p_{\bdlambda})=\Set{(0,0),
(\tilde\alpha,0),(\tilde\alpha,\tilde\beta),(\tilde\beta,\tilde\alpha),(0,\tilde\alpha)}.
$$
\end{itemize}
\end{remark}

\subsection{Distribution of one two-qubit marginal}

A compact truncated-power formula can also be obtained for one
marginal. For a two-element subset
$J=\set{(i,j):i<j}\subset\set{1,2,3,4}$, define
\begin{eqnarray}\label{eq:VJ}
V_J(\bdlambda)=\lambda_i-\lambda_j,
\end{eqnarray}
and let $J^c$ be its complement. Set
\begin{eqnarray}\label{eq:uJ}
w_J(\bdlambda)&=&2\sum_{j\in J}\lambda_j-1,\\
\varepsilon_J&=&(-1)^{i+j+1}.
\end{eqnarray}
Let $q^A_{\bdlambda}(\alpha)$ denote the density of the fixed Bloch
component $\alpha=\Tr{\rho_A\sigma_3}$.
\begin{prop}
The characteristic function and density of $\alpha$ are
\begin{eqnarray}
\widehat q^A_{\bdlambda}(s) &=&
\frac3{4V_4(\bdlambda)}\sum_{\abs{J}=2}\varepsilon_J
V_J(\bdlambda)V_{J^c}(\bdlambda)\frac{e^{\mathrm{i}sw_J(\bdlambda)}}{(\mathrm{i}s)^4},\label{eq:widehatq}\\
q^A_{\bdlambda}(\alpha) &=&
\frac1{8V_4(\bdlambda)}\sum_{\abs{J}=2}\varepsilon_J
V_J(\bdlambda)V_{J^c}(\bdlambda)(w_J(\bdlambda)-\alpha)^3_+.\label{eq:qAalpha}
\end{eqnarray}
\end{prop}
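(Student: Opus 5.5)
The plan is to specialise Proposition~\ref{prop:jointcharf} to $t=0$. Since $\alpha=\Tr{(\sigma_3\ot\I_2)\rho_{AB}}$, we have $\widehat q^A_{\bdlambda}(s)=\widehat q_{\bdlambda}(s,0)$, and the singularity at $t=0$ in Eq.~\eqref{eq:charfalphabeta} is removable. Equivalently, one could use the confluent formula Eq.~\eqref{eq:varphiaxy} with $\bsh_A=(s,s,-s,-s)$ and the confluent identity Eq.~\eqref{eq:confid} applied to two blocks of size $2$. I prefer the limit route because it reuses the already expanded $S_4$-sum.

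\textbf{Step 1: the limit $t\to0$.} Expand $e^{\mathrm{i}tv_\pi}=1+\mathrm{i}tv_\pi-\tfrac12t^2v_\pi^2+O(t^3)$ in the numerator $\sum_\pi\sign(\pi)e^{\mathrm{i}(su_\pi+tv_\pi)}$. The denominator is $s^2t^2(s^2-t^2)\sim s^4t^2$, so the order-$0$ and order-$1$ terms in $t$ must cancel. They do, by the involution $\pi\mapsto\pi\circ(3\,4)$: it flips $\sign(\pi)$, preserves $u_\pi$, and maps $v_\pi\mapsto\lambda_{\pi(1)}-\lambda_{\pi(2)}-\lambda_{\pi(3)}+\lambda_{\pi(4)}$. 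Pairing with $\pi\circ(1\,2)$ as well kills both orders. Hence
\[
\widehat q^A_{\bdlambda}(s)=\frac{3}{32V_4(\bdlambda)}\sum_{\pi}\sign(\pi)\,v_\pi^2\,\frac{e^{\mathrm{i}su_\pi}}{s^4}.
\]

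\textbf{Step 2: grouping by $J$.} Group the permutations by the unordered set $J=\set{\pi(1),\pi(2)}$. On each group $u_\pi=w_J(\bdlambda)$. Write $J=\set{i<j}$ and $J^c=\set{k<l}$, and let $\pi_0=(i,j,k,l)$ be the base permutation. Summing over the four orderings inside $J$ and $J^c$ with signs antisymmetrises
\[
v^2=\bigl[(\lambda_{\pi(1)}-\lambda_{\pi(2)})+(\lambda_{\pi(3)}-\lambda_{\pi(4)})\bigr]^2
\]
separately in the two pairs. Only the cross term survives, and the group sum equals $8\,\sign(\pi_0)\,V_J(\bdlambda)V_{J^c}(\bdlambda)$. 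I expect $\sign(\pi_0)$ to equal $(-1)^{i+j+1}=\varepsilon_J$: moving $i,j$ to the front costs $(i-1)+(j-2)$ transpositions. Using $s^4=(\mathrm{i}s)^4$, the constant becomes $\tfrac{3}{32}\cdot8=\tfrac34$, which gives Eq.~\eqref{eq:widehatq}.

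\textbf{Step 3: Fourier inversion.} Under the convention Eq.~\eqref{eq:f-transform}, the function $\tfrac1{3!}(w-\alpha)^3_+$ has distributional Fourier transform $e^{\mathrm{i}sw}/(\mathrm{i}s)^4$, with the polarization matching the positive-ray representation used for $G$. Term-by-term inversion then gives $\tfrac34\cdot\tfrac16=\tfrac18$, which is Eq.~\eqref{eq:qAalpha}.

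\textbf{Main obstacle.} The delicate point is justifying termwise inversion of individually singular summands. I would handle it as in the proof of Theorem~\ref{th:2qbit}: choose one common polarization, $s\mapsto s-\mathrm{i}0$, for all terms. Then check that the resulting sum of cubic hinges has compact support, so that it coincides with the genuine density. For $\alpha<\min_J w_J$ every hinge is active, and the sum is the polynomial $\sum_J\varepsilon_JV_JV_{J^c}(w_J-\alpha)^3$. Its vanishing requires $\sum_J\varepsilon_JV_JV_{J^c}w_J^k=0$ for $k\le3$. This follows from the alternating-sum argument of Lemma~\ref{lem:vanish}: these sums are the $S_4$-antisymmetrisations of $v_\pi^2u_\pi^k$, of degree $k+2<6$. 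The sign bookkeeping for $\varepsilon_J$ is the other place where care is needed.
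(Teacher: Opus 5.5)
Your proposal is correct, and the constants and signs check out. On each orbit of $\langle(1\,2),(3\,4)\rangle$ the order-$t^0$ and order-$t^1$ terms cancel, and $\sum\sign(\pi)v_\pi^2=8\,\sign(\pi_0)V_J(\bdlambda)V_{J^c}(\bdlambda)$. Moreover $\sign(\pi_0)=(-1)^{i+j-3}=\varepsilon_J$, $\tfrac3{32}\cdot8=\tfrac34$ and $\tfrac34\cdot\tfrac16=\tfrac18$.

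Your route to the characteristic function differs from the paper's. The paper does not specialize Proposition~\ref{prop:jointcharf}. It applies the double-confluent limit directly to the $\sfU(4)$ HCIZ integral at $\bsh_A=(s,s,-s,-s)$. As it spells out in the qubit-qutrit analogue, this produces a determinant with rows $e^{\pm\mathrm{i}s\lambda_j},\lambda_je^{\pm\mathrm{i}s\lambda_j}$ over the cross-block factor $(2s)^4$. A Laplace expansion along the first block then gives the six terms at once, and the cofactor sign $(-1)^{1+2+i+j}$ is exactly $\varepsilon_J$. Your grouping of $S_4$ by $\set{\pi(1),\pi(2)}$ is the permutation-level form of that Laplace expansion, and extracting the $t^2$ coefficient plays the role of the derivative rows.

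What each buys:
\begin{itemize}
\item Your version needs no new confluent computation, since it reuses the already expanded $24$-term sum. In exchange it depends on Proposition~\ref{prop:jointcharf}.
\item The paper's version is self-contained. It is also the one that scales to the triple blocks $(s,s,s,-s,-s,-s)$ of the qubit-qutrit case, where no joint formula is available to specialize.
\end{itemize}

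On the inversion step you are more careful than the paper, which simply quotes $\cF^{-1}\big(e^{\mathrm{i}sw}/(\mathrm{i}s)^4\big)=(w-x)^3_+/3!$. Your logic is sound. The termwise inverse (with the common polarization $s\mapsto s-\mathrm{i}0$) has a Fourier transform that agrees with $\widehat q^A_{\bdlambda}$ for $s\neq0$. Hence it differs from the true density by the inverse transform of a distribution supported at the origin, which is a polynomial. Since both are compactly supported, that polynomial vanishes. The moment identities $\sum_J\varepsilon_JV_J(\bdlambda)V_{J^c}(\bdlambda)w_J(\bdlambda)^k=0$ for $k\leqslant3$ do follow from Lemma~\ref{lem:vanish} with $P=v^2u^k$. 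Equivalently, they follow from the regularity of $\widehat q^A_{\bdlambda}$ at $s=0$.
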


\begin{proof}
For $\bsH_s=s\sigma_3\ot\I_2$, its eigenvalues are $s,s,-s,-s$.
Applying the double-confluent limit to the $\sfU(4)$-HCIZ integral
gives
$$
\widehat
q^A_{\bdlambda}(s)=\frac3{4V_4(\bdlambda)}\frac1{(\mathrm{i}s)^4}\sum_{\abs{J}=2}\varepsilon_JV_J(\bdlambda)V_{J^c}(\bdlambda)e^{\mathrm{i}sw_J(\bdlambda)}.
$$
The inverse-transform identity
$$
\cF^{-1}\Pa{\frac{e^{\mathrm{i}sw}}{(\mathrm{i}s)^k}}(x)=\frac{(w-x)^{k-1}_+}{(k-1)!}
$$
with $k=4$ gives Eq.~\eqref{eq:qAalpha}.
\end{proof}

\begin{thrm}[One-marginal Bloch-radius density]\label{th:1marginalBlochr}
The density of the marginal Bloch radius $a$ is
\begin{eqnarray}\label{eq:pa}
p^A_{\bdlambda}(a)=\frac{3a}{4V_4(\bdlambda)}\sum_{\abs{J}=2}\varepsilon_J
V_J(\bdlambda)V_{J^c}(\bdlambda)(w_J(\bdlambda)-a)^2_+
\end{eqnarray}
for $a\in\bbR_{\geqslant0}$. Its support is the closed interval
$[0,2(\lambda_1+\lambda_2)-1]$.
\end{thrm}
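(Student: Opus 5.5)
The plan is to combine the preceding Proposition (the truncated-power formula \eqref{eq:qAalpha} for $q^A_{\bdlambda}(\alpha)$) with the one-dimensional derivative principle \eqref{eq:pq'}. Local-unitary invariance makes the Bloch vector $\bsa$ of $\rho_A$ rotationally invariant in $\bbR^3$. Conditional on $a=\abs{\bsa}$, the component $\alpha=a_3$ is then uniform on $[-a,a]$, which gives \eqref{eq:qvsp}, and hence
$$
p^A_{\bdlambda}(a)=(-2a)\,\partial_a q^A_{\bdlambda}(a)
\quad\text{for }a>0.
$$
Differentiating \eqref{eq:qAalpha} term by term uses $\partial_\alpha(w-\alpha)^3_+=-3(w-\alpha)^2_+$, which is valid everywhere because $(\cdot)^3_+$ is $C^2$. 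This yields
$$
-2a\cdot\frac{1}{8V_4(\bdlambda)}\sum_{\abs{J}=2}\varepsilon_J V_JV_{J^c}\,\bigl(-3(w_J-a)^2_+\bigr)
=\frac{3a}{4V_4(\bdlambda)}\sum_{\abs{J}=2}\varepsilon_JV_JV_{J^c}(w_J-a)^2_+,
$$
which is exactly \eqref{eq:pa}. Alternatively, one can do this directly on the Fourier side: multiply $\widehat q^A_{\bdlambda}$ by $-\mathrm{i}s$ and invert with $k=3$. The value at $a=0$ is immaterial for a density.

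For the support, I would argue as follows. The six knots are $w_J=2(\lambda_i+\lambda_j)-1$. Their maximum is $w_{\{1,2\}}=\tilde\alpha=2(\lambda_1+\lambda_2)-1$, attained uniquely because $\bdlambda$ is regular. In fact $w_{\{1,2\}}>w_{\{1,3\}}>\max(w_{\{1,4\}},w_{\{2,3\}})$, so no other knot ties with it.

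For $a>\tilde\alpha$, every term vanishes, so $p^A_{\bdlambda}=0$ there. The two-qubit density also lives in $[0,1]$, consistent with $\tilde\alpha<1$.

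Just below $\tilde\alpha$, only the $J=\{1,2\}$ term is active. Its coefficient is $\varepsilon_{\{1,2\}}V_{\{1,2\}}V_{\{3,4\}}=(\lambda_1-\lambda_2)(\lambda_3-\lambda_4)>0$, since $(-1)^{1+2+1}=+1$. So the density is strictly positive on some interval $(\tilde\alpha-\epsilon,\tilde\alpha)$.

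The remaining point, which I expect to be the main obstacle, is to show there are no zero gaps inside $(0,\tilde\alpha)$. Here $p^A_{\bdlambda}/a$ is a piecewise quadratic with breakpoints at the knots, and an alternating sum could in principle vanish on some subinterval.

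My first approach would be to note that for $a<\min_J w_J$ all six terms are active. Expanding in powers of $a$, the coefficients $\sum_J\varepsilon_JV_JV_{J^c}w_J^k$ for $k=0,1,2$ are alternating in $\bdlambda$ of degree $2+k<6$. By the argument of Lemma~\ref{lem:vanish} they vanish, so the density would be $0$ near $a=0$. That conclusion is wrong, because small $a$ is clearly attainable. So I must recheck the mechanism: the $\varepsilon_JV_JV_{J^c}$ weights are not obtained by a plain $S_4$-alternation, so Lemma~\ref{lem:vanish} does not apply verbatim, and small-$a$ positivity should be verified directly.

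The cleaner route avoids computation altogether, using the continuity/full-support argument from the proof of Corollary~\ref{cor:supp2qubits}. The support of $p^A_{\bdlambda}$ equals the image of the connected compact group $\sfU(4)$ under the continuous map $\bsU\mapsto a(\bsU)$. That image is a closed interval. It contains $0$: take $\bsU$ sending the eigenvectors of $\Lambda$ to the basis $\ket{00},\ket{11},\ket{01},\ket{10}$ paired so that $\rho_A$ has equal diagonal. It also contains $\tilde\alpha$, which is the maximum by the knot analysis above. Therefore the support is exactly $[0,\tilde\alpha]$.
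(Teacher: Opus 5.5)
Your derivation of Eq.~\eqref{eq:pa} is the paper's proof: the paper applies $p^A_{\bdlambda}(a)=-2a\,\frac{\dif}{\dif a}q^A_{\bdlambda}(a)$, differentiates Eq.~\eqref{eq:qAalpha} term by term, and gives no argument for the support. Your support argument has a sound overall structure. It uses vanishing for $a>\tilde\alpha$; positivity on $(\tilde\beta,\tilde\alpha)$, where the only active term has coefficient $(\lambda_1-\lambda_2)(\lambda_3-\lambda_4)>0$; and connectedness of $\sfU(4)$ together with full support of Haar measure. However, the step that puts $0$ in the image fails as written. If the eigenvectors of $\Lambda$ are sent to $\ket{00},\ket{01},\ket{10},\ket{11}$ in any order, then $\rho_A$ is diagonal with entries $\lambda_i+\lambda_j$ and $\lambda_k+\lambda_l$. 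Hence $a=\abs{2(\lambda_i+\lambda_j)-1}\in\{\tilde\alpha,\tilde\beta,\abs{\tilde\gamma}\}$. An ``equal diagonal'' would require some pair of eigenvalues to sum to $\frac12$, which fails generically. For example, $\bdlambda=(\frac47,\frac27,\frac17,0)$ has pair sums $\frac67,\frac57,\frac47,\frac37,\frac27,\frac17$. You need entangled eigenvectors. Take the four Bell states: each has $\ptr{B}{\cdot}=\frac12\I_2$, so $\rho_A=\frac12\I_2$ for every assignment of the $\lambda_j$. This is the two-qubit analogue of the paper's $\ket{\psi_{k,\pm}}$ construction in Section~\ref{sect:5}. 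With this fix, and with $\tilde\alpha$ in the image, the connected compact image is exactly $[0,\tilde\alpha]$, and it equals the closed support. You can get $\tilde\alpha$ from your knot argument, or directly from Ky Fan: $\lambda_{\max}(\rho_A)\leqslant\lambda_1+\lambda_2$, with equality when the $\lambda_1,\lambda_2$ eigenvectors lie in $\ket{0}\ot\bbC^2$.

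Your diagnosis of the discarded first approach is also wrong, though your final argument does not depend on it. The sum $\sum_J\varepsilon_JV_JV_{J^c}e^{\mathrm{i}sw_J}$ is the Laplace expansion of the determinant with rows $e^{\mathrm{i}s\lambda_j},\lambda_je^{\mathrm{i}s\lambda_j},e^{-\mathrm{i}s\lambda_j},\lambda_je^{-\mathrm{i}s\lambda_j}$. It is therefore alternating in $\bdlambda$. The identities $\sum_J\varepsilon_JV_JV_{J^c}w_J^k=0$ for $k\leqslant3$ are true; they express exactly the removability of the singularity of $\widehat q^A_{\bdlambda}$ at $s=0$. What goes wrong is the region you apply them to. Since $w_{J^c}=-w_J$, you have $\min_Jw_J=-\tilde\alpha<0$. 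So ``all six terms active'' describes $\alpha<-\tilde\alpha$, where $q^A_{\bdlambda}$ correctly vanishes. For small $a>0$ only the (at most three) terms with $w_J>a$ are active. The $k=2$ identity then only forces $p^A_{\bdlambda}(a)=O(a^2)$ near $0$, which is consistent with the $4\pi a^2$ radial volume factor rather than with vanishing.
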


\begin{proof}
Using the $\SU(2)$-derivative principle,
$p^A_{\bdlambda}(a)=-2a\frac{\dif q^A_{\bdlambda}(a)}{\dif a}$.
Differentiating Eq.~\eqref{eq:qAalpha} gives Eq.~\eqref{eq:pa}.
\end{proof}

Let $a$ be the Bloch radius of marginal state of $A$ subsystem and
let $p^A_{\bdlambda}$ be its density from
Theorem~\ref{th:1marginalBlochr}. The ordered marginal eigenvalues
are $z_\pm=\frac{1\pm a}2$. Their probability densities are
$$
\begin{cases}
f(z_+)=2p^A_{\bdlambda}(2z_+-1),&\text{if } \frac12\leqslant
z_+\leqslant\lambda_1+\lambda_2,\\
f(z_-)=2p^A_{\bdlambda}(1-2z_-),&\text{if }
\lambda_3+\lambda_4\leqslant z_-\leqslant\frac12.
\end{cases}
$$
If $z$ is obtained by choosing one of the two marginal eigenvalues
uniformly at random, then
\begin{eqnarray*}
P^A_{\bdlambda}(z)=p^A_{\bdlambda}(\abs{2z-1}),\quad
z\in[\lambda_3+\lambda_4,\lambda_1+\lambda_2].
\end{eqnarray*}
In particular,
\begin{eqnarray*}
P^A_{\bdlambda}(z)=\frac{3\abs{2z-1}}{4V_4(\bdlambda)}\sum_{\abs{J}=2}\varepsilon_JV_{J^c}(\bdlambda)V_J(\bdlambda)(w_J(\bdlambda)-\abs{2z-1})^2_+.
\end{eqnarray*}
We can also directly derive another form equivalent to that given
above.

\begin{cor}[Marginal eigenvalue distributions]\label{th:1sided}
For a random two-qubit state $\rho_{AB}\in \cU_\Lambda$, where
$\Lambda$ corresponds to $\bdlambda\in C_4\cap\Delta_3$, denote
\begin{eqnarray}
c_1=\lambda_1+\lambda_2,c_2=\lambda_1+\lambda_3,c_3=\max(\lambda_1+\lambda_4,\lambda_2+\lambda_3),\\
c_4=\min(\lambda_1+\lambda_4,\lambda_2+\lambda_3),c_5=\lambda_2+\lambda_4,c_6=\lambda_3+\lambda_4.
\end{eqnarray}
The distribution density of a generic eigenvalue $z$ of $\rho_A$ is
piecewise polynomially, given by
\begin{eqnarray}
P^A_{\bdlambda}(z)
=2\sum^5_{k=1}f^{(k)}_{\bdlambda}(z)\chi_{[c_{k+1},c_k]}(z),
\end{eqnarray}
which is shown in Figure~\ref{fig:genericpdf}, where
\begin{eqnarray}
f^{(1)}_{\bdlambda}(z) &=&
\frac{(\lambda_1+\lambda_2-z)^3}{\prod^2_{i=1}\prod^4_{j=3}(\lambda_i-\lambda_j)},\\
f^{(2)}_{\bdlambda}(z) &=& \frac{F_3(\bdlambda)z^3+F_2(\bdlambda)z^2+F_1(\bdlambda)z+F_0(\bdlambda)}{\prod^3_{i=2}(\lambda_i-\lambda_4)\prod^4_{j=2}(\lambda_1-\lambda_j)}, \\
f^{(3)}_{\bdlambda}(z) &=& \frac{-3z^2+3z+F(\bdlambda)}{\prod^4_{j=2}(\lambda_1-\lambda_j)}\text{ or }\frac{-3z^2+3z+F(\bdlambda^{(14)})}{\prod^3_{i=1}(\lambda_i-\lambda_4)}, \\
f^{(4)}_{\bdlambda}(z) &=& \frac{F_3(\bdlambda^{(14)})z^3+F_2(\bdlambda^{(14)})z^2+F_1(\bdlambda^{(14)})z+F_0(\bdlambda^{(14)})}{\prod^3_{i=2}(\lambda_i-\lambda_4)\prod^4_{j=2}(\lambda_1-\lambda_j)}, \\
f^{(5)}_{\bdlambda}(z) &=&
\frac{(z-(\lambda_3+\lambda_4))^3}{\prod^2_{i=1}\prod^4_{j=3}(\lambda_i-\lambda_j)}.
\end{eqnarray}
In the above, $\bdlambda=(\lambda_1,\lambda_2,\lambda_3,\lambda_4)$
and $\bdlambda^{(14)}=(\lambda_4,\lambda_2,\lambda_3,\lambda_1)$,
and
\begin{eqnarray*}
F_3(\bdlambda)&=& \lambda_1-\lambda_4,\\
F_2(\bdlambda)&=& -3\Br{\lambda_1^2-(\lambda_2+\lambda_3)\lambda_4+\lambda_2\lambda_3},\\
F_1(\bdlambda)&=& 3\Br{\lambda^3_1+(\lambda_1\lambda_4+\lambda_2\lambda_3)\lambda_1-(\lambda^2_2+\lambda^2_3+\lambda_2\lambda_3)\lambda_4-(\lambda_2+\lambda_3)\lambda_1\lambda_4+\lambda_2\lambda_3(\lambda_2+\lambda_3)},\\
F_0(\bdlambda) &=& -\lambda_1^4-2\lambda_4 \lambda_1^3-2
(\lambda_2+\lambda_3)\lambda_2\lambda_3\lambda_1+(\lambda_2+\lambda_3)(\lambda^2_2+\lambda^2_3)\lambda_4 \\
&&+2(
\lambda^2_2+\lambda^2_3+\lambda_2\lambda_3)\lambda_1\lambda_4-\lambda_2\lambda_3(\lambda^2_2+\lambda_2\lambda_3+\lambda^2_3),\\
F(\bdlambda)&=&\lambda^2_1+\lambda_2\lambda_3+\lambda_2\lambda_4+\lambda_3\lambda_4-1.
\end{eqnarray*}
The expression of $f^{(3)}_{\bdlambda}(z)$ should be
$\frac{-3z^2+3z+F(\bdlambda)}{\prod^4_{j=2}(\lambda_1-\lambda_j)}$
if $\lambda_1+\lambda_4>\lambda_2+\lambda_3$; and
$\frac{-3z^2+3z+F(\bdlambda^{(14)})}{\prod^3_{i=1}(\lambda_i-\lambda_4)}$
if $\lambda_1+\lambda_4<\lambda_2+\lambda_3$. The probability
density curve of a generic eigenvalue of $\rho_B$ is the same as
that of $\rho_A$.
\end{cor}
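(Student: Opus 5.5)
The plan is to obtain the corollary directly from Theorem~\ref{th:1marginalBlochr}, using the relation already recorded before the statement. If $z$ is one of the two eigenvalues $z_\pm=\frac{1\pm a}{2}$ of $\rho_A$, chosen with probability $\frac12$ each, then
$$
P^A_{\bdlambda}(z)=\tfrac12\Pa{2p^A_{\bdlambda}(2z-1)\chi_{\set{z\geqslant 1/2}}+2p^A_{\bdlambda}(1-2z)\chi_{\set{z\leqslant1/2}}}=p^A_{\bdlambda}(\abs{2z-1}).
$$
The density $p^A_{\bdlambda}$ is a finite signed sum of terms $a\,(w_J(\bdlambda)-a)^2_+$ over the six two-element subsets $J$. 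The task therefore reduces to a bookkeeping problem: decide, for each position of $z$, which truncated powers are active, and collect the resulting polynomial on each interval.

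\textbf{Step 1 (breakpoints).} Substituting $a=\abs{2z-1}$, the hinge $(w_J-a)_+$ switches on exactly where $\abs{2z-1}=2\sum_{j\in J}\lambda_j-1$. For $z\geqslant\frac12$ this means $z=\sum_{j\in J}\lambda_j$. For $z\leqslant\frac12$ it means $z=\sum_{j\in J^c}\lambda_j$, because $1-\sum_{J}\lambda_j=\sum_{J^c}\lambda_j$. The six subset sums, ordered using $\lambda_1>\lambda_2>\lambda_3>\lambda_4$, are precisely $c_1>c_2>c_3\geqslant c_4>c_5>c_6$. The max/min in $c_3,c_4$ records the only ordering ambiguity, namely $\lambda_1+\lambda_4$ versus $\lambda_2+\lambda_3$. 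Moreover, $\frac12$ lies in $[c_4,c_3]$, since these two sums add to $1$. This already explains the five intervals $[c_{k+1},c_k]$, and it shows that the middle piece $f^{(3)}$ is the one straddling $z=\frac12$.

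\textbf{Step 2 (outer pieces, and symmetry).} On $[c_2,c_1]$ only $J=\set{1,2}$ is active. Its data are $\varepsilon_J=1$ and $V_JV_{J^c}=(\lambda_1-\lambda_2)(\lambda_3-\lambda_4)$. Dividing $V_4(\bdlambda)$ by this product leaves $\prod_{i\leqslant2<j}(\lambda_i-\lambda_j)$, which is the denominator of $f^{(1)}$. The piece $f^{(5)}$ on $[c_6,c_5]$ should follow by the reflection $z\mapsto1-z$, since $P^A_{\bdlambda}$ is symmetric about $\frac12$. For the same reason, $f^{(4)}$ should be obtained from $f^{(2)}$ by swapping $\lambda_1\leftrightarrow\lambda_4$, which is the $\bdlambda^{(14)}$ device in the statement. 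So only the pieces $f^{(2)}$ and $f^{(3)}$ require genuine computation.

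\textbf{Step 3 (pieces $f^{(2)}$ and $f^{(3)}$).} For $f^{(2)}$, add the $J=\set{1,3}$ hinge, whose sign is $\varepsilon_J=-1$, to the $J=\set{1,2}$ hinge. Then place everything over the common denominator $\prod_{j\geqslant2}(\lambda_1-\lambda_j)\prod_{i=2,3}(\lambda_i-\lambda_4)$ and read off $F_0,\ldots,F_3$ as symmetric-function expressions. Lemma~\ref{lem:vanish}-type alternation arguments should then show that the leading coefficient reduces to $\lambda_1-\lambda_4$. For $f^{(3)}$, three hinges are active. On each side of $\frac12$ the expression is a single polynomial, and its form depends on whether $\lambda_1+\lambda_4>\lambda_2+\lambda_3$, giving the two stated alternatives for $f^{(3)}$. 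Finally, normalization $\int P^A_{\bdlambda}=1$ and continuity at each $c_k$ serve as checks on the constants.

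\textbf{Main obstacle.} I expect the hardest step to be reconciling the \emph{form} of the pieces with the stated ones. Theorem~\ref{th:1marginalBlochr} produces pieces of the shape $(2z-1)\cdot(\text{quadratic})$, whereas $f^{(1)}$ is stated as a pure cube $(c_1-z)^3$. The first thing I would do is compute the top piece explicitly. From the theorem it equals
$$
\frac{3(2z-1)(c_1-z)^2}{\prod_{i\leqslant2<j}(\lambda_i-\lambda_j)},
$$
and I would compare this with $2f^{(1)}_{\bdlambda}(z)$. If the two disagree, the discrepancy points to the cubic formula $2q^A_{\bdlambda}(2z-1)$ of Eq.~\eqref{eq:qAalpha} (diagonal entry) rather than to $p^A_{\bdlambda}(\abs{2z-1})$. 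In that case I would check carefully which random variable "generic eigenvalue" must mean for the stated formula to hold, before carrying out Steps 2–3.
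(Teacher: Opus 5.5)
Your closing suspicion is correct, and it decides the matter rather than being a side check. Steps 1--3 cannot reach the stated formula, so the proposal as it stands would prove something else. From Theorem~\ref{th:1marginalBlochr}, on $[c_2,c_1]$ one gets $p^A_{\bdlambda}(2z-1)=3(2z-1)(c_1-z)^2/\prod_{i\leqslant2<j}(\lambda_i-\lambda_j)$. By contrast, $2f^{(1)}_{\bdlambda}(z)=2(c_1-z)^3/\prod_{i\leqslant2<j}(\lambda_i-\lambda_j)$, and the two vanish to different orders at $c_1$. The middle interval disagrees too. $p^A_{\bdlambda}(\abs{2z-1})$ vanishes at $z=\frac12$ because of the factor $a$; on $[c_4,c_3]$ it is a constant multiple of $(2z-1)^2$. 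The stated $f^{(3)}_{\bdlambda}$ is a concave quadratic with $f^{(3)}_{\bdlambda}(\frac12)\neq0$ in general; for $\bdlambda=(0.5,0.3,0.15,0.05)$ one gets $2f^{(3)}_{\bdlambda}(\frac12)\approx4.29$. So no bookkeeping on the hinges $a(w_J-a)^2_+$ will produce $f^{(1)},\dots,f^{(5)}$. The missing step is exactly the identification of the random variable, which you leave open.

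The stated pieces are those of $2q^A_{\bdlambda}(2z-1)$. Putting $\alpha=2z-1$ in Eq.~\eqref{eq:qAalpha} gives $(w_J-\alpha)^3_+=8(\Lambda_J-z)^3_+$ with $\Lambda_J=\sum_{j\in J}\lambda_j$, so
\[
2q^A_{\bdlambda}(2z-1)=\frac{2}{V_4(\bdlambda)}\sum_{\abs{J}=2}\varepsilon_JV_J(\bdlambda)V_{J^c}(\bdlambda)(\Lambda_J-z)^3_+ .
\]
Your Step~1 interval analysis applies verbatim to this sum, with breakpoints at the six $\Lambda_J$ and nothing special at $\frac12$. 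The pieces then match as follows.
\begin{enumerate}[(i)]
\item On $[c_2,c_1]$ the sum equals $2f^{(1)}_{\bdlambda}$.
\item On $[c_3,c_2]$, write $V_4(\bdlambda)=(\lambda_2-\lambda_3)\prod_{i=2}^3(\lambda_i-\lambda_4)\prod_{j=2}^4(\lambda_1-\lambda_j)$. The $z^3$ and $z^2$ coefficients are then $2F_3(\bdlambda)$ and $2F_2(\bdlambda)$ over the stated denominator.
\item On $[c_4,c_3]$ the Pl\"ucker identity $(\lambda_1-\lambda_2)(\lambda_3-\lambda_4)-(\lambda_1-\lambda_3)(\lambda_2-\lambda_4)+(\lambda_1-\lambda_4)(\lambda_2-\lambda_3)=0$ kills the cubic term, which is why $f^{(3)}_{\bdlambda}$ is quadratic.
\item Using $\sum_j\lambda_j=1$, one has $f^{(4)}_{\bdlambda}(z)=f^{(2)}_{\bdlambda}(1-z)$ and $f^{(5)}_{\bdlambda}(z)=f^{(1)}_{\bdlambda}(1-z)$. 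The swap $\lambda_1\leftrightarrow\lambda_4$ acts only on the numerator; applied to the denominator too, it would flip its sign.
\end{enumerate}

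This is also what the paper's proof computes. It Fourier-inverts the double-confluent HCIZ $\varphi_{\bdlambda}(x_1,x_2)$, which is the characteristic function of the diagonal entries of $\rho_A$, and it never applies the $\SU(2)$ derivative principle. Consequently the formula is the density of the diagonal entry $\Innerm{0}{\rho_A}{0}=\frac{1+a_3}{2}$, not of a uniformly chosen eigenvalue. The paper adopts the eigenvalue reading just before the corollary, where it claims equivalence with $p^A_{\bdlambda}(\abs{2z-1})$. In that reading the statement cannot be proved: it contradicts Theorem~\ref{th:1marginalBlochr}, and it contradicts the paper's own remark in Section~\ref{sect:3} that diagonal Fourier inversion does not yield eigenvalue densities. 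Resolve your ``main obstacle'' in this direction and prove the formula for the diagonal entry, instead of continuing Steps 2--3 from $p^A_{\bdlambda}$.
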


\begin{figure}[h!]\centering
{\begin{minipage}[b]{0.5\linewidth}
\includegraphics[width=1\textwidth]{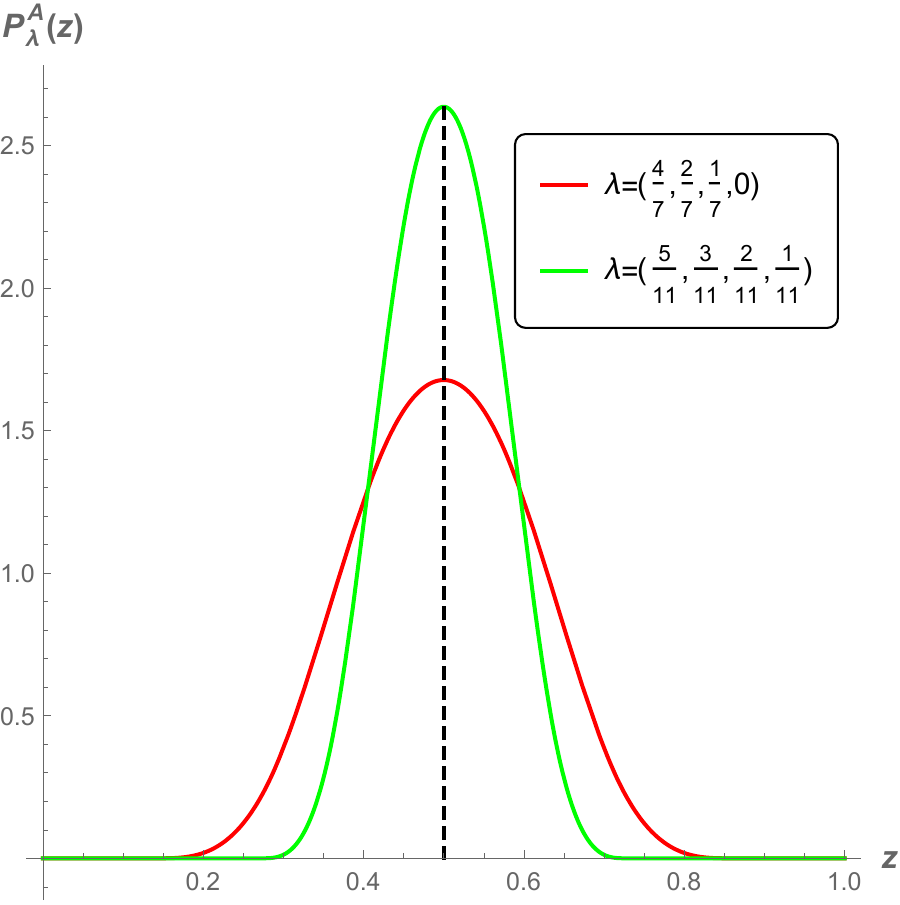}
\end{minipage}}
\caption{The probability density curve of a generic eigenvalue of a
marginal state of a two-qubit unitary orbit.}\label{fig:genericpdf}
\end{figure}

\begin{proof}
For $N=4=mn$, where $m=n=2$, let
$\bdlambda=(\lambda_1,\ldots,\lambda_4)\in C_4\cap\Delta_3$. Then we
have
\begin{eqnarray}
c_1> c_2> c_3\geqslant\frac12\geqslant c_4> c_5> c_6.
\end{eqnarray}
Moreover, $c_1+c_6=c_2+c_5=c_3+c_4=1$. Let
$\bss=\diag(s_1,s_2,s_3,s_4)\to\diag(x_1,x_2,x_1,x_2)$, it holds
that
\begin{eqnarray*}
&&\varphi_{\bdlambda}(\bsx) =
\mathrm{i}^6\frac{\prod^4_{k=1}\Gamma(k)}{V_4(\bdlambda)}\lim_{(s_3,s_4)\to
(x_1,x_2)}\lim_{(s_1,s_2)\to
(x_1,x_2)}\frac{\det_4(e^{\mathrm{i}s_i\lambda_j})}{V_4(\bss)}\\
&&=-\frac{12}{V_4(\bdlambda)}\lim_{(s_3,s_4)\to
(x_1,x_2)}\lim_{(s_1,s_2)\to
(x_1,x_2)}\frac1{(s_1-s_2)(s_1-s_4)(s_2-s_3)(s_3-s_4)}\\
&&~~~\Abs{\begin{array}{cccc}
         e^{\mathrm{i}s_1\lambda_1} & e^{\mathrm{i}s_1\lambda_2} & e^{\mathrm{i}s_1\lambda_3} & e^{\mathrm{i}s_1\lambda_4} \\
         e^{\mathrm{i}s_2\lambda_1} & e^{\mathrm{i}s_2\lambda_2} & e^{\mathrm{i}s_2\lambda_3} & e^{\mathrm{i}s_2\lambda_4} \\
        \frac{e^{\mathrm{i}s_3\lambda_1}-e^{\mathrm{i}s_1\lambda_1}}{s_3-s_1} & \frac{e^{\mathrm{i}s_3\lambda_2}-e^{\mathrm{i}s_1\lambda_2}}{s_3-s_1} & \frac{e^{\mathrm{i}a_3\lambda_3}-e^{\mathrm{i}a_1\lambda_3}}{a_3-a_1} & \frac{e^{\mathrm{i}a_3\lambda_4}-e^{\mathrm{i}a_1\lambda_4}}{a_3-a_1} \\
         \frac{e^{\mathrm{i}s_4\lambda_1}-e^{\mathrm{i}s_2\lambda_1}}{s_4-s_2} & \frac{e^{\mathrm{i}s_4\lambda_2}-e^{\mathrm{i}s_2\lambda_2}}{s_4-s_2} & \frac{e^{\mathrm{i}a_4\lambda_3}-e^{\mathrm{i}a_2\lambda_3}}{a_4-a_2} &
         \frac{e^{\mathrm{i}s_4\lambda_4}-e^{\mathrm{i}s_2\lambda_4}}{s_4-s_2}
       \end{array}
}\\
&&=-\frac{12}{V_4(\bdlambda)}\frac1{(x_1-x_2)^4}\Abs{\begin{array}{cccc}
         e^{\mathrm{i}x_1\lambda_1} & e^{\mathrm{i}x_1\lambda_2} & e^{\mathrm{i}x_1\lambda_3} & e^{\mathrm{i}x_1\lambda_4} \\
         e^{\mathrm{i}x_2\lambda_1} & e^{\mathrm{i}x_2\lambda_2} & e^{\mathrm{i}x_2\lambda_3} & e^{\mathrm{i}x_2\lambda_4} \\
         \lambda_1e^{\mathrm{i}x_1\lambda_1} & \lambda_2e^{\mathrm{i}x_1\lambda_2} & \lambda_3e^{\mathrm{i}x_1\lambda_3} & \lambda_4e^{\mathrm{i}x_1\lambda_4} \\
         \lambda_1e^{\mathrm{i}x_2\lambda_1} & \lambda_2e^{\mathrm{i}x_2\lambda_2} & \lambda_3e^{\mathrm{i}x_2\lambda_3} & \lambda_4e^{\mathrm{i}x_2\lambda_4}
       \end{array}
},
\end{eqnarray*}
where $\bsx=(x_1,x_2)$. Then
\begin{eqnarray*}
&&\Abs{\begin{array}{cccc}
         e^{\mathrm{i}x_1\lambda_1} & e^{\mathrm{i}x_1\lambda_2} & e^{\mathrm{i}x_1\lambda_3} & e^{\mathrm{i}x_1\lambda_4} \\
         e^{\mathrm{i}x_2\lambda_1} & e^{\mathrm{i}x_2\lambda_2} & e^{\mathrm{i}x_2\lambda_3} & e^{\mathrm{i}x_2\lambda_4} \\
         \lambda_1e^{\mathrm{i}x_1\lambda_1} & \lambda_2e^{\mathrm{i}x_1\lambda_2} & \lambda_3e^{\mathrm{i}x_1\lambda_3} & \lambda_4e^{\mathrm{i}x_1\lambda_4} \\
         \lambda_1e^{\mathrm{i}x_2\lambda_1} & \lambda_2e^{\mathrm{i}x_2\lambda_2} & \lambda_3e^{\mathrm{i}x_2\lambda_3} & \lambda_4e^{\mathrm{i}x_2\lambda_4}
       \end{array}
} \\
&&=
\lambda_1\lambda_2\Br{e^{\mathrm{i}(\lambda_1x_2+\lambda_2x_1)}-e^{\mathrm{i}(\lambda_1
x_1+\lambda_2 x_2)}}\Br{e^{\mathrm{i}(\lambda_3 x_2+\lambda_4
x_1)}-e^{\mathrm{i}(\lambda_3x_1+\lambda_4 x_2)}}\\
&&~~~-\lambda_1\lambda_3\Br{e^{\mathrm{i}(\lambda_1x_2+\lambda_3x_1)}-e^{\mathrm{i}
(\lambda_1x_1+\lambda_3 x_2)}}\Br{e^{\mathrm{i}(\lambda_2x_2+
\lambda_4x_1)}-e^{\mathrm{i}(\lambda_2x_1+\lambda_4x_2)}}\\
&&~~~+\lambda_1\lambda_4\Br{e^{\mathrm{i}(\lambda_2 x_2+\lambda_3
x_1)}-e^{\mathrm{i}(\lambda_2x_1+\lambda_3x_2)}}\Br{e^{\mathrm{i}(\lambda_1
x_2+\lambda_4 x_1)}-e^{\mathrm{i}(\lambda_1 x_1+\lambda_4x_2)}}\\
&&~~~+\lambda_2\lambda_3\Br{e^{\mathrm{i}(\lambda_2 x_2+\lambda_3
x_1)}-e^{\mathrm{i}(\lambda_2 x_1+\lambda_3
x_2)}}\Br{e^{\mathrm{i}(\lambda_1 x_2+\lambda_4
x_1)}-e^{\mathrm{i}(\lambda_1 x_1+\lambda_4 x_2)}}\\
&&~~~-\lambda_2\lambda_4\Br{e^{\mathrm{i}(\lambda_1 x_2+\lambda_3
x_1)}-e^{\mathrm{i}(\lambda_1x_1+\lambda_3x_2)}}\Br{e^{\mathrm{i}(\lambda_2
x_2+\lambda_4 x_1)}-e^{\mathrm{i}(\lambda_2 x_1+\lambda_4 x_2)}}\\
&&~~~+\lambda_3\lambda_4\Br{e^{\mathrm{i}(\lambda_1 x_2+\lambda_2
x_1)}-e^{\mathrm{i}(\lambda_1 x_1+\lambda_2
x_2)}}\Br{e^{\mathrm{i}(\lambda_3 x_2+\lambda_4
x_1)}-e^{\mathrm{i}(\lambda_3 x_1+\lambda_4 x_2)}}.
\end{eqnarray*}
This implies that
\begin{eqnarray*}
-\frac1{12}\varphi_{\bdlambda}(\bsx) &=&
\frac{\lambda_1\lambda_2}{V_4(\bdlambda)}\frac{\Br{e^{\mathrm{i}(\lambda_1x_2+\lambda_2x_1)}-e^{\mathrm{i}(\lambda_1
x_1+\lambda_2 x_2)}}\Br{e^{\mathrm{i}(\lambda_3 x_2+\lambda_4
x_1)}-e^{\mathrm{i}(\lambda_3x_1+\lambda_4 x_2)}}}{(x_1-x_2)^4}\\
&&-\frac{\lambda_1\lambda_3}{V_4(\bdlambda)}\frac{\Br{e^{\mathrm{i}(\lambda_1x_2+\lambda_3x_1)}-e^{\mathrm{i}
(\lambda_1x_1+\lambda_3 x_2)}}\Br{e^{\mathrm{i}(\lambda_2x_2+
\lambda_4x_1)}-e^{\mathrm{i}(\lambda_2x_1+\lambda_4x_2)}}}{(x_1-x_2)^4}\\
&&+\frac{\lambda_1\lambda_4}{V_4(\bdlambda)}\frac{\Br{e^{\mathrm{i}(\lambda_2
x_2+\lambda_3
x_1)}-e^{\mathrm{i}(\lambda_2x_1+\lambda_3x_2)}}\Br{e^{\mathrm{i}(\lambda_1
x_2+\lambda_4 x_1)}-e^{\mathrm{i}(\lambda_1 x_1+\lambda_4x_2)}}}{(x_1-x_2)^4}\\
&&+\frac{\lambda_2\lambda_3}{V_4(\bdlambda)}\frac{\Br{e^{\mathrm{i}(\lambda_2
x_2+\lambda_3 x_1)}-e^{\mathrm{i}(\lambda_2 x_1+\lambda_3
x_2)}}\Br{e^{\mathrm{i}(\lambda_1 x_2+\lambda_4
x_1)}-e^{\mathrm{i}(\lambda_1 x_1+\lambda_4 x_2)}}}{(x_1-x_2)^4}\\
&&-\frac{\lambda_2\lambda_4}{V_4(\bdlambda)}\frac{\Br{e^{\mathrm{i}(\lambda_1
x_2+\lambda_3
x_1)}-e^{\mathrm{i}(\lambda_1x_1+\lambda_3x_2)}}\Br{e^{\mathrm{i}(\lambda_2
x_2+\lambda_4 x_1)}-e^{\mathrm{i}(\lambda_2 x_1+\lambda_4 x_2)}}}{(x_1-x_2)^4}\\
&&+\frac{\lambda_3\lambda_4}{V_4(\bdlambda)}\frac{\Br{e^{\mathrm{i}(\lambda_1
x_2+\lambda_2 x_1)}-e^{\mathrm{i}(\lambda_1 x_1+\lambda_2
x_2)}}\Br{e^{\mathrm{i}(\lambda_3 x_2+\lambda_4
x_1)}-e^{\mathrm{i}(\lambda_3 x_1+\lambda_4 x_2)}}}{(x_1-x_2)^4}.
\end{eqnarray*}
Thus $\widehat P^A_{\bdlambda}(\bsx)=\varphi_{\bdlambda}(\bsx)$.
Moreover, its inverse Fourier transform is given by
\begin{eqnarray}
P^A_{\bdlambda}(\bsz)=\frac1{(2\pi)^2}\int_{\bbR^2}e^{-\mathrm{i}\Inner{\bsz}{\bsx}}
\delta(1-z_1-z_2)\widehat
P^A_{\bdlambda}(\bsx)[\dif\bsx]=\frac1{(2\pi)^2}\int_{\bbR^2}e^{-\mathrm{i}\Inner{\bsz}{\bsx}}\varphi_{\bdlambda}(\bsx)[\dif\bsx].
\end{eqnarray}
Since $z_1+z_2=1$, denote $z=z_1$, we get that $P^A_{\bdlambda}(z)$
is supported on $[\lambda_3+\lambda_4,\lambda_1+\lambda_2]$, and the
explicit expression of $P^A_{\bdlambda}(z)$ is obtained by using the
symbolic computation of \textsc{Mathematica}.
\end{proof}

%================================================================%
\section{Qubit-qutrit systems}
\label{sect:5}
%================================================================%

We now consider the case $(m,n)=(2,3)$, so that $N=6$. Let
$\bdlambda=(\lambda_1,\ldots,\lambda_6)\in C_6\cap\Delta_5$ and set
$\Lambda=\diag(\lambda_1,\ldots,\lambda_6)$. The global state is
$\rho_{AB}=\bsU\Lambda\bsU^\dagger$, where $\bsU\sim
(\sfU(6),\mu_{\haar})$ is Haar-distributed, and its marginal states
are
$$
\rho_A=\ptr{B}{\rho_{AB}}\in\rD(\bbC^2)\quad\text{ and }\quad\rho_B=\ptr{A}{\rho_{AB}}\in\rD(\bbC^3).
$$
For the qubit marginal, we use the Bloch representation
$$
\rho_A=\frac12(\I_2+\bsa\cdot\bdsigma), \quad a=\abs{\bsa}\in[0,1].
$$
For the qutrit marginal, let $\bdbeta=(\beta_1,\beta_2,\beta_3)\in
C_3\cap\Delta_2$ whose components are the ordered eigenvalues of
$\rho_B$. We write
$(\lambda_{\max}(\rho_B),\lambda_{\min}(\rho_B))=(\beta_1,\beta_3)=(\beta,\gamma)$.
Then $\beta_2=1-\beta-\gamma$. Thus the full qutrit spectrum is
determined by $(\beta,\gamma)$. The open qutrit Weyl chamber is
$$
\cW_3=\Set{(\beta,\gamma)\in\bbR^2:\beta> 1-\beta-\gamma> \gamma>0},
$$
or equivalently,
$$
\cW_3=\Set{(\beta,\gamma)\in\bbR^2:2\beta+\gamma> 1,
\beta+2\gamma<1,\gamma>0}.
$$
We use the notation $V_6(\bdlambda)=\prod_{1\leqslant i<j\leqslant
6}(\lambda_i-\lambda_j)$. For a $3$-element subset
$J\subset\set{1,\ldots,6}$, let $J^c$ denote its complement and
define
\begin{eqnarray}\label{eq:C2b}
V_J(\bdlambda):=\prod_{i,j\in J: i<j}(\lambda_i-\lambda_j),\quad
w_J(\bdlambda):=2\sum_{j\in J}\lambda_j-1,
\end{eqnarray}
together with the sign $\varepsilon_J:=(-1)^{\sum_{j\in J}j}$. As
before, $(x)_+=\max(x,0)$.

\subsection{Distribution of the qubit Bloch radius}

We first determine the distribution of a fixed Cartesian component
of the qubit Bloch vector. Let
$\alpha=\Tr{\rho_A\sigma_3}=\Tr{(\sigma_3\ot\I_3)\rho_{AB}}$, and
denote its probability density by $q^A_{\bdlambda}(\alpha)$. Its
characteristic function is
$$
\widehat
q^A_{\bdlambda}(s)=\int_{\sfU(6)}e^{\mathrm{i}s\Tr{(\sigma_3\ot\I_3)\bsU\Lambda\bsU^\dagger}}\dif\mu_{\haar}(\bsU).
$$
The eigenvalues of the test matrix $\bsH_s=s\sigma_3\ot\I_3$ are
$$
(s,s,s,-s,-s,-s),
$$
with multiplicities $3$ and $3$. Thus the corresponding HCIZ
integral contains two eigenvalue blocks, each of multiplicity three.
\begin{prop}\label{prop:C2C3}
The characteristic function and probability density of the fixed
Bloch component $\alpha$ are
\begin{eqnarray}\label{eq:C2a0}
\widehat q^A_{\bdlambda}(s) =
\frac{135}{8V_6(\bdlambda)}\sum_{\abs{J}=3}\varepsilon_J
V_J(\bdlambda)V_{J^c}(\bdlambda)\frac{e^{\mathrm{i}sw_J(\bdlambda)}}{(\mathrm{i}s)^9}
\end{eqnarray}
and
\begin{eqnarray}\label{eq:C2a}
q_{\bdlambda}(\alpha) = \frac{135}{8\cdot
8!V_6(\bdlambda)}\sum_{\abs{J}=3}\varepsilon_JV_J(\bdlambda)V_{J^c}(\bdlambda)(w_J(\bdlambda)-\alpha)^8_+.
\end{eqnarray}
The apparent singularity of each summand in Eq.~\eqref{eq:C2a0} at
$s=0$ is removable after taking the complete alternating sum.
\end{prop}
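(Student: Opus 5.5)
The plan is to follow the two-qubit one-marginal argument, now with two confluent blocks of size three. I will start from Corollary (characteristic function of one marginal) with $N=6$, $\gamma_6=\prod_{k=1}^6\Gamma(k)=34560$ and $\mathrm{i}^{-15}=\mathrm{i}$. The external eigenvalue vector is $\bsh_A=(s,s,s,-s,-s,-s)$. I will perturb it to $(s_1,s_2,s_3,s_4,s_5,s_6)$ with $s_1,s_2,s_3\to s$ and $s_4,s_5,s_6\to -s$.

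First I will compute the Vandermonde. Factor $V_6(\bss)=V_3(s_1,s_2,s_3)\,V_3(s_4,s_5,s_6)\prod_{i\le3<j}(s_i-s_j)$. The cross factor tends to $(2s)^9$.

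Second, I will expand the determinant $\det(e^{\mathrm{i}s_i\lambda_j})$ by the generalized Laplace expansion along the first three rows. This gives
\[
\det\big(e^{\mathrm{i}s_i\lambda_j}\big)=\sum_{\abs{J}=3}\pm\det\big(e^{\mathrm{i}s_i\lambda_j}\big)_{i\le3,j\in J}\,\det\big(e^{\mathrm{i}s_i\lambda_j}\big)_{i>3,j\in J^c},
\]
with sign $(-1)^{1+2+3+\sum_{j\in J}j}=-\varepsilon_J$. To each $3\times3$ block I will apply the confluent identity \eqref{eq:confid} with $r=3$, with the roles of rows and columns exchanged so that the functions are $t\mapsto e^{\mathrm{i}t\lambda_j}$. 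The limit is $(-1)^{3}\det\big((\mathrm{i}\lambda_j)^{k-1}e^{\mathrm{i}s\lambda_j}/(k-1)!\big)$, which equals $-\,\mathrm{i}^{3}\,e^{\mathrm{i}s\sum_J\lambda_j}\,V_J(\bdlambda)\cdot(\pm1)/(0!\,1!\,2!)$. Here the sign comes from comparing the ascending Vandermonde with the convention $V_J=\prod_{i<j}(\lambda_i-\lambda_j)$; reversing three columns gives a factor $-1$. The $J^c$ block is handled the same way with $-s$, giving $e^{-\mathrm{i}s\sum_{J^c}\lambda_j}$. Since $\sum_J\lambda-\sum_{J^c}\lambda=w_J(\bdlambda)$, the product of the two exponentials is $e^{\mathrm{i}sw_J}$.

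Collecting constants gives
\[
\gamma_6\,\mathrm{i}\cdot\mathrm{i}^6/(2^9\cdot 4)\cdot(\text{signs})=34560/2048=135/8,
\]
up to sign. I will then rewrite $\mathrm{i}^{7}/s^9$ as $\mathrm{i}^{16}/(\mathrm{i}s)^9$, so the power of $\mathrm{i}$ is $1$. This yields \eqref{eq:C2a0}, matching the pattern of \eqref{eq:widehatq}.

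The main obstacle is bookkeeping the overall sign. I will fix it by a consistency check rather than tracking every transposition. Near $\alpha\to-\infty$ every $(w_J-\alpha)^8_+$ is active, so $q$ must be nonnegative there and integrate to one. Equivalently, by the same argument as Lemma~\ref{lem:vanish} applied to $S_6$ (alternation forces divisibility by $V_6$, which has degree 15), the coefficient of $s^{-9}\cdots s^{-1}$ must cancel. The $s^0$ Taylor coefficient must equal $1$. That coefficient is
\[
\frac{135}{8\cdot 9!\,V_6}\sum_J\varepsilon_JV_JV_{J^c}\,w_J^9 .
\]
Because $w_J^9$ is of degree $9$ with $V_JV_{J^c}$ of degree $6$, the sum is a multiple of $V_6$. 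Fixing the sign by evaluating this identity is simpler than sign-chasing. The same expansion proves that the singularity at $s=0$ is removable, since all negative-power Taylor coefficients are alternating sums of polynomials of degree $<15$, hence zero.

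Finally, \eqref{eq:C2a} follows termwise from the inversion identity $\cF^{-1}\big(e^{\mathrm{i}sw}/(\mathrm{i}s)^k\big)(x)=(w-x)^{k-1}_+/(k-1)!$ with $k=9$. I will interpret each term with a common polarization, exactly as in the two-qubit proposition; the full sum is an honest $L^1$ function because it is a compactly supported spline.
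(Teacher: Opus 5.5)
\textbf{Overall.} Your route is the paper's: HCIZ with $\gamma_6\mathrm{i}^{-15}=34560\,\mathrm{i}$, cross Vandermonde $(2s)^9$, confluent limits on the two triple blocks, Laplace expansion along the first three rows, and termwise inversion with $k=9$. You expand first and take the block limits afterwards. The paper takes the limits first, obtaining $-\frac{1}{4(2s)^9}\det(\cdot)$, and then expands; the two orders are equivalent.

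Your magnitudes are right. Your argument for removability is a genuine addition: the Taylor coefficients of order $<9$ of $\sum_J\varepsilon_JV_JV_{J^c}e^{\mathrm{i}sw_J}$ are alternating of degree $<15$ and hence vanish. The paper asserts removability but does not prove it. Your inversion identity with $(w-x)^{k-1}_+/(k-1)!$ is also the correct form.

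\textbf{The gap: the overall sign.} The sign is part of the statement, and your proposal never actually determines it.

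\emph{The one sign you state is wrong.} You write $(-1)^{1+2+3+\sum_{j\in J}j}=-\varepsilon_J$, but in fact $(-1)^{1+2+3+\sum_{j\in J}j}=(-1)^{\sum_{j\in J}j}=+\varepsilon_J$. The extra $-1$ belongs to the two-qubit case, where $(-1)^{1+2}$ was absorbed into $\varepsilon_J=(-1)^{i+j+1}$. Carried through, your sign would give the negative of \eqref{eq:C2a0}.

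\emph{Your proposed repair is not carried out.} The normalization check $\widehat q^A_{\bdlambda}(0)=1$ is valid in principle. But it requires knowing that
$$
\sum_{\abs{J}=3}\varepsilon_JV_J(\bdlambda)V_{J^c}(\bdlambda)w_J(\bdlambda)^9=c\,V_6(\bdlambda),\qquad c=8\cdot 9!/135=21504,
$$
and in particular the sign of $c$. That is a degree-$15$ identity you have not evaluated, and it is not simpler than tracking the sign.

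\emph{Positivity near $\alpha\to-\infty$ gives nothing.} There all truncated powers are active and $q$ vanishes identically, by the very alternating cancellation you invoke.

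\textbf{Either of two short fixes closes the gap.}
\begin{enumerate}[(i)]
\item Track the sign directly. Your block limits $\tfrac{\mathrm{i}^3}{2}V_J(\bdlambda)e^{\mathrm{i}s\sum_{j\in J}\lambda_j}$ and $\tfrac{\mathrm{i}^3}{2}V_{J^c}(\bdlambda)e^{-\mathrm{i}s\sum_{j\in J^c}\lambda_j}$ are correct. With the Laplace sign $+\varepsilon_J$, the constants give
$$
34560\,\mathrm{i}\cdot\Pa{-\tfrac14}/(2s)^9=-\tfrac{135\,\mathrm{i}}{8s^9}=\tfrac{135}{8(\mathrm{i}s)^9},
$$
which is exactly \eqref{eq:C2a0}.
\item Use positivity where it is informative. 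For $w_{\{1,2,4\}}<\alpha<w_{\{1,2,3\}}$, only the $J=\{1,2,3\}$ truncated power is active. Its coefficient is $\varepsilon_{\{1,2,3\}}V_{\{1,2,3\}}V_{\{4,5,6\}}/V_6>0$, which forces the $+$ sign.
\end{enumerate}
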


\begin{proof}
By the HCIZ formula,
\begin{eqnarray*}
\widehat q^A_{\bdlambda}(s) = \gamma_6\mathrm{i}^{-15} \lim_{\bsh\to(s,s,s,-s,-s,-s)}\frac{\det\Pa{e^{\mathrm{i}h_i\lambda_j}}^6_{i,j=1}}{V_6(\bsh)V_6(\bdlambda)},
\end{eqnarray*}
where
$$
\gamma_6=\prod^6_{k=1}\Gamma(k)=1!2!3!4!5!=34560.
$$
Apply the confluent identity separately to the two triple-degenerate blocks. For the block tending to $s$, the limiting rows are proportional to $e^{\mathrm{i}s\lambda_j}, \lambda_je^{\mathrm{i}s\lambda_j},\lambda^2_je^{\mathrm{i}s\lambda_j}$, and similarly for the block tending to $-s$. The cross-block Vandermonde contribution is
$$
\prod_{1\leqslant i\leqslant 3<j\leqslant6}(h_i-h_j)\to (2s)^9.
$$
Accounting for the derivative factors and the two confluent signs gives
\begin{eqnarray}\label{eq:limitdetdet}
\lim_{\bsh\to(s,s,s,-s,-s,-s)}\frac{\det\Pa{e^{\mathrm{i}h_i\lambda_j}}^6_{i,j=1}}{V_6(\bsh)} = -\frac1{4(2s)^9}\det\Pa{\begin{array}{c}
                             e^{\mathrm{i}s\lambda_j} \\
                            \lambda_j e^{\mathrm{i}s\lambda_j} \\
                             \lambda^2_j e^{\mathrm{i}s\lambda_j} \\
                             e^{-\mathrm{i}s\lambda_j} \\
                             \lambda_j e^{-\mathrm{i}s\lambda_j} \\
                             \lambda^2_j e^{-\mathrm{i}s\lambda_j}
                           \end{array}
}^6_{j=1}.
\end{eqnarray}
Expanding this determinant along its first three rows gives
\begin{eqnarray}\label{eq:expandingdet}
\det\Pa{\begin{array}{c}
                             e^{\mathrm{i}s\lambda_j} \\
                            \lambda_j e^{\mathrm{i}s\lambda_j} \\
                             \lambda^2_j e^{\mathrm{i}s\lambda_j} \\
                             e^{-\mathrm{i}s\lambda_j} \\
                             \lambda_j e^{-\mathrm{i}s\lambda_j} \\
                             \lambda^2_j e^{-\mathrm{i}s\lambda_j}
                           \end{array}
}^6_{j=1} =
\sum_{\abs{J}=3}\varepsilon_JV_J(\bdlambda)V_{J^c}(\bdlambda)e^{\mathrm{i}s(\Lambda_J-\Lambda_{J^c})},
\end{eqnarray}
where $\Lambda_J:=\sum_{j\in J}\lambda_j$. Since $\Lambda_J+\Lambda_{J^c}=1$, it follows that $\Lambda_J-\Lambda_{J^c}=2\Lambda_J-1=w_J(\bdlambda)$. Substituting Eq.~\eqref{eq:expandingdet} into the HCIZ formula and simplifying the constants gives  Eq.~\eqref{eq:C2a0}.

With the Fourier conventions of Section~\ref{sect:2},
$$
\cF^{-1}\Pa{\frac{e^{\mathrm{i}sw}}{(\mathrm{i}s)^k}}(x) = \frac{(w-x)^k_+}{(k-1)!},
$$
where the denominator is understood with the same distributional
polarization as in the characteristic function. Taking $k=9$ yields
Eq.~\eqref{eq:C2a}.
\end{proof}
We now pass from the fixed-component density to the density. By
local-unitary invariance, the Bloch vector $\bsa$ is rotationally
invariant. Conditional on $a=\abs{\bsa}$, the random variable
$\alpha=a\cos\theta$ is uniformly distributed on $[-a,a]$.
Therefore,
$q_{\bdlambda}(\alpha)=\int^1_{\abs{\alpha}}\frac{p^A_{\bdlambda}(a)}{2a}\dif
a$, where $p^A_{\bdlambda}(a)$ is the density of the Bloch radius.
For $a>0$, $p^A_{\bdlambda}(a)=(-2a)\frac{\dif
q_{\bdlambda}(a)}{\dif a}$.

\begin{thrm}[Qubit Bloch-radius density]
For a random qubit-qutrit state on the regular unitary orbit $\cU_\Lambda$, the probability density of the qubit Bloch radius $a$ is
\begin{eqnarray}\label{eq:p23(a)}
p^A_{\bdlambda}(a) =
\frac{3a}{448V_6(\bdlambda)}\sum_{\abs{J}=3}(-1)^{\sum_{j\in
J}j}V_J(\bdlambda)V_{J^c}(\bdlambda)(w_J(\bdlambda)-a)^7_+,\quad
a\in\bbR_{\geqslant0},
\end{eqnarray}
which is shown in Figure~\ref{fig:PDFBlochr2(3)}. Its support is the
closed interval
\begin{eqnarray}\label{eq:suppp23}
\operatorname{supp}(p^A_{\bdlambda}) =
\Br{0,2\sum^3_{k=1}\lambda_k-1}.
\end{eqnarray}
Consequently, $p^A_{\bdlambda}(a)$  is a univariate piecewise-polynomial density of degree at most eight. Its possible interior breakpoints are contained in
\begin{eqnarray}\label{eq:breakpoints}
\Set{2\sum_{j\in J}\lambda_j-1: \abs{J}=3,\sum_{j\in
J}\lambda_j>\frac12}.
\end{eqnarray}
\end{thrm}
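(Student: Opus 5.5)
The plan is to obtain Eq.~\eqref{eq:p23(a)} by differentiating the fixed-component density of Proposition~\ref{prop:C2C3} through the $\SU(2)$ derivative principle Eq.~\eqref{eq:pq'}. The support is then identified separately as the image of the continuous Bloch-radius map on the orbit, rather than by a wall scan.

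\textbf{Step 1 (density formula).} Start from Eq.~\eqref{eq:C2a}, $q_{\bdlambda}(\alpha)=\frac{135}{8\cdot 8!\,V_6(\bdlambda)}\sum_{\abs{J}=3}\varepsilon_JV_J(\bdlambda)V_{J^c}(\bdlambda)(w_J(\bdlambda)-\alpha)^8_+$. Local-unitary invariance makes the Bloch vector $\bsa$ rotationally invariant, so $p^A_{\bdlambda}(a)=-2a\,q_{\bdlambda}'(a)$ for $a>0$. Since $\frac{\dif}{\dif a}(w-a)^8_+=-8(w-a)^7_+$, this gives
$$
p^A_{\bdlambda}(a)=\frac{2\cdot 8\cdot135\,a}{8\cdot8!\,V_6(\bdlambda)}\sum_{\abs{J}=3}\varepsilon_JV_J(\bdlambda)V_{J^c}(\bdlambda)(w_J(\bdlambda)-a)^7_+ .
$$
The constant simplifies to $\frac{270}{8!}=\frac{3}{448}$, and $\varepsilon_J=(-1)^{\sum_{j\in J}j}$, which is Eq.~\eqref{eq:p23(a)}. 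Two checks are needed here. First, the exponent in Eq.~\eqref{eq:C2a} is the correct one: the inverse transform of $e^{\mathrm{i}sw}/(\mathrm{i}s)^9$ is $(w-x)^8_+/8!$. Second, the differentiation is legitimate: $q_{\bdlambda}$ is $C^7$, so differentiating termwise is harmless.

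\textbf{Step 2 (structure and breakpoints).} For $a\geqslant0$, every term with $w_J(\bdlambda)\leqslant0$ vanishes identically. Each remaining term is a polynomial of degree $7$ in $a$ on either side of its knot $w_J(\bdlambda)$. After multiplying by $a$, the density is piecewise polynomial of degree at most $8$, with interior breakpoints only at the knots $w_J(\bdlambda)>0$, i.e.\ $\sum_{j\in J}\lambda_j>\frac12$. This gives Eq.~\eqref{eq:breakpoints}.

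The largest knot is $w_{\max}=2(\lambda_1+\lambda_2+\lambda_3)-1$, attained at $J=\set{1,2,3}$. For $a>w_{\max}$ every truncated power vanishes, so $p^A_{\bdlambda}=0$ there. Just below $w_{\max}$ only the $J=\set{1,2,3}$ term survives. Its sign is $\varepsilon_J=(-1)^6=+1$, and $V_J,V_{J^c},V_6>0$ because $\bdlambda\in C_6$. So the density is strictly positive on a left neighbourhood of $w_{\max}$.

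\textbf{Step 3 (support is all of $[0,w_{\max}]$).} This is the main obstacle: ruling out interior cancellation of the alternating sum. Rather than scanning walls, reuse the argument from Corollary~\ref{cor:supp2qubits}. Since Haar measure has full support and $\bsU\mapsto a(\bsU)$ is continuous, $\op{supp}(p^A_{\bdlambda})$ equals the image $a(\sfU(6))$. Because $\sfU(6)$ is connected, this image is a compact interval, so it suffices to exhibit its two endpoints.

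The value $w_{\max}$ is attained by $\rho_{AB}=\Lambda$ in the product basis $\ket{0}\ket{j}$, $\ket{1}\ket{j}$. It is the maximum by Ky Fan's principle, since $\lambda_{\max}(\rho_A)\leqslant\lambda_1+\lambda_2+\lambda_3$.

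The value $a=0$ is attained by diagonalizing $\rho_{AB}$ in the orthonormal basis
$$
\frac{1}{\sqrt2}\Pa{\ket{0}\ket{k}\pm\ket{1}\ket{k+1}},\qquad k\in\mathbb{Z}_3 .
$$
Orthonormality is a direct check of the inner products. Each of these six vectors has reduced state $\I_2/2$, so $\rho_A=\I_2/2$ for any spectrum.

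Hence the support is exactly $[0,w_{\max}]$, which is Eq.~\eqref{eq:suppp23}. Since $p^A_{\bdlambda}$ is a nonnegative density, positivity on open subsets of this interval is automatic up to finitely many breakpoints.
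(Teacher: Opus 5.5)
Your proposal is correct and follows essentially the same route as the paper: differentiate Eq.~\eqref{eq:C2a} via $p=-2a\,q'$ to get the constant $\frac{3}{448}$, then identify the support as the connected image of $\sfU(6)$, with the upper endpoint from Ky Fan (attained by the product-basis arrangement) and $a=0$ from the same six states with maximally mixed qubit marginals. Your extra points are sound refinements of that same argument: the corrected exponent in the inverse transform, strict positivity just below $w_{\max}$, and passing from the measure's support to the closure of $\{p>0\}$ via piecewise polynomiality.
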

\begin{figure}[h!]\centering
{\begin{minipage}[b]{0.7\linewidth}
\includegraphics[width=1\textwidth]{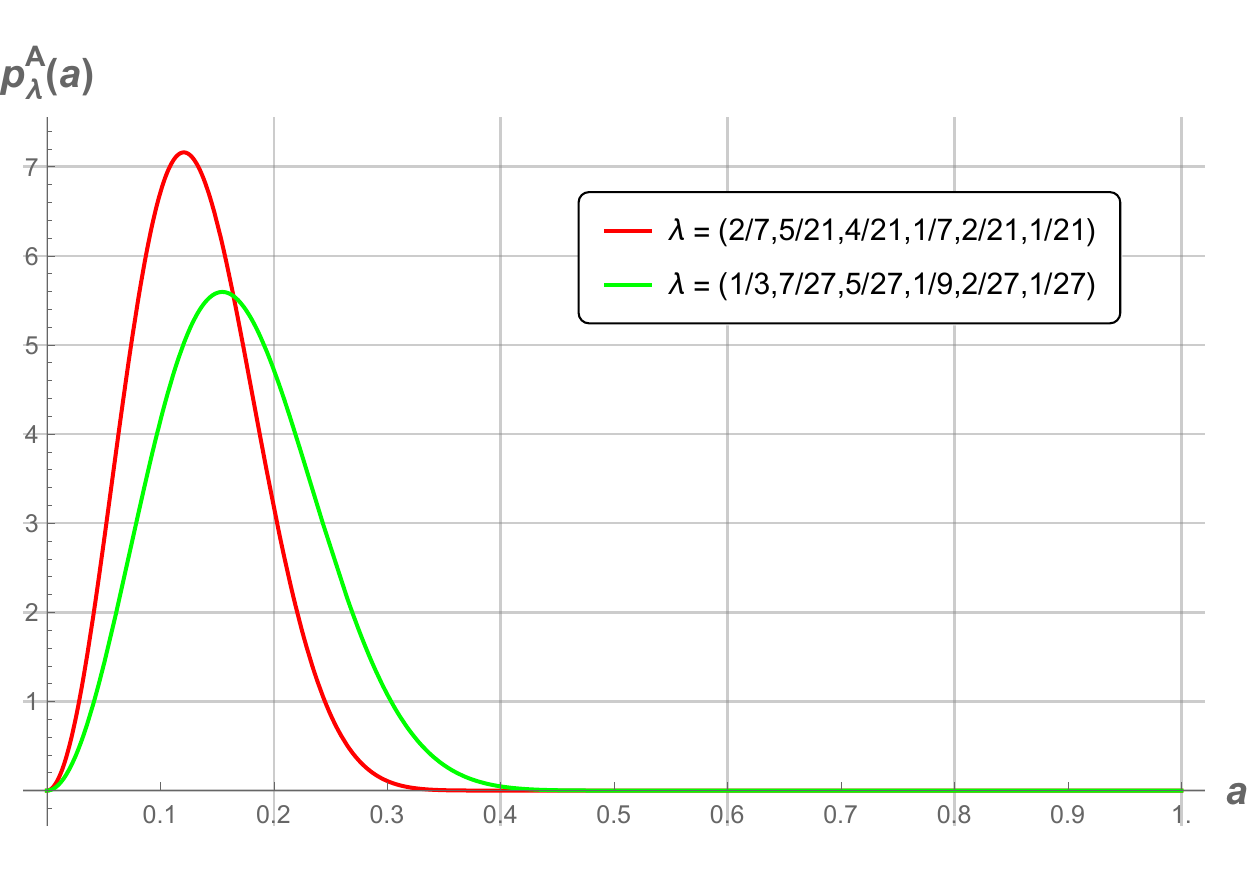}
\end{minipage}}
\caption{The probability density curve of a Bloch radius of a
marginal state of a qubit-qutrit unitary
orbit.}\label{fig:PDFBlochr2(3)}
\end{figure}

\begin{proof}
Differentiating Eq.~\eqref{eq:C2a} gives
$$
\frac{\dif}{\dif \alpha}(w_J-\alpha)^8_+=-8(w_J-\alpha)^7_+.
$$
Hence, using $p^A_{\bdlambda}(a)=(-2a)\frac{\dif
q^A_{\bdlambda}(a)}{\dif a}$,
\begin{eqnarray*}
p^A_{\bdlambda}(a) &=& -2a \frac{\dif}{\dif a}\Br{\frac{135}{8\cdot 8!V_6(\bdlambda)}\sum_{\abs{J}=3}\varepsilon_J V_J(\bdlambda)V_{J^c}(\bdlambda)(w_J-a)^8_+}\\
&=&\frac{3a}{448V_6(\bdlambda)}\sum_{\abs{J}=3}\varepsilon_J V_J(\bdlambda)V_{J^c}(\bdlambda)(w_J-a)^7_+.
\end{eqnarray*}
This proves Eq.~\eqref{eq:p23(a)}.

To determine the upper endpoint of the support, observe that
$$
\lambda_{\max}(\rho_A) = \max_{\norm{\psi}=1}\Tr{(\proj{\psi}\ot\I_3)\rho_{AB}}.
$$
The operator $\proj{\psi}\ot\I_3$ is a rank-three orthogonal projection. Ky Fan's variational principle therefore gives
$$
\lambda_{\max}(\rho_A)\leqslant \lambda_1+\lambda_2+\lambda_3.
$$
Since $\lambda_{\max}(\rho_A)=\frac{1+a}2$, we obtain $a\leqslant 2(\lambda_1+\lambda_2+\lambda_3)-1$. This upper bound is attained by choosing the eigenvectors associated with $\lambda_1,\lambda_2,\lambda_3$ inside the subspace $\ket{0}\ot\bbC^3$, and the remaining eigenvectors inside $\ket{1}\ot\bbC^3$.

The lower endpoint $a=0$ is also attainable. For example, consider the six orthonormal states
$$
\ket{\psi_{k,\pm}} = \frac{\ket{0,k}\pm \ket{1,k+1\pmod3}}{\sqrt{2}},\quad k=0,1,2.
$$
Each of these states has qubit marginal$\frac12\I_2$. Taking them as
the eigenvectors of $\rho_{AB}$, with arbitrary assignment of the
six eigenvalues $\lambda_j$, gives $\rho_A=\frac12\I_2$, and hence
$a=0$.

Finally, $\sfU(6)$ is connected and the map
$$
\bsU\mapsto \abs{\bsa\Pa{\ptr{B}{\bsU\Lambda\bsU^\dagger}}}
$$
is continuous. Its image is therefore a connected compact subset of
$\bbR$ containing both endpoints above, and hence is exactly the
interval in Eq.~\eqref{eq:suppp23}.
\end{proof}

\begin{remark}
Although Eq.~\eqref{eq:suppp23} is written as an alternating sum of
truncated powers, the complete expression is non-negative because it
is the density of a push-forward probability measure. Individual
summands need not be non-negative after multiplication by their
alternating coefficients.

The degree bound is eight, not seven: each active term has the form
$a(w_J-a)^7$, which is a polynomial of degree eight on every chamber
where the set of active truncated powers is fixed.
\end{remark}

\subsection{Joint distribution of the largest and smallest qutrit eigenvalues}

We next determine the spectral distribution of the qutrit marginal.
The derivation has two steps:
\begin{enumerate}
\item calculate the joint distribution of two diagonal entries of
$\rho_B$ in a fixed basis;
\item apply the $\SU(3)$ derivative principle to recover the ordered eigenvalue density.
\end{enumerate}

Let $z_j=\Innerm{j}{\rho_B}{j}$, where $j=1,2,3$. Since
$z_1+z_2+z_3=1$, it suffices to consider the pair $(z_1,z_3)$. Let
$q^B_{\bdlambda}(z_1,z_3)$ denote its joint density, with
characteristic function
\begin{eqnarray}\label{eq:widehatqBst}
\widehat q^B_{\bdlambda}(s,t) =
\int_{\bbR^2}e^{\mathrm{i}(sz_1+tz_3)}q^B_{\bdlambda}(z_1,z_3)\dif
z_1\dif z_3.
\end{eqnarray}
Because
$$
sz_1+tz_3=\Tr{(\I_2\ot\diag(s,0,t))\rho_{AB}},
$$
the relevant test matrix is $\bsH_{s,t}=\I_2\ot\diag(s,0,t)$. Its
eigenvalues are
$$
(s,s,0,0,t,t),
$$
so there are three double-degenerate blocks. Define
\begin{eqnarray}\label{eq:Dlambdast}
D_{\bdlambda}(s,t) = \det\Pa{\begin{array}{c}
                               e^{\mathrm{i}s\lambda_j} \\
                               \lambda_je^{\mathrm{i}s\lambda_j} \\
                               1 \\
                               \lambda_j \\
                               e^{\mathrm{i}t\lambda_j} \\
                               \lambda_je^{\mathrm{i}t\lambda_j}
                             \end{array}
}^6_{j=1}.
\end{eqnarray}
Expanding this determinant gives
\begin{eqnarray}\label{eq:numerator}
D_{\bdlambda}(s,t) = \sum_{\pi\in
S_6}\sign(\pi)\lambda_{\pi(2)}\lambda_{\pi(4)}\lambda_{\pi(6)}
e^{\mathrm{i}s(\lambda_{\pi(1)}+\lambda_{\pi(2)})}e^{\mathrm{i}t(\lambda_{\pi(5)}+\lambda_{\pi(6)})}.
\end{eqnarray}

\begin{prop}[Abelian qutrit characteristic function]
The characteristic function of the pair entries $(z_1,z_3)$ of the
qutrit marginal is
\begin{eqnarray}\label{eq:widehatqBst}
\widehat q^B_{\bdlambda}(s,t) =
-\frac{34560}{V_6(\bdlambda)}\frac{D_{\bdlambda}(s,t)}{s^4t^4(s-t)^4},
\end{eqnarray}
The apparent singularities at $(s,t)=(0,0)$, and $s=t$ are removable
in the complete determinant expression.
\end{prop}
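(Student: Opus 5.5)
The plan is to specialize the joint characteristic function theorem (Eq.~\eqref{eq:varphixy}) to the test matrix $\bsH_{s,t}=\I_2\ot\diag(s,0,t)$ and evaluate the triple-pair confluent limit with the identity \eqref{eq:confid}, in the same way as in the proof of Proposition~\ref{prop:C2C3}. Because $sz_1+tz_3=\Tr{\bsH_{s,t}\rho_{AB}}$, the characteristic function is the HCIZ integral
\begin{eqnarray*}
\widehat q^B_{\bdlambda}(s,t)=\gamma_6\,\mathrm{i}^{-15}\lim_{\bsh\to(s,s,0,0,t,t)}\frac{\det\Pa{e^{\mathrm{i}h_i\lambda_j}}^6_{i,j=1}}{V_6(\bsh)V_6(\bdlambda)},
\end{eqnarray*}
with $\gamma_6=34560$. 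I order the external eigenvalues as $(h_1,h_2,h_3,h_4,h_5,h_6)\to(s,s,0,0,t,t)$, so that $V_6(\bsh)$ factorizes into three intra-block differences $(h_1-h_2)$, $(h_3-h_4)$, $(h_5-h_6)$ and a cross-block part.

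\textbf{Step 1: cross-block factors.} The cross-block factors tend to
\begin{eqnarray*}
(s-0)^4\,(s-t)^4\,(0-t)^4=s^4t^4(s-t)^4,
\end{eqnarray*}
since each of the four pairs joining two blocks contributes one factor.

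\textbf{Step 2: each degenerate pair.} For each pair I apply \eqref{eq:confid} with $r=2$ to the corresponding two rows. The pair contributes a sign $(-1)^{1}$ and replaces the second row by the $h$-derivative of $e^{\mathrm{i}h\lambda_j}$, namely $\mathrm{i}\lambda_je^{\mathrm{i}h\lambda_j}$. For the middle block, at $h=0$ these rows become $1$ and $\mathrm{i}\lambda_j$. Pulling the factor $\mathrm{i}$ out of each of the three derivative rows turns the determinant into exactly $D_{\bdlambda}(s,t)$ from Eq.~\eqref{eq:Dlambdast}. The three pairs together give the factor $(-\mathrm{i})^3=\mathrm{i}$.

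\textbf{Step 3: collecting constants.} Since $\mathrm{i}^{-15}=\mathrm{i}$, the total prefactor is $\gamma_6\cdot\mathrm{i}\cdot\mathrm{i}=-34560$, which yields the stated formula. The expansion \eqref{eq:numerator} follows from the Leibniz formula. The factors $\lambda_{\pi(2)}\lambda_{\pi(4)}\lambda_{\pi(6)}$ come from the derivative rows, and the constant row contributes $1$.

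\textbf{Removable singularities.} The HCIZ integral is an entire function of $(s,t)$ (it is the Fourier transform of a compactly supported probability measure), so the quotient must extend continuously to $s=0$, $t=0$ and $s=t$. Hence the apparent poles are removable in the full determinant expression; equivalently, $D_{\bdlambda}$ vanishes to order $4$ along each of these lines.

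\textbf{Main obstacle.} I expect the main difficulty to be the careful bookkeeping of signs: the orientation convention of $V_6(\bsh)$ for the chosen ordering, the sign from each confluent pair, and the powers of $\mathrm{i}$. A single slip here flips the overall sign. I would cross-check the sign by the normalization $\widehat q^B_{\bdlambda}(0,0)=1$, or by comparing with the analogous two-qubit computation in the proof of Corollary~\ref{th:1sided}, where the same double-confluent structure produced the prefactor $-12/V_4(\bdlambda)$.
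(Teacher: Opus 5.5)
Your proposal is correct and follows the paper's proof essentially step for step. You specialize the HCIZ formula to $\bsH_{s,t}=\I_2\ot\diag(s,0,t)$ and take the three double-confluent limits, each contributing $-\mathrm{i}$, so together $(-\mathrm{i})^3=\mathrm{i}$, over the cross-block factor $s^4t^4(s-t)^4$. Combining this with $\gamma_6\mathrm{i}^{-15}=34560\,\mathrm{i}$ gives the prefactor $-34560$.

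Your entire-function argument for removability is a valid addition to what the paper leaves implicit. It also covers the lines $s=0$ and $t=0$, which the statement does not list.
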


\begin{proof}
Applying the double-confluent identity to each of the three
eigenvalue blocks gives
\begin{eqnarray}
\lim_{\bsh\to(s,s,0,0,t,t)}\frac{\det\Pa{e^{\mathrm{i}h_i\lambda_j}}^6_{i,j=1}}{V_6(\bsh)}=\frac{\mathrm{i}D_{\bdlambda}(s,t)}{s^4t^4(s-t)^4}.
\end{eqnarray}
Indeed, the cross-block part of the Vandermonde tends to
$s^4t^4(s-t)^4$, while the three double-confluent limits supply the
three derivative rows appearing in Eq.~\eqref{eq:Dlambdast}. Since
$\gamma_6\mathrm{i}^{-15}=34560\mathrm{i}$, the HCIZ formula yields
$$
\widehat q^B_{\bdlambda}(s,t)=
\frac{34560\mathrm{i}}{V_6(\bdlambda)}\frac{\mathrm{i}D_{\bdlambda}(s,t)}{s^4t^4(s-t)^4}=
-
\frac{34560}{V_6(\bdlambda)}\frac{D_{\bdlambda}(s,t)}{s^4t^4(s-t)^4}.
$$
This is Eq.~\eqref{eq:widehatqBst}.
\end{proof}
We now apply the $\SU(3)$ derivative principle. On the trace-one
plane, write the ordered eigenvalues as
$\bdbeta=(\beta_1,\beta_2,\beta_3)=(\beta,1-\beta-\gamma,\gamma)$.
The positive-root differential directions restrict to
\begin{eqnarray*}
\partial_{\beta_1}-\partial_{\beta_2}&\longleftrightarrow& \partial_\beta\\
\partial_{\beta_1}-\partial_{\beta_3}&\longleftrightarrow&
\partial_\beta-\partial_\gamma,\\
\partial_{\beta_2}-\partial_{\beta_3}&\longleftrightarrow&
-\partial_\gamma.
\end{eqnarray*}
With the present Vandermonde convention, the resulting Weyl
differential operator is
\begin{eqnarray}\label{eq:derivativeprinciple}
\cL:=\partial_\beta\partial_\gamma(\partial_\beta-\partial_\gamma).
\end{eqnarray}
The positive Weyl denominator is
\begin{eqnarray}\label{eq:posweyldenominator}
V_3(\beta,1-\beta-\gamma,\gamma)=(2\beta+\gamma-1)(\beta-\gamma)(1-\beta-2\gamma).
\end{eqnarray}
The $\SU(3)$ derivative principle gives, in the open Weyl chamber
$\cW_3$, the joint density of the largest and smallest eigenvalues
of $\rho_B$:
\begin{eqnarray}\label{eq:jointpdfpBbetagamma}
p^B_{\bdlambda}(\beta,\gamma)=\frac12V_3(\beta,1-\beta-\gamma,\gamma)\cL
q^B_{\bdlambda}(\beta,\gamma),
\end{eqnarray}
where the factor $\frac1{1!2!}=\frac12$ is the $\SU(3)$
normalization in the derivative principle.

Under inverse Fourier transformation, the operator $\cL$ corresponds
to multiplication by
$$
(-\mathrm{i}s)(-\mathrm{i}t)(-\mathrm{i}(s-t))=\mathrm{i}st(s-t).
$$
Combining this multiplier with Eq.~\eqref{eq:widehatqBst}, we obtain
\begin{eqnarray}\label{eq:LqBbetagamma}
\cL q^B_{\bdlambda}(\beta,\gamma)=\frac{34560}{V_6(\bdlambda)}
\cF^{-1}\Br{\frac{D_{\bdlambda}(s,t)}{(\mathrm{i}s)^3(\mathrm{i}t)^3(\mathrm{i}(s-t))^3}}(\beta,\gamma).
\end{eqnarray}
To evaluate the inverse transform, define the bivariate
truncated-power function
\begin{eqnarray}\label{eq:Tintegral}
\cT(x,y)=\frac18\int^x_{\max(0,-y)}(x-r)^2(y+r)^2r^2\dif r,
\end{eqnarray}
with the convention that $\cT(x,y)=0$ whenever $x<\max(0,-y)$.
Equivalently,
\begin{eqnarray}\label{eq:cTexpression}
\cT(x,y)=
\begin{cases}
\frac{x^5(2x^2+7xy+7y^2)}{1680},&\text{if
}x\geqslant0,y\geqslant0,\\
\frac{(x+y)^5(2x^2-3xy+2y^2)}{1680},&\text{if }x\geqslant0,
-x\leqslant
y\leqslant0,\\
0,&\text{otherwise}.
\end{cases}
\end{eqnarray}
Its derivation is presented in Appendix~\ref{app:Txy}. Thus $\cT$ is
supported on the closed cone
$$
\Set{(x,y)\in\bbR^2: x\geqslant0,x+y\geqslant0}
$$
and is piecewise polynomial of degree seven.

The basic inverse-transform identity is
\begin{eqnarray}\label{eq:expUV}
\cF^{-1}\Br{\frac{e^{\mathrm{i}(sU+tV)}}{(\mathrm{i}s)^3(\mathrm{i}t)^3(\mathrm{i}(s-t))^3}}(\beta,\gamma)=\cT(U-\beta,V-\gamma).
\end{eqnarray}
Indeed,
$$
\frac1{(\mathrm{i}s)^3} =
\int^\infty_0\frac{A^2}{2!}e^{-\mathrm{i}sA}\dif A,
$$
and analogously for the other two factors. Consequently,
\begin{eqnarray}
&&\cF^{-1}\Br{\frac{e^{\mathrm{i}(sU+tV)}}{(\mathrm{i}s)^3(\mathrm{i}t)^3(\mathrm{i}(s-t))^3}}(\beta,\gamma)\\
&&=\frac1{(2!)^3}\int_{\bbR^3_{\geqslant0}}A^2B^2C^2\delta(U-\beta-A-C)\delta(V-\gamma-B+C)\dif
A\dif B\dif C.
\end{eqnarray}
Writing $(x,y)=(U-\beta,V-\gamma)$, the delta functions give
$(A,B)=(x- C,y+C)$. The non-negativity constraints become
$\max(0,-y)\leqslant C\leqslant x$, which gives
Eq.~\eqref{eq:Tintegral}.

\begin{thrm}[Joint density of largest and smallest qutrit eigenvalues]
For a random qubit-qutrit state on the regular unitary orbit
$\cU_{\Lambda}$, the joint probability density of largest and
smallest eigenvalues,
$$
(\beta,\gamma)=(\lambda_{\max}(\rho_B),\lambda_{\min}(\rho_B))
$$
of the qutrit marginal state
$\rho_B=\ptr{A}{\bsU\Lambda\bsU^\dagger}$, is
\begin{eqnarray}\label{eq:qutritpBbetagamma}
p^B_{\bdlambda}(\beta,\gamma)
&=&\frac{17280V_3(\beta,1-\beta-\gamma,\gamma)}{V_6(\bdlambda)}\sum_{\pi\in
S_6}\sign(\pi)\lambda_{\pi(2)}\lambda_{\pi(4)}\lambda_{\pi(6)}\notag\\
&&\times
\cT(\lambda_{\pi(1)}+\lambda_{\pi(2)}-\beta,\lambda_{\pi(5)}+\lambda_{\pi(6)}-\gamma),
\end{eqnarray}
for
$(\beta,\gamma)\in\cW_3=\Set{(\beta,\gamma):\beta>1-\beta-\gamma>\gamma\geqslant0}$,
and is zero outside the closed qutrit Weyl chamber
$\overline{\cW}_3=\Set{(\beta,\gamma):\beta\geqslant1-\beta-\gamma\geqslant\gamma\geqslant0}$.
Its support is the one-marginal spectral polytope
\begin{eqnarray}\label{eq:support}
\op{supp}(p^B_{\bdlambda}) &=&
\overline{\Set{(\beta,\gamma)\in\cW_3:
p^B_{\bdlambda}(\beta,\gamma)>0}}\notag\\
&=&\Set{(\lambda_{\max}(\ptr{A}{\bsU\Lambda\bsU^\dagger}),\lambda_{\min}(\ptr{A}{\bsU\Lambda\bsU^\dagger})):\bsU\in\sfU(6)}
\end{eqnarray}
In particular, every point in the support satisfies the necessary
bounds
\begin{eqnarray}\label{eq:beta3gamma}
\lambda_5+\lambda_6\leqslant \gamma\leqslant \frac13\leqslant
\beta\leqslant \lambda_1+\lambda_2.
\end{eqnarray}
\end{thrm}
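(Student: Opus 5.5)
The plan is to assemble the formula from the pieces already in place: the Abelian characteristic function \eqref{eq:widehatqBst}, the multiplier identity \eqref{eq:LqBbetagamma}, the inverse-transform identity \eqref{eq:expUV}, and the $\SU(3)$ derivative principle \eqref{eq:jointpdfpBbetagamma}.

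\textbf{Step 1: the density formula.} Substitute the expansion \eqref{eq:numerator} of $D_{\bdlambda}(s,t)$ into \eqref{eq:LqBbetagamma}. Each summand has the form $\sign(\pi)\lambda_{\pi(2)}\lambda_{\pi(4)}\lambda_{\pi(6)}\,e^{\mathrm{i}(sU_\pi+tV_\pi)}$, with
\[
U_\pi=\lambda_{\pi(1)}+\lambda_{\pi(2)},\qquad V_\pi=\lambda_{\pi(5)}+\lambda_{\pi(6)}.
\]
By linearity and \eqref{eq:expUV}, each summand inverts to $\cT(U_\pi-\beta,V_\pi-\gamma)$. This must be done with the same polarization in every term, namely the positive-ray (one-sided Laplace) representation used to derive \eqref{eq:expUV}. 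The justification is that the full quotient is an entire function whose inverse transform is a compactly supported measure. Moreover, individual polarized terms differ from the true distribution only by polynomials of degree $<$ (the degree of alternation), which cancel in the signed sum by the same alternating argument as Lemma~\ref{lem:vanish}. Multiplying by $\frac12V_3(\beta,1-\beta-\gamma,\gamma)$ then gives \eqref{eq:qutritpBbetagamma}, since $\frac12\cdot34560=17280$. Before relying on the constant, I would check normalization on a simple degenerate limit or numerically; this check would also catch sign errors in $\cL$ coming from the Vandermonde convention.

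\textbf{Step 2: support equals the image.} Vanishing outside $\overline{\cW}_3$ holds by definition, since $(\beta,\gamma)$ are ordered eigenvalues. The equality of the support with the image of $\sfU(6)$ repeats the argument of Corollary~\ref{cor:supp2qubits}. The map $\bsU\mapsto(\lambda_{\max},\lambda_{\min})(\ptr{A}{\bsU\Lambda\bsU^\dagger})$ is continuous, and Haar measure has full support. Hence the support of the push-forward is exactly the compact image, and this image is the closure of the positivity set of the continuous density.

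\textbf{Step 3: the bounds \eqref{eq:beta3gamma}.} The inequalities $\gamma\le\frac13\le\beta$ follow from the trace being $1$ and the ordering. For the other two, write
\[
\beta=\max_\psi\Tr{(\I_2\ot\proj{\psi})\rho_{AB}}.
\]
Here $\I_2\ot\proj{\psi}$ is a rank-two projection, so Ky Fan's principle gives $\beta\le\lambda_1+\lambda_2$. Symmetrically, the minimum over $\psi$ gives $\gamma\ge\lambda_5+\lambda_6$.

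\textbf{Main obstacle.} The hardest step is making the distributional inversion rigorous in Step~1. Each individual term $e^{\mathrm{i}(sU+tV)}/(s^3t^3(s-t)^3)$ is not locally integrable, so its inverse transform depends on the polarization chosen. The plan is to fix a regularization $s\to s-\mathrm{i}0$, $t\to t-\mathrm{i}0$, $s-t\to (s-t)-\mathrm{i}0$ compatible with the cone $A,B,C\ge0$. Under this choice the inverse transform is exactly the cone integral, and removability of the singularities in the full sum shows the sum is independent of the choice. A secondary check is that $\cL$ and multiplication by $\mathrm{i}st(s-t)$ match the orientation used in \eqref{eq:derivativeprinciple}.
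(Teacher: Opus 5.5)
Your proposal is correct and follows the paper's proof essentially step for step. You substitute the expansion \eqref{eq:numerator} into \eqref{eq:LqBbetagamma}, invert term by term with \eqref{eq:expUV}, and multiply by $\frac12 V_3(\beta,1-\beta-\gamma,\gamma)$, which gives the constant $\frac12\cdot 34560=17280$. You then identify the support with the image of a continuous map under full-support Haar measure, and you obtain \eqref{eq:beta3gamma} from Ky Fan's principle applied to the rank-two projection $\I_2\ot\proj{\psi}$.

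Your justification of the termwise inversion is sound: you use one regularization compatible with the cone $A,B,C\geqslant0$ and note that the full quotient is entire. The paper leaves this point implicit. Your side remark, however, should be dropped: differently polarized terms do not differ only by polynomials that cancel as in Lemma~\ref{lem:vanish}. In two variables the differences are polynomial in one direction times a truncated power in another, and the remark is not needed for the argument.
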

The graph of $p^B_{\bdlambda}(\beta,\gamma)$ in
Eq.~\eqref{eq:qutritpBbetagamma} is depicted in
Figure~\ref{fig:jointdensity(2)3}.
\begin{figure}[h!]\centering
{\begin{minipage}[b]{0.8\linewidth}
\includegraphics[width=1\textwidth]{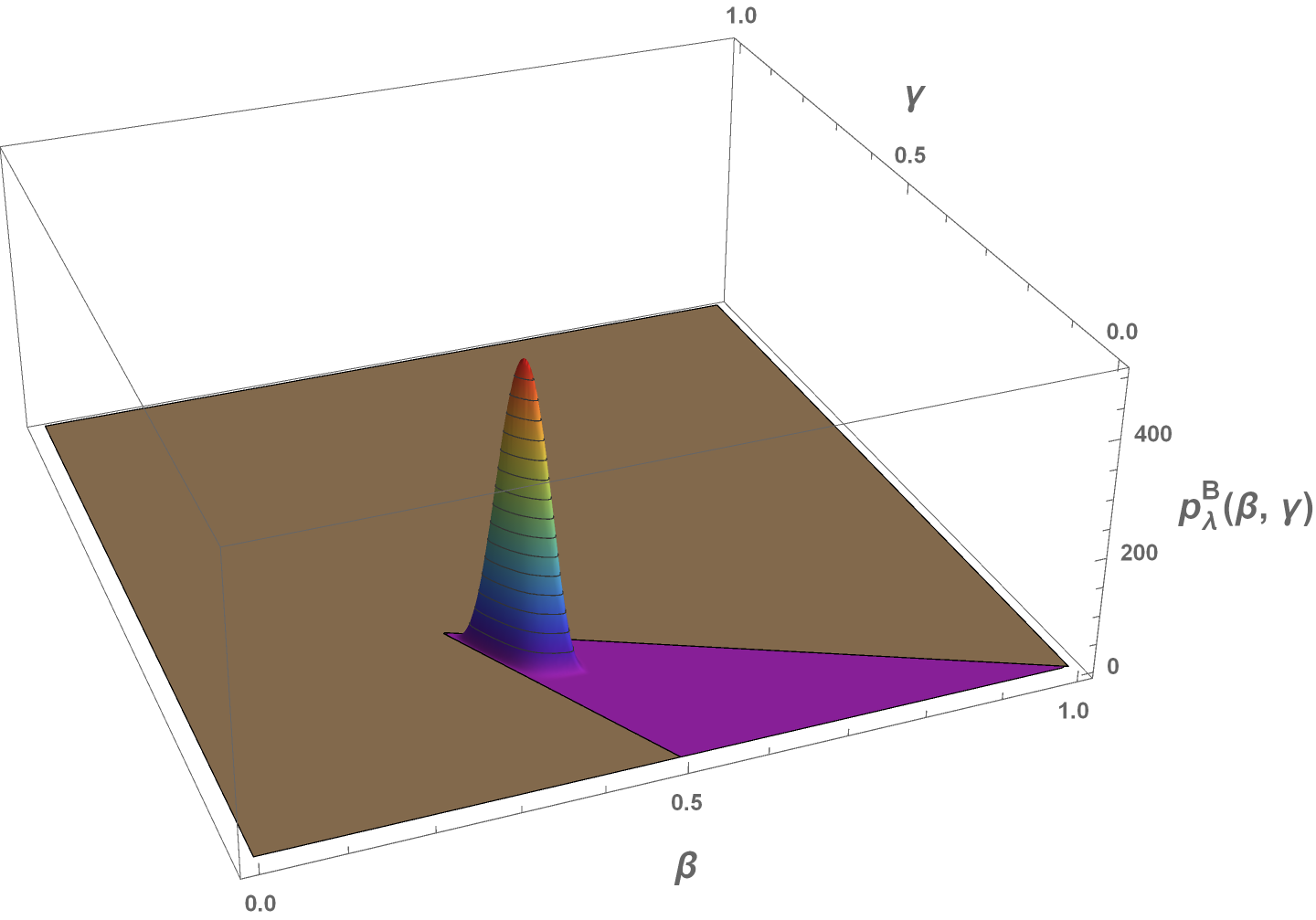}
\end{minipage}}
\caption{The joint density of largest and smallest eigenvalues of a
qutrit marginal state of a qubit-qutrit unitary orbit, where
$\bdlambda=(0.2380,0.2342,0.2296,0.1136,0.1109,0.0736)$. Gray color
is used outside the support.}\label{fig:jointdensity(2)3}
\end{figure}

\begin{proof}
Substituting the determinant expansion Eq.~\eqref{eq:numerator} into
Eq.~\eqref{eq:LqBbetagamma}, and then applying Eq.~\eqref{eq:expUV}
gives
\begin{eqnarray}
\cL q^B_{\bdlambda}(\beta,\gamma) &=&
\frac{34560}{V_6(\bdlambda)}\sum_{\pi\in
S_6}\sign(\pi)\lambda_{\pi(2)}\lambda_{\pi(4)}\lambda_{\pi(6)}\\
&&\times
\cT(\lambda_{\pi(1)}+\lambda_{\pi(2)}-\beta,\lambda_{\pi(5)}+\lambda_{\pi(6)}-\gamma)
\end{eqnarray}
Applying Eq.~\eqref{eq:jointpdfpBbetagamma} proves Eq.~\eqref{eq:qutritpBbetagamma}.

The equality in Eq.~\eqref{eq:support} follows from the definition of the push-forward measure. Haar measure has full support on $\sfU(6)$, and the map
$$
\bsU\mapsto (\lambda_{\max}(\ptr{A}{\bsU\Lambda\bsU^\dagger}),
\lambda_{\min}(\ptr{A}{\bsU\Lambda\bsU^\dagger}))
$$
is continuous. Therefore, the support of its push-forward is precisely its compact image.

To obtain the bounds in Eq.~\eqref{eq:beta3gamma}, for any unit
vector $\ket{\varphi}\in\bbC^3$, consider the rank-two projection
$\bsP_{\varphi}=\I_2\ot\proj{\varphi}$. Then
$\Innerm{\varphi}{\rho_B}{\varphi}=\Tr{\rho_{AB}\bsP_{\varphi}}$. Ky
Fan's maximal and minimal variational principles give
$$
\lambda_5+\lambda_6\leqslant \Tr{\rho_{AB}\bsP_{\varphi}}\leqslant
\lambda_1+\lambda_2.
$$
Taking the minimum and maximum over $\ket{\varphi}$ yields
$$
\lambda_1+\lambda_2\geqslant \beta\geqslant\gamma\geqslant \lambda_5+\lambda_6.
$$
The remaining inequalities $\beta\geqslant \frac13\geqslant\gamma$ follow from the ordering and unit-trace condition for a qutrit spectrum. These bounds are necessary but, in general, need not constitute a complete irredundant description of the marginal polytope.
\end{proof}
The complete constraints on the spectra
$(\bdlambda(\rho_{AB}),\bdlambda(\rho_A),\bdlambda(\rho_B))$ were
obtained in \cite{Klyachko2004}. In contrast, the support of
$p^B_{\bdlambda}(\beta,\gamma)$ defined above is distinct from
Klyachko's inequalities for the qubit-qutrit system.

\begin{remark}[Reduction from $720$ to $90$ terms]
The sum over $S_6$ in Eq.~\eqref{eq:qutritpBbetagamma} can be
reduced from $6!=720$ terms to $90$ terms. Let
$$
P_{2,2,2}=\Set{(I,J,K):I\sqcup J\sqcup
K=\set{1,\ldots,6},\abs{I}=\abs{J}=\abs{K}=2},
$$
where $(I,J,K)$  is an ordered triple of disjoint pairs. If
$$
I=\set{i_1<i<_2}, \quad J=\set{j_1<j_2}, \quad K=\Set{k_1<k_2},
$$
define $\delta_I=\lambda_{i_2}-\lambda_{i_1},
\Lambda_I=\lambda_{i_1}+\lambda_{i_2}$, and analogously for $J$ and
$K$. Also define
$$
\varepsilon(I,J,K)=\sign(i_1,i_2,j_1,j_2,k_1,k_2).
$$
Then the determinant in Eq.~\eqref{eq:Dlambdast} can be written as
$$
D_{\bdlambda}(s,t)=\sum_{(I,J,K)\in
P_{2,2,2}}\varepsilon(I,J,K)\delta_I\delta_J\delta_Ke^{\mathrm{i}(s\Lambda_I
+t\Lambda_K)}.
$$
Consequently,
\begin{eqnarray}\label{eq:pBlbg}
p^B_{\bdlambda}(\beta,\gamma)=\frac{17280V_3(\beta,1-\beta-\gamma,\gamma)}{V_6(\bdlambda)}\sum_{(I,J,K)\in
P_{2,2,2}}\varepsilon(I,J,K)\delta_I\delta_J\delta_K\cT(\Lambda_I-\beta,\Lambda_K-\gamma).
\end{eqnarray}
Because $\abs{P_{2,2,2}}=\frac{6!}{(2!)^3}=90$, Eq.~\eqref{eq:pBlbg}
is substantially more efficient for symbolic and numerical
evaluation.
\end{remark}

\begin{remark}[Piecewise-polynomial structure]
The truncated-power function $\cT$ is piecewise polynomial of degree
$7$. The qutrit Weyl denominator
$V_3(\beta,1-\beta-\gamma,\gamma)=(2\beta+\gamma-1)(\beta-\gamma)(1-\beta-2\gamma)$
is a polynomial of degree $3$. Therefore
$p^B_{\bdlambda}(\beta,\gamma)$ is piecewise polynomial of degree at
most $10$.

The walls of the chamber decomposition are contained in
$$
\beta=\Lambda_I,\quad\gamma=\Lambda_K,\quad\beta+\gamma=\Lambda_I+\Lambda_K,
$$
where $I$ and $K$ are disjoint two-element subsets of
$\set{1,\ldots,6}$.These walls arise from the three boundary rays
$x=0,y=0,x+y=0$ of the bivariate truncated-power function
$\cT(x,y)$.
\end{remark}

%================================================================%
\section{Concluding remarks}
\label{sect:6}
%================================================================%

The marginal Bloch radii are natural invariants that encode local
purity, constrain the possible marginal states, and provide a
practical measure of how global correlations are distributed between
local and nonlocal degrees of freedom within a fixed unitary orbit.
Motivated by this, we studied marginal spectral distributions
induced by the Haar orbital measure on a bipartite unitary orbit
with fixed regular spectrum. The joint matrix-valued characteristic
function is an HCIZ integral whose external eigenvalues are the
pairwise sums $x_i+y_j$. In low dimensions, confluent limits reduce
this integral to rational Fourier transforms, and their inverses are
naturally expressed by univariate and multivariate truncated-power
functions.

In the two-qubit case, we obtained the full joint density of the two
marginal Bloch radii. Its support is the Bravyi compatibility
region, while the density supplies a probabilistic refinement of
that deterministic region. We also derived a compact one-variable
formula for either individual marginal spectrum. In the qubit-qutrit
case, we obtained the qubit one-marginal Bloch-radius density and
the qutrit one-marginal density of its largest and smallest
eigenvalues; these are separate one-marginal laws, and deriving
their full joint distribution remains open.

The basic truncated-power functions appearing in the formulas are
supported on polyhedral cones. Compact support arises only after
taking the complete alternating sums dictated by the HCIZ
determinant. The resulting densities are piecewise polynomial on
chambers determined by subset sums of the fixed global spectrum, in
agreement with the Duistermaat-Heckman description of projected
coadjoint-orbit measures.

Several directions can be considered in the future research.
\begin{itemize}
\item \textbf{Full joint distributions and higher-dimensional systems.}
We continue to consider marginal states on a unitary orbit of fixed
spectrum $\Lambda$. The natural next step is the joint density
$p_{\bdlambda}(a,\beta,\gamma)$ of the qubit Bloch radius $a$ and
the largest $\beta$ and smallest qutrit eigenvalues $\gamma$ in the
qubit-qutrit case, which should be a multivariate spline supported
on the full qubit-qutrit marginal polytope. For general $m$ and $n$,
a systematic treatment will require efficient confluent HCIZ
formulas with general block multiplicities, possibly via divided
differences, Schur-function expansions, residue methods, or
multivariate truncated powers.
\item \textbf{Explicit descriptions of marginal polytopes.}
Marginal spectra alone do not determine whether a mixed bipartite
state is separable or entangled. Therefore, although the spectral
densities derived in this work fully describe the random
distribution of the marginal spectra, integrating them over a subset
of the marginal polytope cannot directly yield a general
separability probability. Nevertheless, these densities are useful
for studying quantities that are determined or constrained by the
local spectra, such as marginal purities (e.g., $\bbE[(1+a^2)/2]$),
local von Neumann entropies, and spectral asymmetry. They can also
serve as a building block in more refined calculations, for instance
by combining them with distributions of correlation tensors to
investigate conditional distributions of entanglement measures given
fixed marginal spectra. The support of each density is exactly the
corresponding spectral marginal polytope. In the two-qubit case,
this polytope is completely described by the Bravyi-Klyachko
inequalities, whereas in higher dimensions it is generally only
characterized abstractly via Klyachko's representation-theoretic
criteria. Thus, a valuable future direction would be to
systematically map the walls appearing in the spline formulas (which
originate from subset sums of the global eigenvalues) to a minimal
generating set of Klyachko-type inequalities. In particular, for the
qutrit marginal in the qubit-qutrit system, the elementary Ky Fan
bounds derived here provide only necessary conditions and may not be
sufficient; finding a concise and complete inequality system for
this marginal polytope remains an important open problem.
\item \textbf{Degenerate global spectra.}
When some $\lambda_j$ coincide, the apparent singularity from $V_N(\bdlambda)=0$ is removable in the complete orbital integral,
and the degenerate case can be obtained by continuous confluent limits.
Such limits may simplify the final formulas and are relevant for global states with few distinct eigenvalues,
including normalized projectors and depolarized pure states.
\item \textbf{Efficient computation and wall-crossing.}
Direct numerical evaluation can suffer from cancelation between large alternating terms,
especially near chamber walls or when global eigenvalues are close.
Stable implementations should group terms and exploit the chamber structure of the truncated powers.
Wall-crossing formulas for Duistermaat-Heckman measures may offer an alternative
by propagating a polynomial piece across adjacent walls instead of
evaluating the full alternating sum independently on every chamber.
\end{itemize}

In conclusion, the present framework connects three complementary
aspects of the quantum marginal problem: HCIZ characteristic
functions, truncated powers and splines, marginal spectral
distributions. The support of each density recovers the associated
deterministic compatibility region, while the density itself
describes how Haar orbital measure is distributed inside that
region. The explicit two-qubit and qubit-qutrit formulas illustrate
how random-matrix methods, Fourier analysis, and symplectic geometry
can be combined to obtain quantitative probabilistic information
beyond spectral compatibility alone.

\subsection*{Acknowledgments}

The author gratefully acknowledges Dr. Jiyuan Zhang for insightful
discussions on the treatment of confluent
Harish-Chandra-Itzykson-Zuber integrals used in this work.

\subsection*{Declaration on the use of generative AI}

The author designed the study, carried out the mathematical
analysis, and prepared the initial manuscript. Generative-AI tools
were used for language refinement and assistance with computational
checks. All mathematical statements and computations were
independently reviewed and verified by the author, who takes full
responsibility for the content of the manuscript.

%------------------------------------------------------------%
\newpage
\appendix
\appendixpage
\addappheadtotoc
%------------------------------------------------------------%

%=====================================================================%
\section{Introduction to truncated power function}\label{app:tpf}
%=====================================================================%

In mathematics, particularly in approximation theory, numerical
analysis, and signal processing, a \emph{truncated power function}
(often denoted with a subscript plus sign, e.g., $(x-a)^n_+$) is a
piecewise-defined function that acts as a "switch"---it is
identically zero before a cutoff point $a$, and behaves like a
standard power function $(x-a)^n$ afterward.

It serves as the fundamental building block for polynomial splines
and is deeply connected to distribution theory (as the $n$-fold
integral of the Dirac delta function).
\begin{definition}
For a non-negative integer $n\in\bbZ_{\geqslant0}$ and a real-valued
cutoff point $a$, the truncated power function is defined as:
\begin{eqnarray*}
(x - a)^n_+ :=
\begin{cases}
(x - a)^n, & x \geqslant a, \\
0, & x < a.
\end{cases}
\end{eqnarray*}
\end{definition}
(Note: At $x = a$, the value is generally defined as $0$ for $n=0$
to make it right-continuous, though the exact value at a single
point is often irrelevant in integration and interpolation.)

Special cases are included here:
\begin{itemize}
\item $n = 0$: This is the \emph{Heaviside step function} shifted to
$a$:
$$
(x - a)^0_+ = \begin{cases} 1, & x \geqslant a \\ 0, & x < a
\end{cases}
$$
\item $n = 1$: This is the \emph{ramp function}:
$$
(x - a)^1_+ = \max(x - a, 0).
$$
\item $n = 2$: This is the \emph{quadratic hinge function}, widely used
in machine learning (e.g., Support Vector Machines) and structural
engineering.
\end{itemize}
The function $(x-a)^n_+$ is exactly $n-1$ times continuously
differentiable. At $a$, the $n$-th derivative has a jump
discontinuity. The following identity will be used
\begin{eqnarray*}
\frac{\dif}{\dif x}(x-a)^n_+ = n(x-a)^{n-1}_+,\quad\forall
n\geqslant1.
\end{eqnarray*}

%=====================================================================%
\section{The relationship between $p(a)$ and $q(\alpha)$}
\label{app:pvsq}
%=====================================================================%

Let $\bsa=(a_1,a_2,a_3)\in \bbR^3$ be the Bloch vector. Rotational
invariance means its density depends only on
$a=\abs{\bsa}=\sqrt{a_1^2+a_2^2+a_3^2}$. So write the joint density
as $f(\bsa)=g(a)$.

Let $p(a)$ be the density of the radial variable $a$. Since the
surface area of a sphere of radius $a$ is $4\pi a^2$, it follows
that $p(a)=4\pi a^2 g(a)$. Now let $\alpha=a_3$. The density of
$\alpha$ is
\begin{eqnarray*}
q(\alpha)=\int_{a_1^2+a_2^2\leqslant 1-\alpha^2}
g\Pa{\sqrt{a_1^2+a_2^2+\alpha^2}}\dif a_1\dif a_2.
\end{eqnarray*}
Using polar coordinates in the $(a_1,a_2)$-plane, with
$r=\sqrt{a_1^2+a_2^2}$,
\begin{eqnarray*}
q(\alpha)=2\pi\int_0^{\sqrt{1-\alpha^2}}
rg\Pa{\sqrt{r^2+\alpha^2}}\dif r.
\end{eqnarray*}
Make the change of variables $a=\sqrt{r^2+\alpha^2}$. Then $a\dif
a=r \dif r$ and when $r=0, a=\abs{\alpha}$; when
$r=\sqrt{1-\alpha^2},a=1$. Hence
\begin{eqnarray*}
q(\alpha)=2\pi\int^1_{\abs{\alpha}} a g(a)\dif
a=\int^1_{\abs{\alpha}}\frac{p(a)}{2a}\dif a.
\end{eqnarray*}
where in the last equality, we used the fact that $p(a)=4\pi
a^2g(a)$. Therefore now take the derivative at $\alpha>0$. Then
\begin{eqnarray*}
q(\alpha)=\int^1_\alpha\frac{p(a)}{2a}\dif a.
\end{eqnarray*}
Differentiating with respect to $\alpha=a$,
$q'(a)=-\frac{p(a)}{2a}$, i.e.,$p(a)=(-2a)q'(a)$ for $a>0$.\qed

%=====================================================================%
\section{The relationship between $p(a,b)$ and $q(\alpha,\beta)$}
\label{app:pabvsqab}
%=====================================================================%

To prove the two-dimensional relations, we follow the exact same
logic used for the single-vector case, but applied to the joint
distribution of two Bloch vectors that need not be probabilistically
independent. The relevant property is invariance under independent
rotations: $(\bsa,\bsb)\mapsto (\bsR_A\bsa,\bsR_B\bsb)$ for all
$\bsR_A,\bsR_B\in\SO(3)$.

Let the two Bloch vectors be $\bsa,\bsb\in\bbR^3$. Because of
rotational invariance, their joint density depends only on their
lengths: $a=\abs{\bsa}$ and $b=\abs{\bsb}$. Denote this joint
density over the six-dimensional space as $g(a,b)$.
\begin{itemize}
\item\textbf{Step 1: Relate the radial joint density $p(a,b)$ to
$g(a,b)$.} For a single vector of length $r$, the spherical volume
element is $r^2\sin\theta\dif r\dif\theta\dif\phi$. Integrating over
the angular coordinates $(\theta,\phi)$ gives the factor $4\pi$. For
two vectors, we have independent angular integrations, yielding
$(4\pi)^2=16\pi^2$. The volume element in $\bbR^6$ is
$$
a^2b^2\sin\theta_a\sin\theta_b\dif a\dif
b\dif\theta_a\dif\phi_a\dif\theta_b\dif\phi_b.
$$
Thus, the joint density of the radii $(a,b)$ is obtained by
integrating out all angular variables:
\begin{eqnarray*}
p(a,b)=16\pi^2 a^2b^2g(a,b)\Longleftrightarrow
g(a,b)=\frac{p(a,b)}{16\pi^2a^2b^2}.
\end{eqnarray*}
\item\textbf{Step 2: Express $q(\alpha,\beta)$ as an integral over transverse
components.} Let $\alpha=a_3$ and $\beta=b_3$ be the fixed Cartesian
components. The marginal density of $(\alpha,\beta)$ is
$$
q(\alpha,\beta) =
\int_{\bbR^2\times\bbR^2}g\Pa{\sqrt{r^2_a+\alpha^2},\sqrt{r^2_b+\beta^2}}\dif
a_1\dif a_2\dif b_1\dif b_2,
$$
where $r_a=\sqrt{a^2_1+a^2_2}$ and $r_b=\sqrt{b^2_1+b^2_2}$. Using
polar coordinates in each transverse plane, $\dif a_1\dif
a_2=r_a\dif r_a\dif\phi_a$ and $\dif b_1\dif b_2=r_b\dif
r_b\dif\phi_b$. Integrating over the angles gives $(2\pi)^2=4\pi^2$.
Hence
\begin{eqnarray*}
q(\alpha,\beta)=4\pi^2\int^{\sqrt{1-\alpha^2}}_0\int^{\sqrt{1-\beta^2}}_0
r_ar_b g\Pa{\sqrt{r^2_a+\alpha^2,r^2_b+\beta^2}}\dif r_a\dif r_b.
\end{eqnarray*}
Now change variables: $a=\sqrt{r^2_a+\alpha^2}$ and
$b=\sqrt{r^2_b+\beta^2}$. Then $a\dif a=r_a\dif r_a$ and $r_b\dif
r_b=b\dif b$. The limits become: when $r_a=0,a=\abs{\alpha}$; when
$r_a=\sqrt{1-\alpha^2},a=1$. Similarly for $b$. Therefore,
\begin{eqnarray*}
q(\alpha,\beta)&=&4\pi^2 \int^1_{\abs{\alpha}}\int^1_{\abs{\beta}}
ab
g(a,b)\dif a\dif b\\
&=& \int^1_{\abs{\alpha}}\int^1_{\abs{\beta}}\frac{p(a,b)}{4ab}\dif
a\dif b,
\end{eqnarray*}
where we used the fact that $g(a,b)=\frac{p(a,b)}{16\pi^2a^2b^2}$.
\item\textbf{Step 3: Derive the inverse relation.} For
$(\alpha,\beta)\in\bbR^2_{>0}$, Differentiate with respect to
$\alpha=a$ and then $\beta=b$, we get that
$p(a,b)=(4ab)\partial_a\partial_b q(a,b)$ for $(a,b)\in\bbR^2_{>0}$.
\end{itemize}
Thus, the relation is rigorously proven by the change of variables
and differentiation of the integral form.\qed

%====================================================================================%
\section{Derivation and evaluation of the integral in Eq.~\eqref{eq:Gxy}}
\label{app:computing}
%====================================================================================%

To this end, we introduce the following four vectors:
$$
\bsw_1:=(1,0)^\t,\bsw_2:=(0,1)^\t,\bsw_3:=\bsw_1+\bsw_2,\bsw_4:=\bsw_1-\bsw_2.
$$
Apparently
$$
\bbR^2=\op{Span}_{\bbR}\set{\bsw_1,\bsw_2,\bsw_3,\bsw_4}.
$$
Denote $\bsW=(\bsw_1,\bsw_2,\bsw_3,\bsw_4)$, which is a $2\times 4$
matrix. Using delta function of vector argument \cite{Zhang2021},
consider the following \emph{convex polytope} in
$\bbR^4_{\geqslant0}$ (parameterized by
$\bdomega=(x,y)^\t\in\bbR^2$), defined by
\begin{eqnarray*}
P(\bdomega) :=
\Set{\bst=(t_1,t_2,t_3,t_4)^\t\in\bbR^4_{\geqslant0}\mid
\bsW\bst=\sum^4_{k=1}t_k\bsw_k=\bdomega}.
\end{eqnarray*}
Its Lebesgue volume is given by
\begin{eqnarray*}
\vol_L[P(\bdomega)] =
\sqrt{\det(\bsW\bsW^\t)}\int_{\bbR^4_{\geqslant0}}\delta\Pa{\bdomega-\bsW\bst}[\dif
\bst] = 3\int_{\bbR^4_{\geqslant0}}\delta\Pa{\bdomega-\bsW\bst}[\dif
\bst].
\end{eqnarray*}
Its Fourier transform $\bdomega=(x,y)^\t\to \bsz=(s,t)^\t$ is given
\begin{eqnarray*}
&&\int_{\bbR^2}\vol_L[P(\bdomega)]e^{\mathrm{i}\Inner{\bdomega}{\bsz}}[\dif\bdomega]
= 3
\int_{\bbR^2}[\dif\bdomega]e^{\mathrm{i}\Inner{\bdomega}{\bsz}}\int_{\bbR^4_{\geqslant0}}\delta\Pa{\bdomega-\bsW\bst}[\dif
\bst]\\
&&=3\int_{\bbR^4_{\geqslant0}}[\dif
\bst]\int_{\bbR^2}[\dif\bdomega]e^{\mathrm{i}\Inner{\bdomega}{\bsz}}\delta\Pa{\bdomega-\bsW\bst}=3\int_{\bbR^4_{\geqslant0}}[\dif
\bst]e^{\mathrm{i}\Inner{\bsW\bst}{\bsz}}\\
&&=\frac{3}{st(s+t)(s-t)}.
\end{eqnarray*}
This indicates that
$$
\cF^{-1}\Pa{\frac1{st(s+t)(s-t)}}(x,y) = \frac13\vol_L[P(\bdomega)]
=\int_{\bbR^4_{\geqslant0}}\delta\Pa{\bdomega-\bsW\bst}[\dif \bst].
$$
In summary,
\begin{eqnarray*}
G(x,y)=G(\bdomega)=\int_{\bbR^4_{\geqslant0}}\delta\Pa{\bdomega-\bsW\bst}[\dif
\bst].
\end{eqnarray*}
Let us calculate the above integral. Note that $\bdomega=\bsW\bst$
means that
$$
\begin{cases}
x=t_1+t_3+t_4,\\
y=t_2+t_3-t_4.
\end{cases}
$$
Eliminate $t_1,t_2$ from the delta functions: $t_1=x-(t_3+t_4)$ and
$t_2=y-(t_3-t_4)$. The nonnegativity conditions
$\bst\in\bbR^4_{\geqslant0}$ become
\begin{eqnarray*}
\begin{cases}
t_3\geqslant0,\\
t_4\geqslant0,\\
t_3+t_4\leqslant x,\\
t_3-t_4\leqslant y.
\end{cases}
\end{eqnarray*}
Hence
\begin{eqnarray*}
G(x,y)= \op{Area}\Set{(t_3,t_4)\in\bbR^2_{\geqslant0}\mid
t_3+t_4\leqslant x,t_3-t_4\leqslant y}.
\end{eqnarray*}
In order to calculate it, we perform a change of variables
$(t_3,t_4)\to (\zeta,\eta)$ via $\zeta=t_3+t_4$ and $\eta=t_3-t_4$.
Its Jacobian is calculated as
$$
\Abs{\det\Pa{\frac{\partial(\zeta,\eta)}{\partial(t_3,t_4)}}}= 2.
$$
Then $\dif \zeta\dif \eta=2\dif t_3\dif t_4$ or $\dif t_3\dif
t_4=\frac12\dif\zeta\dif\eta$. Moreover, via the above change of
variables, the region $\Set{(t_3,t_4)\in\bbR^2_{\geqslant0}\mid
t_3+t_4\leqslant x,t_3-t_4\leqslant y}$ is transformed into the
following form:
\begin{eqnarray*}
&&\Set{(\zeta,\eta)\in\bbR^2\mid\tfrac{\zeta+\eta}2\geqslant0,\tfrac{\zeta-\eta}2\geqslant0,
\zeta\leqslant x, \eta\leqslant y} \\
&&=\Set{(\zeta,\eta)\in\bbR^2\mid 0\leqslant\zeta\leqslant x,
-\zeta\leqslant \eta\leqslant \zeta,\eta\leqslant y}=:\Omega_{x,y}.
\end{eqnarray*}
Based on this observation, if $y\geqslant -\zeta$, we get that
$$
\Omega_{x,y}=\Set{(\zeta,\eta)\in\bbR^2\mid 0\leqslant\zeta\leqslant
x, -\zeta\leqslant \eta\leqslant \min(\zeta,y)}
$$
and thus
\begin{eqnarray*}
G(x,y)&=&\frac12\int_{\Omega_{x,y}} \dif\zeta\dif\eta
=\frac12\int^x_0\dif\zeta\int^{\min(\zeta,y)}_{-\zeta}\dif\eta\\
&=&\frac12\int^x_0[\min(\zeta,y)+\zeta]\dif \zeta.
\end{eqnarray*}
If $y<-\zeta$, then $\Omega_{x,y}=\emptyset$, $G(x,y)=0$. In a word,
we can write
\begin{eqnarray*}
G(x,y)=\frac12\int^x_0 (\min(\zeta,y)+\zeta)_+\dif\zeta.
\end{eqnarray*}
Evaluating this integral in the relevant regions gives the desired
expression, that is, Eq.~\eqref{eq:Gxy}. Next we compute the
integral
\begin{eqnarray*}
\cI(x,y):=\frac12\int^x_0 (\min(\zeta,y)+\zeta)_+\dif \zeta, \quad
(x,y)\in\bbR^2.
\end{eqnarray*}
First, note the integrand $g(\zeta):=(\min(\zeta,y)+\zeta)_+$.
\begin{itemize}
\item If $\zeta\leqslant y$, then $\min(\zeta,y)=\zeta$, so $g(\zeta)=(2\zeta)_+$.
\item If $\zeta\geqslant y$, then $\min(\zeta,y)=y$, so $g(\zeta)=(\zeta+y)_+$.
\end{itemize}
The zeros of the positive part:
\begin{itemize}
\item If $y\geqslant 0$, then
$g(\zeta)>0\iff \zeta>0$, and
\begin{eqnarray*}
g(\zeta)=
\begin{cases}
2\zeta,&0<\zeta\leqslant y,\\
\zeta+y,&\zeta>y.
\end{cases}
\end{eqnarray*}
\item If $y<0$, then $g(\zeta)>0\iff \zeta>-y$, and since then $\zeta>y$
automatically, we have
\begin{eqnarray*}
g(\zeta)=\zeta+y,\quad \zeta>-y.
\end{eqnarray*}
Thus the integral depends on the relative positions of $x$ and $y$.
\end{itemize}
\textbf{Case 1: $y\geqslant 0$.}
\begin{itemize}
\item If $x\leqslant 0$, the integration interval $[0,x]$ (or $[x,0]$)
lies in the non-positive region, where the integrand is $0$, so
$\cI(x,y)=0$.
\item If $0\leqslant x\leqslant y$, then
\begin{eqnarray*}
\int^x_0 g(\zeta)\dif \zeta=\int^x_0 2\zeta \dif \zeta=x^2,
\end{eqnarray*}
hence $\cI(x,y)=\frac{x^2}{2}$.
\item If $x\geqslant y$, then
\begin{eqnarray*}
\int^x_0 g(\zeta)\dif \zeta=\int^y_0 2\zeta \dif \zeta+\int^x_y
(\zeta+y)\dif \zeta =y^2+\Pa{\frac{(x+y)^2}{2}-2y^2}
=\frac{(x+y)^2}{2}-y^2,
\end{eqnarray*}
so
$\cI(x,y)=\frac{1}{2}\Pa{\frac{(x+y)^2}{2}-y^2}=\frac{x^2+2xy-y^2}{4}$.
\end{itemize}
\textbf{Case 2: $y<0$.} Here $-y>0$, and $g(\zeta)=0$ for
$\zeta\leqslant -y$, while for $\zeta>-y$ we have
$g(\zeta)=\zeta+y$.
\begin{itemize}
\item If $x\leqslant -y$, the interval does not include the non-zero region,
so $\cI(x,y)=0$.
\item If $x\geqslant -y$, then
\begin{eqnarray*}
\int^x_0 g(\zeta)\dif \zeta=\int^x_{-y} (\zeta+y)\dif \zeta
=\frac{(x+y)^2}{2},
\end{eqnarray*}
hence $\cI(x,y)=\frac{(x+y)^2}{4}$.
\end{itemize}
At the boundaries (e.g. $x=y, x=-y$) the formulas match
continuously. This covers all possible $(x,y)\in\bbR^2$. In summary,
we find that $G(x,y)\equiv\cI(x,y)$ on $\bbR^2$. \qed

%=============================================================%
\section{The list of $24$ signed knots}
\label{app:lsit24pts}
%=============================================================%

For each $\pi\in S_4$, we have defined
\begin{eqnarray*}
\begin{cases}
u_\pi(\bdlambda)= \lambda_{\pi(1)}+\lambda_{\pi(2)}-\lambda_{\pi(3)}-\lambda_{\pi(4)} = 2(\lambda_{\pi(1)}+\lambda_{\pi(2)})-1,\\
v_\pi(\bdlambda)=
\lambda_{\pi(1)}-\lambda_{\pi(2)}+\lambda_{\pi(3)}-\lambda_{\pi(4)}
= 2(\lambda_{\pi(1)}+\lambda_{\pi(3)})-1.
\end{cases}
\end{eqnarray*}
Now we list all $24$ points $(u_\pi(\bdlambda),v_\pi(\bdlambda))$ in
the Table~\ref{tab:1}.
\begin{table}[h]
\centering \caption{The list of $24$ points}\label{tab:1}
\begin{tabular}{|c|c|c|c|}
\hline $\pi$ & $(u_\pi(\bdlambda),v_\pi(\bdlambda))$ & $\pi$ &
$(u_\pi(\bdlambda),v_\pi(\bdlambda))$\\
\hline (1) & $(2(\lambda_1+\lambda_2)-1,2(\lambda_1+\lambda_3)-1)$  &  (124) & $(2(\lambda_2+\lambda_4)-1,2(\lambda_2+\lambda_3)-1)$  \\
\hline (12) & $(2(\lambda_1+\lambda_2)-1,2(\lambda_2+\lambda_3)-1)$ &  (142) & $(2(\lambda_1+\lambda_4)-1,2(\lambda_3+\lambda_4)-1)$ \\
\hline(13) & $(2(\lambda_2+\lambda_3)-1,2(\lambda_1+\lambda_3)-1)$ & (134) & $(2(\lambda_2+\lambda_3)-1,2(\lambda_3+\lambda_4)-1)$  \\
\hline(14) & $(2(\lambda_2+\lambda_4)-1,2(\lambda_3+\lambda_4)-1)$ & (143) & $(2(\lambda_2+\lambda_4)-1,2(\lambda_1+\lambda_4)-1)$ \\
\hline(23) & $(2(\lambda_1+\lambda_3)-1,2(\lambda_1+\lambda_2)-1)$ & (234) & $(2(\lambda_1+\lambda_3)-1,2(\lambda_1+\lambda_4)-1)$\\
\hline(24) & $(2(\lambda_1+\lambda_4)-1,2(\lambda_1+\lambda_3)-1)$ & (243) & $(2(\lambda_1+\lambda_4)-1,2(\lambda_1+\lambda_2)-1)$\\
\hline (34) &  $(2(\lambda_1+\lambda_2)-1,2(\lambda_1+\lambda_4)-1)$ & (1234) & $(2(\lambda_2+\lambda_3)-1,2(\lambda_2+\lambda_4)-1)$\\
\hline(12)(34) & $(2(\lambda_1+\lambda_2)-1,2(\lambda_2+\lambda_4)-1)$ & (1243) & $(2(\lambda_2+\lambda_4)-1,2(\lambda_1+\lambda_2)-1)$\\
\hline(13)(24) & $(2(\lambda_3+\lambda_4)-1,2(\lambda_1+\lambda_3)-1)$ & (1324) & $(2(\lambda_3+\lambda_4)-1,2(\lambda_2+\lambda_3)-1)$ \\
\hline(14)(23) &  $(2(\lambda_3+\lambda_4)-1,2(\lambda_2+\lambda_4)-1)$ & (1342) & $(2(\lambda_1+\lambda_3)-1,2(\lambda_3+\lambda_4)-1)$ \\
\hline(123) & $(2(\lambda_2+\lambda_3)-1,2(\lambda_1+\lambda_2)-1)$ & (1423) & $(2(\lambda_3+\lambda_4)-1,2(\lambda_1+\lambda_4)-1)$ \\
\hline (132) &
$(2(\lambda_1+\lambda_3)-1,2(\lambda_2+\lambda_3)-1)$ & (1432) & $(2(\lambda_1+\lambda_4)-1,2(\lambda_2+\lambda_4)-1)$ \\
\hline
\end{tabular}
\end{table}
Let $x_{ij}:=2(\lambda_i+\lambda_j)-1$. Because
$\sum^4_{i=1}\lambda_i=1$, complementary pairs give opposite values:
\begin{eqnarray*}
x_{12}=-x_{34},\quad x_{13}=-x_{24},\quad x_{14}=-x_{23}.
\end{eqnarray*}
Define
\begin{eqnarray*}
\begin{cases}
\tilde\alpha&:=x_{12}=2(\lambda_1+\lambda_2)-1,\\
\tilde\beta&:=x_{13} =2(\lambda_1+\lambda_3)-1,\\
\tilde\gamma&:=x_{14}=2(\lambda_1+\lambda_4)-1.
\end{cases}
\end{eqnarray*}
It is easily seen that
$$
1>\tilde\alpha>\tilde\beta>\abs{\tilde\gamma}\geqslant0.
$$
For any $\pi\in S_4$, $u_\pi(\bdlambda)=x_{\pi(1)\pi(2)}$ and
$v_\pi(\bdlambda)=x_{\pi(1)\pi(3)}$. The two pairs
$\set{\pi(1),\pi(2)}$ and $\set{\pi(1),\pi(3)}$ share one index and
therefore cannot be equal or complementary. The six possible
pair-values are $\pm\tilde\alpha,\pm\tilde\beta,\pm\tilde\gamma$.
More precisely, the $24$ indexed points are of the forms
$$
\Set{(\pm\tilde\alpha,\pm\tilde\beta),(\pm\tilde\beta,\pm\tilde\alpha),(\pm\tilde\alpha,\pm\tilde\gamma),(\pm\tilde\gamma,\pm\tilde\alpha),(\pm\tilde\beta,\pm\tilde\gamma),(\pm\tilde\gamma,\pm\tilde\beta)}.
$$
Since $\abs{\tilde\gamma}<\tilde\beta<\tilde\alpha$, every such
point satisfies
$$
\abs{u}\leqslant \tilde\alpha,\abs{v}\leqslant\tilde\alpha,
\abs{u}+\abs{v}\leqslant \tilde\alpha+\tilde\beta.
$$
On the other hand, all eight points
$(\pm\tilde\alpha,\pm\tilde\beta)$ and
$(\pm\tilde\beta,\pm\tilde\alpha)$ occur among the listed points.
These are exactly the vertices of the region determined by the
preceding inequalities. For reference, in cyclic order they
correspond to:
\begin{eqnarray*}
\begin{array}{c|c}
\text{Vertex}&\text{One corresponding permutation}\\ \hline
(\tilde\alpha,\tilde\beta)&(1)\\
(\tilde\beta,\tilde\alpha)&(23)\\
(-\tilde\beta,\tilde\alpha)&(1243)\\
(-\tilde\alpha,\tilde\beta)&(13)(24)\\
(-\tilde\alpha,-\tilde\beta)&(14)(23)\\
(-\tilde\beta,-\tilde\alpha)&(14)\\
(\tilde\beta,-\tilde\alpha)&(1342)\\
(\tilde\alpha,-\tilde\beta)&(12)(34)
\end{array}
\end{eqnarray*}
Therefore, the terms involving $\tilde\gamma$ do not produce any
additional extreme points.

Besides, the essential point is that we need not only the $24$
points but also their signs. The twelve even permutations produce
$+$; and twelve odd permutations product $-$.
\begin{table}[h]
\centering \caption{The list of $24$ points with their
signs}\label{tab:2}
\begin{tabular}{|c|c|c|c|}
\hline $(u,v)$& $\sign(\pi)=\varepsilon(u,v)$ & $(u,v)$ & $\sign(\pi)=\varepsilon(u,v)$ \\
\hline
$(\tilde\alpha,\tilde\beta)$ & $+$ & $(\tilde\alpha,-\tilde\gamma)$ & $-$ \\
$(\tilde\alpha,-\tilde\beta)$ & $+$ & $(-\tilde\gamma,\tilde\beta)$ & $-$ \\
$(-\tilde\alpha,\tilde\beta)$ & $+$ & $(-\tilde\beta,-\tilde\alpha)$ & $-$ \\
$(-\tilde\alpha,-\tilde\beta)$ & $+$ & $(\tilde\beta,\tilde\alpha)$ & $-$ \\
$(-\tilde\gamma,\tilde\alpha)$ & $+$ & $(\tilde\gamma,\tilde\beta)$ & $-$ \\
$(\tilde\beta,-\tilde\gamma)$ & $+$ & $(\tilde\alpha,\tilde\gamma)$ & $-$ \\
$(-\tilde\beta,-\tilde\gamma)$ & $+$ & $(-\tilde\gamma,-\tilde\beta)$ & $-$ \\
$(\tilde\gamma,-\tilde\alpha)$ & $+$ & $(-\tilde\beta,\tilde\alpha)$ & $-$ \\
$(-\tilde\gamma,-\tilde\alpha)$ & $+$ & $(-\tilde\alpha,-\tilde\gamma)$ & $-$ \\
$(-\tilde\beta,\tilde\gamma)$ & $+$ & $(\tilde\beta,-\tilde\alpha)$ & $-$ \\
$(\tilde\beta,\tilde\gamma)$ & $+$ & $(-\tilde\alpha,\tilde\gamma)$ & $-$ \\
$(\tilde\gamma,\tilde\alpha)$ & $+$ & $(\tilde\gamma,-\tilde\beta)$ & $-$ \\
\hline
\end{tabular}
\end{table}
We will write $(u,v)^\pm$ for one point $(u,v)$ from $24$ signed
knots with corresponding $+$ or $-$ that is identified from the
above table. For instance, $(u,v)^+$ means
$(u,v)=(u_\pi(\bdlambda),v_\pi(\bdlambda))$ for $\pi\in S_4$ with
$\varepsilon(u,v)=\varepsilon(u_\pi(\bdlambda),v_\pi(\bdlambda))=\sign(\pi)=+1$.

%=====================================================================%
\section{The computational details of
Eq.~\eqref{eq:Tintegral}}\label{app:Txy}
%=====================================================================%

One antiderivative is
\begin{eqnarray*}
F_{x,y}(r) = \frac{(xy)^2}3 r^3 + \frac{xy(x-y)}2 r^4 +
\frac{x^2-4xy+y^2}5 r^5 + \frac{y-x}3 r^6 + \frac17 r^7.
\end{eqnarray*}
Thus, for $x\geqslant \max(0,-y)$, it holds that
\begin{eqnarray*}
\cT(x,y)=\frac18\Br{F_{x,y}(x) - F_{x,y}(\max(0,-y))}.
\end{eqnarray*}
The specific details of calculation is presented below:
\begin{itemize}
\item \textbf{Case 1: $y\geqslant0$.} At this time, $F_{x,y}(x)=\frac{x^5y^2}{30} + \frac{x^6y}{30} + \frac{x^7}{105} $, thus
$$
\cT(x,y)=\frac{x^5(2x^2+7xy+7y^2)}{1680}.
$$
\item  \textbf{Case 2: $-x\leqslant y<0$.}  Let $z=x+y\geqslant0$. Perform the transformation $u=r+y$, we get that
\begin{eqnarray*}
\cT(x,y)&=&\frac18\int^{x+y}_0(x+y-u)^2u^2(u-y)^2\dif u\\
&=&\frac{(x+y)^5(2x^2-3xy+2y^2)}{1680}.
\end{eqnarray*}
\item  \textbf{Case 3: $x<0$ or $x<-y$.}  At this time, the interval of integration is empty set, thus $\cT(x,y)=0$.
\end{itemize}
In summary, we get the desired conclusion. \qed

%--------------------------------------------------%

\end{document}

%% file: qudit_preamble.tex
\newcommand{\cmp}{Comm. Math. Phys.~}

\newcommand{\jmp}{J. Math. Phys.~}
\newcommand{\jfa}{J. Funct. Anal.~}
\newcommand{\jpa}{J. Phys. A~}

\usepackage{young}
\usepackage[latin1]{inputenc}
\usepackage{appendix}
\usepackage{epstopdf}
\usepackage{xcolor}
\usepackage{subfigure}
\usepackage[T1]{fontenc}
\usepackage[sc]{mathpazo}
\usepackage{amsmath}
\usepackage{amssymb}
\usepackage{graphicx,color}
\usepackage{framed}
\usepackage{multirow}
\usepackage{enumerate}
\usepackage{amsthm}% theorems and proofs
\usepackage{amsfonts,mathrsfs}
\usepackage{geometry} % allows easy specifying of the page layout
\usepackage{eepic}
\usepackage{ifthen}
\usepackage[vcentermath]{youngtab}
\usepackage[unicode=true,pdfusetitle, bookmarks=true,bookmarksnumbered=false,bookmarksopen=false, breaklinks=false,pdfborder={0 0 0},backref=false,colorlinks=false] {hyperref}
\hypersetup{
colorlinks,linkcolor=myurlcolor,citecolor=myurlcolor,urlcolor=myurlcolor}
\definecolor{myurlcolor}{rgb}{0,0,0.7}% Color highlighted for numbers in citations and theorems

\usepackage{color}

\newcommand{\blue}{\textcolor{blue}}

\newcommand{\proj}[1]{| #1\rangle\!\langle #1 |}

\usepackage{hyperref}
\hypersetup{pdfpagemode=UseNone}

\newcommand{\tinyspace}{\mspace{1mu}}

\newcommand{\op}[1]{\operatorname{#1}}

\newcommand{\abs}[1]{\left\lvert\tinyspace #1 \tinyspace\right\rvert}

\newcommand{\norm}[1]{\left\lVert\tinyspace #1 \tinyspace\right\rVert}

\renewcommand{\det}{\operatorname{det}}
\renewcommand{\t}{{\scriptscriptstyle\mathsf{T}}}

\newcommand{\sign}{\op{sign}}

\def\SO{\mathsf{SO}}
\def\SU{\mathsf{SU}}

\def\vol{\mathrm{vol}}

\def\haar{\mathrm{Haar}}

\def \dif {\mathrm{d}}
\def \diag {\mathrm{diag}}
\def \vol {\mathrm{vol}}

\def\I{\mathbb{1}}

\def\zero{\mathbf{0}}

\def\bdlambda{\boldsymbol{\lambda}}

\def\bdbeta{\boldsymbol{\beta}}
\def\bdomega{\boldsymbol{\omega}}
\def\bdsigma{\boldsymbol{\sigma}}

\newenvironment{mylist}[1]{\begin{list}{}{
    \setlength{\leftmargin}{#1}
    \setlength{\rightmargin}{0mm}
    \setlength{\labelsep}{2mm}
    \setlength{\labelwidth}{8mm}
    \setlength{\itemsep}{0mm}}}
    {\end{list}}

\def\ot{\otimes}

\newcommand{\Inner}[2]{\left\langle #1 , #2\right\rangle}
\newcommand{\Innerm}[3]{\left\langle #1 \left| #2 \right| #3 \right\rangle}

\newcommand{\Herm}{\mathrm{Herm}}

\newcommand{\Spec}{\mathrm{Spec}}

\newcommand{\pa}[1]{(#1)}
\newcommand{\Pa}[1]{\left(#1\right)}

\newcommand{\Br}[1]{\left[#1\right]}
\newcommand{\set}[1]{\{#1\}}
\newcommand{\Set}[1]{\left\{#1\right\}}

\newcommand{\ket}[1]{|#1\rangle}

\DeclareMathOperator{\trace}{Tr}
\newcommand{\ptr}[2]{\trace_{#1}\pa{#2}}
\newcommand{\Ptr}[2]{\trace_{#1}\Pa{#2}}

\newcommand{\Tr}[1]{\Ptr{}{#1}}

\newcommand{\Abs}[1]{\left|\tinyspace#1\tinyspace\right|}

\def\cF{\mathcal{F}}\def\cH{\mathcal{H}}\def\cI{\mathcal{I}}
\def\cK{\mathcal{K}}\def\cL{\mathcal{L}}
\def\cT{\mathcal{T}}
\def\cU{\mathcal{U}}\def\cW{\mathcal{W}}

\def\bsA{\boldsymbol{A}}\def\bsB{\boldsymbol{B}}
\def\bsH{\boldsymbol{H}}

\def\bsP{\boldsymbol{P}}\def\bsR{\boldsymbol{R}}
\def\bsU{\boldsymbol{U}}\def\bsV{\boldsymbol{V}}\def\bsW{\boldsymbol{W}}\def\bsX{\boldsymbol{X}}\def\bsY{\boldsymbol{Y}}

\def\bsa{\boldsymbol{a}}\def\bsb{\boldsymbol{b}}
\def\bsh{\boldsymbol{h}}

\def\bsr{\boldsymbol{r}}\def\bss{\boldsymbol{s}}\def\bst{\boldsymbol{t}}
\def\bsw{\boldsymbol{w}}\def\bsx{\boldsymbol{x}}\def\bsy{\boldsymbol{y}}
\def\bsz{\boldsymbol{z}}

\def\rD{\mathrm{D}}

\def\bbC{\mathbb{C}}\def\bbE{\mathbb{E}}

\def\bbR{\mathbb{R}}

\def\bbZ{\mathbb{Z}}

\def\sfU{\mathsf{U}}

\newtheorem{thrm}{Theorem}[section]
\newtheorem{lem}[thrm]{Lemma}
\newtheorem{prop}[thrm]{Proposition}
\newtheorem{cor}[thrm]{Corollary}
\theoremstyle{definition}
\newtheorem{definition}[thrm]{Definition}
\newtheorem{remark}[thrm]{Remark}

\numberwithin{equation}{section}

\newcounter{questionnumber}

%% file: Marginal_spectral_distributions.bbl
\begin{thebibliography}{999}


\bibitem{Boor1993}
C. de Boor, K. H\"{o}llig, S. Riemenschneider, Box Splines,
Springer-Verlag New York, Inc (1993).


\bibitem{Bravyi2004}
S. Bravyi, Requirements for compatibility between local and
multipartite quantum states, Quantum Information \& Computation,
\href{https://dl.acm.org/doi/10.5555/2011572.2011574}{{\bf4}, 12-26
(2004).}

\bibitem{Bytsenko2005}
A.A. Bytsenko, M. Libine, and F.L. Williams, Localization of
equivariant cohomology for compact and non-compact group actions,
\href{https://doi.org/10.1080/1726037X.2005.10698497}{{\bf3},
171-195 (2005).}

\bibitem{Christandl2014}
M. Christandl, B. Doran, S. Kousidis, and M. Walter,  Eigenvalue
distributions of reduced density matrices, \cmp
\href{https://doi.org/10.1007/s00220-014-2144-4}{{\bf332}, 1-52
(2014).}

\bibitem{Collins2023}
B. Collins and C. McSwiggen, Projections of orbital measures and
quantum marginal problems, Trans. Amer. Math. Soc.
\href{https://doi.org/10.1090/tran/8931}{{\bf376}, 5601-5640
(2023).}


\bibitem{Davidson1976}
E.R. Davidson, Reduced density matrices in quantum chemistry,
Academic Press (1976).

\bibitem{Duistermaat2010}
J.J. Duistermaat and  J.A.C. Kolk, Distributions: Theory and
Applications,  Birkh\"{a}user Boston, Boston, MA, (2010).

\bibitem{DH1982}
J.J. Duistermaat and G.J. Heckman, On the variation in the
cohomology of the symplectic form of the reduced phase space,
Invent. Math. \href{https://doi.org/10.1007/BF01399506}{{\bf69},
259-268 (1982).}

\bibitem{Hall2015}
B.C. Hall, Lie Groups, Lie Algebras, and Representations, Springer
International Publishing Switzerland (2015).

\bibitem{Harish1975}
Harish-Chandra, Harmonic analysis on real reductive groups I: the
theory of the constant term, \jfa
\href{https://doi.org/10.1016/0022-1236(75)90034-8}{{\bf19}, 104-204
(1975).}

\bibitem{IZ1980}
C. Itzykson and J.B. Zuber, The planar approximation. II. \jmp
\href{https://doi.org/10.1063/1.524438}{{\bf21},411 (1980).}

\bibitem{Kirwan1984}
F.C. Kirwan, Cohomology of Quotients in Symplectic and Algebraic
Geometry, Princeton University Press (1984).

\bibitem{Klyachko2004}
A.A. Klyachko, Quantum marginal problem and representations of the
symmetric group,
\href{https://arxiv.org/abs/quant-ph/0409113}{arXiv:quant-ph/0409113}


\bibitem{McSwiggen2018}
C. McSwiggen, A new proof of Harish-Chandra's integral formula, \cmp
\href{https://doi.org/10.1007/s00220-018-3259-9}{{\bf365}, 239-253
(2018).}

\bibitem{Mejia2017}
J. Mej\'{i}a, C. Zapata, A. Botero, The difference between two
random mixed quantum states: exact and asymptotic spectral analysis,
\jpa: Math. Theor.
\href{https://doi.org/10.1088/1751-8121/50/2/025301}{{\bf50}, 025301
(2017).}

\bibitem{Olshanski2013}
G. Olshanski, Projections of orbital measures, Gelfand-Tsetlin
polytopes, and splines, Journal of Lie Theory,
\href{https://www.heldermann.de/JLT/JLT23/JLT234/jlt23050.htm}{{\bf23},
1011-1022 (2013).}

\bibitem{Silva2008}
A.C. Silva, Lectures on Symplectic Geometry, Springer-Verlag (2008).

\bibitem{Walter2014}
M. Walter, Multipartite quantum states and their marginals, PhD
Thesis \href{https://arxiv.org/abs/1410.6820}{arXiv:1410.6820}

\bibitem{Wang1994}
R. Wang, Multivariate Spline Functions and Their Applications,
Science Press, Beijing, P. R. China (1994).

\bibitem{Zhang2017}
L. Zhang, Average coherence and its typicality for random mixed
quantum states, \jpa: Math. Theor.
\href{https://doi.org/10.1088/1751-8121/aa6179}{{\bf50}, 155303
(2017).}

\bibitem{Zhang2021}
L. Zhang, Dirac delta function of matrix argument, Int. J. Theor.
Phys. \href{https://doi.org/10.1007/s10773-020-04598-8}{{\bf60},
2445-2472 (2021).}

\bibitem{Zhang2019}
L. Zhang, Y. Jiang, and J. Wu, Duistermaat-Heckman measure and the
mixture of quantum states, \jpa: Math. Theor.
\href{https://doi.org/10.1088/1751-8121/ab5297}{{\bf52}, 495203
(2019).}

\end{thebibliography}
